\documentclass [12pt]{article}

\usepackage[utf8]{inputenc}

\usepackage{times, amsmath, amssymb, amsfonts, amssymb, bm, amsthm, dsfont}
\usepackage{amsbsy}
\usepackage[pdftex]{graphicx}
\usepackage{enumitem}
\usepackage{bm}
\usepackage{istgame}
\usepackage{tikz}
\usepackage{mathrsfs}
\usepackage{natbib}
\usepackage[all]{nowidow}

\usepackage{setspace}
\usepackage{hyperref}
\hypersetup{hidelinks}
\hypersetup{
    colorlinks=true,
    citecolor=blue,
    linkcolor=blue,
    urlcolor=blue,
    pdfborder={0 0 0}
}

\usepackage{enumitem}
\usepackage[protrusion=true,expansion=true,nopatch=footnote]{microtype}

\newcommand{\De}{\Delta}

\newcommand{\Ga}{\Gamma}

\newcommand{\si}{\sigma}

\newcommand{\ov}{\overline}
\newcommand{\un}{\underline}
\newcommand{\ta}{\theta}
\newcommand{\Ta}{\Theta}

\newcommand{\R}{\mathbb{R}}

\newcommand{\envheader}{\smallskip\par\medskip\par\nopagebreak}
\newcommand{\envfooter}{\smallskip\par\medskip\par}

\newcounter{theoremc}
\newenvironment{theorem}{\envheader\setlength{\parskip}{0ex}\refstepcounter{theoremc}\noindent\textbf{Theorem~\thetheoremc{}:}\begin{em}}{\end{em}\envfooter}
\newenvironment{theoremC[1]}{\envheader\setlength{\parskip}{0ex}\refstepcounter{theoremc}\noindent\textbf{Theorem~\thetheoremc{} (#1):}\begin{em}}{\end{em}\envfooter}

\newenvironment{theorem*}{\envheader\setlength{\parskip}{0ex}\refstepcounter{theoremc}\noindent\textbf{Theorem:}\begin{em}}{\end{em}\envfooter}

\newcounter{lemmac}
\newenvironment{lemma}{\envheader\setlength{\parskip}{0ex}\refstepcounter{lemmac}\noindent\textbf{Lemma~\thelemmac{}:}\begin{em}}{\end{em}\envfooter}

\newcounter{corollaryc}
\newenvironment{corollary}{\envheader\setlength{\parskip}{0ex}\refstepcounter{corollaryc}\noindent\textbf{Corollary~\thecorollaryc{}:}\begin{em}}{\end{em}\envfooter}

\newcounter{definitionc}
\newenvironment{definition}[1][]{\envheader\setlength{\parskip}{0ex}\refstepcounter{definitionc}\noindent\textbf{Definition~\thedefinitionc{}\if\relax\detokenize{#1}\relax\else\ (#1)\fi:}\begin{em}}{\end{em}\envfooter}
\newenvironment{definitionC[1]}{\envheader\setlength{\parskip}{0ex}\refstepcounter{definitionc}\noindent\textbf{Definition (#1):}\begin{em}}{\end{em}\envfooter}

\newcommand{\one}{\mathbf{1}}

\usetikzlibrary{arrows.meta, decorations.pathreplacing}

\newcounter{propositionc}
\newenvironment{proposition}{\envheader\setlength{\parskip}{0ex}\refstepcounter{propositionc}\noindent\textbf{Proposition~\thepropositionc{}:}\begin{em}}{\end{em}\envfooter}

\newcounter{remarkc}

\newcounter{examplec}

\newcounter{conventionc}

\begin{document}
\setlength{\emergencystretch}{1.2em}
\hypersetup{pdftitle={Recognizing Dominance through Local Improvement}, pdfauthor={Linchen Zhang}}

\newcommand{\websitedomain}{linchenzhang.github.io}
\newcommand{\websiteurl}{https://\websitedomain}
\DeclareRobustCommand{\websitelink}{\href{\websiteurl}{\mbox{\websitedomain}}}

\title{Recognizing Dominance through Local Improvement\thanks{I am deeply indebted to Songzi Du for his invaluable guidance on this project. I am immensely grateful to Aram Grigoryan, Joel Sobel, Joel Watson, and Emanuel Vespa for their useful suggestions for this paper. I also thank Isabel Trevino, Denis Shishkin, Vince Crawford, Suraj Malladi, and participants at the UCSD TBE seminar and workshop for valuable discussions. Claude was used for text refinement and proof verification in this project. All errors are my own.}\par\vspace{0.5em}
\vspace{0.3em}
{\normalsize \href{\websiteurl}{Click here for the latest version}\par}}

\author{Linchen Zhang\thanks{Department of Economics, UC San Diego. Email: \text{liz032@ucsd.edu}. Website: \websitelink.}}
\date{October 29, 2026\thanks{This is a preliminary version. It may be incorporated into a new version with additional experimental results from the experiment proposed in Appendix~\ref{sec:accounts}, which I plan to run in the near future.}}

\maketitle

\begin{abstract}
\noindent Strategy-proofness is the traditional standard for predictable behavior, yet participants often fail to play dominant strategies. Most existing simplicity concepts attribute these failures to the cost of contingent reasoning, which they lower by comparing the worst case of the dominant strategy with the best case of an alternative. I argue that there is another approach, one that makes more use of the structure of the strategy set, which I formalize as a neighbor relation. Comparing neighboring strategies can reduce the need for contingent reasoning, and, even with a single contingency, it can simplify the search through the strategy set. I introduce a new simplicity criterion: a mechanism is local improvement solvable under a neighbor relation if repeatedly moving to an improving neighbor always reaches a dominant strategy. Strategy-proofness is the criterion under the coarsest relation, and many classical mechanisms satisfy it under finer ones. In extensive-form mechanisms without duplicate dominant strategies, the criterion under one-situation neighbors is strict action-proofness, which is distinct from obvious strategy-proofness. Among local improvement solvable mechanisms that implement the same outcomes, the criterion predicts more dominant-strategy play where the local comparison is salient.

\medskip
\noindent \textit{JEL codes:} C72, C91, D44, D47, D82, D91.
\end{abstract}

\newpage

\section{Introduction}
\label{sec:intro}

A mechanism can be relied on to deliver the outcomes it was designed for only if participants behave as predicted. The traditional standard for predictable behavior is strategy-proofness. In a strategy-proof mechanism, every participant has a dominant strategy, one that is optimal whatever the others do. The prediction then does not depend on her beliefs about the others, and it should be relatively simple for the participant to figure out what her dominant strategy is. Yet participants often fail to play their dominant strategies. \citet{kagel87} find that subjects overbid in the sealed-bid second-price auction, and \citet{reesjones18} and \citet{hassidim} find that applicants misreport in high-stakes deferred acceptance clearinghouses. The rate of dominant-strategy play also differs widely across strategy-proof mechanisms, as \citet{cason06} and \citet{chensonmez06} show. Some strategy-proof mechanisms therefore seem to be simpler than others, and strategy-proofness by itself does not say which.

One reason a dominant strategy can be hard to recognize is that recognizing it may require heavy contingent reasoning.\footnote{\citet{niederlevespa23} review the experimental evidence on failures of contingent reasoning.} Suppose you value an object at \$3.50 and bid for it in a sealed-bid second-price auction. Bids are whole dollars from \$0 to \$7, and a tie for the highest bid means no sale. In this setting, your dominant bid is \$4, the lowest bid above your value. To compare \$4 directly with a distant alternative such as \$7, you must reason contingency by contingency. The two bids give you different payoffs only when the highest rival bid is \$4, \$5, or \$6. In each of these cases, \$7 wins at a price above your value, while \$4 does not win. To conclude that \$4 is better than \$7, you must therefore work through three cases, which participants can find burdensome.

In response to failures of contingent reasoning, the recent literature on simple mechanisms looks for mechanisms that require less contingent reasoning. The most influential concept in this literature is the obvious strategy-proofness of \citet{li17}. A strategy is obviously dominant if, at every situation it reaches, its worst possible continuation outcome is at least as good as the best possible outcome of any deviation at that situation. Here a situation is a point at which the player must choose an action. A mechanism is obviously strategy-proof, or OSP, when every type of every player has an obviously dominant strategy. A participant can then recognize dominance by this worst-best comparison, without separating the contingencies. The sealed-bid second-price auction is not OSP, since the best that \$7 can bring, winning at a price of \$0, beats the worst that \$4 can bring, winning nothing. Later concepts, reviewed in Section~\ref{sec:related}, strengthen or weaken obvious strategy-proofness, or rank strategy-proof mechanisms by the contingent reasoning that remains under worst-best comparisons, as \citet{nagelsaitto} do. I refer to obvious strategy-proofness and the later concepts that compare a worst case with a best case as the worst-best comparison concepts.

Worst-best comparisons may not, however, be the only simplifying route that participants can use to recognize a dominant strategy. In this paper, I study a second way to reduce contingent reasoning, the comparison of neighboring strategies. The idea is that two similar strategies, which I formally define as neighbors, lead to the same outcome, or call for the same reasoning, in many contingencies. When a participant compares two neighbors, the contingencies in which they lead to the same outcome cancel out, and less contingent reasoning is needed. I argue that the comparison of neighbors defines a dimension of simplicity that the worst-best comparison concepts do not measure.

For example, return to the sealed-bid second-price auction and let the neighbors of a bid be the bids one dollar above and below it. The bids \$4 and \$3 are neighbors and give you different payoffs only when the highest rival bid is exactly \$3. Then \$4 wins at a price of \$3, worth \$0.50 to you, and \$3 does not win. So \$4 improves \$3, in the sense that \$4 is at least as good whatever the rivals bid and strictly better in at least one case. Similarly, \$4 improves \$5. The two bids give different payoffs only when the highest rival bid is \$4, and then \$5 wins at a price above your value, which gives you a negative payoff. By the same argument, each bid other than \$4 is improved by its neighbor closer to \$4. Moving from any bid to an improving neighbor, as long as one exists, therefore ends at \$4, the dominant bid.

The argument with neighboring bids uses an aspect of the mechanism that the worst-best comparison concepts do not fully exploit: the structure of the strategy set. The strategy set has an order, here the order of bids, that determines which strategies are neighbors. As the auction example shows, the order of strategies can be a further source of simplicity in recognizing the dominant strategy. To formalize the argument, I add one primitive to the standard model, a neighbor relation on each strategy set. The neighbor relation specifies which pairs of strategies are neighbors, and so which pairs are compared locally. A relation of this kind is implicit in the one-shot deviation principle of game theory. The principle, presented for instance in \citet{osborne94}, checks a strategy only against the strategies that differ from it at one situation. A neighbor relation allows other notions of a single change, such as one grid step of a bid or an exchange of two adjacent entries in a ranking.

A strategy improves another if it is at least as good in every contingency and strictly better in some. The local improvement algorithm moves to an improving neighbor as long as one exists. I then propose the key simplicity criterion of this paper, local improvement solvability. A mechanism is local improvement solvable under a neighbor relation if, for every type of every player, every run of the algorithm ends at a dominant strategy, whatever the starting strategy. Local improvement solvability implies strategy-proofness and becomes stronger as the neighbor relation becomes finer, that is, as each strategy has fewer neighbors. Under the complete neighbor relation, in which every two strategies are neighbors, local improvement solvability is exactly strategy-proofness, which is therefore the coarsest member of the resulting family of conditions.

I then show that many classical strategy-proof mechanisms satisfy the criterion under their natural neighbor relations, the relations that the presentations of their strategy sets define. The sealed-bid second-price auction and the multi-unit Vickrey auction are local improvement solvable under one-step neighbors, which change a bid, or one marginal bid, by one grid step. In matching mechanisms in which participants submit rankings, two rankings are one-exchange neighbors if they differ by an exchange of two adjacent entries. On the full domain of rankings, deterministic matching mechanisms of this kind in which truthful reporting is dominant are local improvement solvable under one-exchange neighbors, provided the environment is rich. Richness, which seems a mild condition, requires every exchange of two adjacent entries to change the participant's outcome in some contingency.\footnote{Richness fails, for example, when the participant knows that some adjacent pair on her list cannot affect her outcome.} Serial dictatorship, top trading cycles, and deferred acceptance are such mechanisms. In these classical mechanisms, the comparison between two neighboring strategies involves only one case. They may therefore contain a hidden, shorter path to the dominant strategy, and Section~\ref{sec:dynamic} discusses how a mechanism can guide participants along it.

Local improvement solvability extends to extensive-form mechanisms and games. At a situation $h$, the extensive-form structure restricts the comparison to the strategies and contingencies that reach $h$. The reduced normal-form mechanism does not show this restriction, which the worst-best comparison concepts use. Local improvement solvability can also use a finer relation. Under one-situation neighbors, two strategies are neighbors if they differ in the action at exactly one situation. A step of the local improvement algorithm then changes one action and holds the rest of the strategy fixed. Such a step is a one-shot deviation, so one-situation neighbors are the relation implicit in the one-shot deviation principle.

Action-proofness, a condition I introduce, concerns exactly these one-shot deviations.\footnote{Action-proofness has precedents in the agent normal form of \citet{selten75} and in the weak dominance between actions of \citet{catoninixue}.} It requires that, at each situation the recommended strategy reaches, no one-shot deviation from the recommended action pay more, whatever the player does at her other situations and whatever the others do. Theorem~\ref{prop:hamming} shows that local improvement solvability under one-situation neighbors is equivalent to strict action-proofness, absent duplicate dominant strategies. The theorem can thus be read as a one-shot deviation principle for dominance. In a strictly action-proof mechanism, every strategy with no improving one-shot deviation is dominant. Under action-proofness the participant need not plan her own later actions, but she must still compare actions contingency by contingency, which obvious strategy-proofness lets her avoid. Neither condition implies the other. The strong obvious strategy-proofness of \citet{pyciatroyan23}, or SOSP, implies both action-proofness and obvious strategy-proofness, although the two together do not imply SOSP.

The worst-best comparison concepts and local improvement solvability can disagree, as three example games in Section~\ref{sec:games} show. In the first example, the pick-a-number game, two players each pick a number from $\{0, 1, \dots, K\}$, and each receives her own number plus $\lambda$ times the other player's number, where $\lambda > K$. Raising one's number by one raises one's payoff by exactly one, whatever the other player picks, so the game is local improvement solvable under one-step neighbors and seems easy. Yet picking $K$ is not obviously dominant, since its worst payoff, $K$, falls short of the best payoff $a + \lambda K$ of any smaller number $a$. The measure of \citet{nagelsaitto} classifies the game as maximally complex, and no strategy-proof extensive-form game that implements the same outcomes is OSP. The worst-best comparison concepts seem to mislabel this easy game as hard, while local comparisons explain its simplicity. Chess, when White has a winning strategy, and Sudoku with a unique solution show the opposite error. Both games seem difficult, yet chess is OSP for White and Sudoku is SOSP, so both are simple by obvious strategy-proofness. On the other hand, neither game is local improvement solvable under its natural neighbor relation, one-situation neighbors in chess and one-cell neighbors in Sudoku. In each game, some strategy with two mistakes has no improving neighbor, although that strategy is not dominant. Local improvement solvability therefore accounts for part of what the worst-best comparison concepts miss.

Sudoku also shows that local improvement solvability concerns more than contingent reasoning. A Sudoku puzzle has a single contingency, since there is no other player and no chance move, so finding the solution requires no contingent reasoning at all. Its difficulty seems to lie instead in the search for the solution among the many fillings of the grid. With a single contingency, local improvement solvability requires every strategy without an improving neighbor to be a best strategy, as in an optimization problem in which every local maximum is global. The neighbor relation thus captures a source of simplicity beyond contingent reasoning. It concerns the search through the strategy set, and the worst-best comparison concepts do not measure it. The example in Section~\ref{sec:lis} that separates local improvement solvability from strategy-proofness makes the same point with three strategies and a single contingency.

Local improvement solvability also yields a behavioral prediction. Comparisons between neighbors help a participant only if she makes them. A mechanism makes the local comparison salient when it asks her for a choice that contains a comparison between two neighboring strategies, or when it resolves a submitted strategy step by step. A request for her whole strategy at once does neither. I therefore predict that, of two strategy-proof mechanisms that implement the same outcomes and are local improvement solvable for the same reason, the one that makes the local comparison salient produces more dominant-strategy play.

The ascending clock auction is OSP and produces the same outcomes as the sealed-bid second-price auction. \citet{li17} finds more dominant-strategy play in the ascending clock auction and attributes the difference to obvious strategy-proofness. Yet the ascending clock auction also asks at each price whether to exit now or stay one step longer, a comparison of two neighboring bids. Obvious strategy-proofness and the salience of the local comparison are thus two separate channels through which the ascending clock auction could help the participant recognize the dominant strategy. Li's comparison between the two auctions changes both channels at once.

A later experiment separates the two channels. \citet{bsk22} add to the design of \citet{li17} a passive clock treatment, in which subjects submit a sealed bid and then watch an ascending clock, driven by that bid, determine the outcome. The passive clock has the same extensive-form mechanism as the sealed-bid second-price auction and is not OSP, so obvious strategy-proofness on its own predicts no difference between the two. Yet it makes the local comparison salient. It removes about half of the difference in deviations from truthful bidding between the sealed-bid second-price auction and the ascending clock auction in which bidders see their rivals exit. Dynamic bidding, the feature that makes the ascending clock auction OSP, accounts for a further twelve percent, which is not statistically significant. Showing bidders when their rivals exit accounts for almost forty percent, which I attribute to learning from the environment. The results suggest that, in the ascending clock auction, the local comparison does more than obvious strategy-proofness to help participants recognize the dominant strategy.\footnote{Section~\ref{sec:limits} discusses another explanation of the passive clock's gain, learning from feedback.}

Table~\ref{tab:evidence} in Appendix~\ref{app:evidence} collects further experimental evidence on the effectiveness of obvious strategy-proofness and of the local comparison. The evidence seems favorable to both channels on balance. Each appears to help participants recognize the dominant strategy, substantially in some environments and marginally in others, although in some settings the gain does not carry over to outcomes.

The local improvement criterion leads to two design recommendations. The first is to choose, among mechanisms that implement the same outcomes, one that is local improvement solvable under its natural neighbor relation. The second is to make the local comparison salient. Local improvement solvability resembles a short solution to a difficult mathematics problem, which a student may miss on her own but find with the right hint. A presentation that makes the local comparison salient is a hint of this kind. If the salience channel is real, the hint is also cheap, since a presentation that only makes the local comparison salient leaves the incentive properties of the mechanism unchanged.

The rest of the paper is organized as follows. Section~\ref{sec:related} reviews the related literature, including the closest antecedent, \citet{catoninixue}. Sections~\ref{sec:model} to~\ref{sec:extensive} define the criterion and apply it to classical and extensive-form mechanisms. Section~\ref{sec:games} examines the three games, Section~\ref{sec:dynamic} states the prediction and reviews the evidence, and Section~\ref{sec:design} gives the design recommendations and concludes. Proofs omitted from the main text are in Appendix~\ref{app:proofs}, Appendix~\ref{app:evidence} collects the experimental evidence, and Appendix~\ref{app:discussion} contains further discussion, including a proposed design.

\section{Related literature}
\label{sec:related}

The literature on the worst-best comparison concepts begins with the obvious strategy-proofness of \citet{li17}. The concepts that followed vary Li's comparison along two dimensions, and Table~\ref{tab:four} in Section~\ref{sec:taxonomy} makes both explicit. The first dimension is how the contingencies are compared. Dominance compares the recommended strategy with a deviation one contingency at a time. Obvious dominance compares the worst case of the recommended strategy against the best case of the deviation. \citet{zhanglevin17} give a decision-theoretic foundation for obvious dominance, and the concepts of \citet{zhanglevin21} and \citet{arribillagamassoneme24} lie between the two. They compare the worst case against the best case within each element of a partition of the contingencies, which \citet{arribillagamassoneme24} derive from a partition of the players. \citet{chewwang22} weaken obvious dominance using the sure-thing principle. The second dimension is how much of her own future play the participant is assumed to plan. \citet{pyciatroyan23} define strong obvious strategy-proofness, under which she plans none of it. They also define a family of concepts indexed by her planning horizon, the set of future decisions she plans for. In this family, obvious strategy-proofness is the case in which she plans for all of them. \citet{nagelsaitto} turn the binary classification into a complete ranking of strategy-proof mechanisms by the contingent reasoning they require.

Two further concepts concern what a participant notices or needs to believe rather than what she must compare. The obvious manipulations of \citet{troyanmorrill20} are the deviations a participant can see to be profitable when she compares worst cases with worst cases and best cases with best cases. In the strategically simple mechanisms of \citet{borgersli19}, she can determine an optimal choice from first-order beliefs about the others' preferences and first-order certainty of their rationality, without higher-order beliefs.\footnote{\citet{li24} reviews three formalizations of simplicity.} None of these concepts uses the structure of the strategy set beyond what an extensive-form mechanism records.

This paper adds the structure of the strategy set as an input and replaces the comparison of the recommended strategy with each alternative by a chain of comparisons between neighbors. The two kinds of argument are complementary. Section~\ref{sec:extensive} shows that strategy-proofness, OSP, and SOSP are conditions on a single step of the local improvement algorithm in extensive-form mechanisms. Local improvement solvability instead requires every run of the algorithm to end at a dominant strategy.

\citet{catoninixue} is the closest antecedent. They compare two actions at one situation rather than two strategies. Given a partition of the contingencies that reach the situation, one action dominates another if every continuation after the second is matched by some continuation after the first. In each element of the partition, the worst payoff of the match must be at least the best payoff of the continuation it matches. Their weak dominance between actions uses the singleton partition. They show that a strategy is dominant if and only if each action it prescribes at a situation it reaches is weakly dominant. The trivial partition gives the corresponding result for obvious dominance.

Their weak dominance between actions is closely related to action dominance, which Section~\ref{sec:extensive} introduces and Theorem~\ref{prop:hamming} relates to local improvement solvability under one-situation neighbors. Both compare two actions at one situation, one contingency at a time. Their condition lets the continuation after the dominating action depend on the continuation it is matched with. Action dominance requires every continuation after the recommended action to do at least as well, in every contingency, as every continuation after the other action. \citet{catoninixue} do not study this stronger condition. Section~\ref{sec:taxonomy} shows that this requirement is exactly what separates action-proofness from strategy-proofness.

The local dominance of \citet{catoninixue} derives the partition from the game. Where the dominating action ends the player's play, its worst payoff must be at least the best payoff of the other action. Elsewhere, ``continuing in the same way'' after the two actions, which repeats after the dominating action the later actions taken after the other one, must give the same outcome. A mechanism is locally strategy-proof if each type has a locally dominant action at every situation she reaches by choosing such actions, and local strategy-proofness implies strategy-proofness. Local dominance is close to the comparison between neighbors. In the English auction with alternating bids, their argument compares leaving now with bidding and leaving at the next turn. Section~\ref{sec:clock} identifies the same comparison, between adjacent exit prices, in the ascending clock auction. Like the natural neighbor relation of this paper, local dominance depends on how the game is presented, since ``continuing in the same way'' is defined by stages and action labels. Appendix~\ref{app:other} relates the conditions of \citet{catoninixue} to those of Table~\ref{tab:four}.

This paper differs from theirs in three respects. First, the comparison is between two strategies that are neighbors under a relation on the strategy set, and such relations are defined for static mechanisms as well. Local strategy-proofness coincides with obvious strategy-proofness in static mechanisms, as they show, so the sealed-bid second-price auction, which is local improvement solvable under one-step neighbors, is not locally strategy-proof. Second, a local improvement is usually judged one contingency at a time, while local dominance compares a worst case with a best case where the dominating action ends the player's play. Third, the object of study is whether repeated local comparisons reach the dominant strategy from every starting strategy. The resulting family of conditions is ordered by the fineness of the neighbor relation, with strategy-proofness as its coarsest member.

The local incentive compatibility literature studies when local incentive constraints imply global ones. \citet{carroll12} studies the question for cardinal and ordinal preferences and with transfers. On ordinal domains the question is also studied in \citet{sato13}, \citet{cho16}, and \citet{kumaretal21}. With transfers it is studied in \citet{archerkleinberg14}, \citet{mishrapramanikroy16}, and \citet{kumarroy24}, and for random assignment in \citet{mennleseuken21}. These papers ask whether an analyst who knows the outcome function can verify strategy-proofness from the local constraints alone. This paper asks whether a participant who compares neighboring strategies reaches the dominant strategy.\footnote{Proposition~\ref{prop:matching}(b) shows that in a deterministic ranking mechanism on the full strict domain that is strategy-proof under truthful reporting, an adjacent exchange changes the outcome only between the two exchanged objects. For deterministic mechanisms, this result is one direction of the characterization of strategy-proofness by swap monotonicity and upper and lower invariance in \citet{mennleseuken21}. It also restates the mistake monotonicity of \citet{cho16} for this environment. Part (b) is the direction from global to local constraints.}

\citet{glazerrubinstein96} show that a static game solvable by iterated elimination of dominated strategies can be presented as an extensive-form game with the same normal-form game. In the extensive-form game, backward induction reaches the same solution through the same comparisons. They argue that the presentation guides the player to the solution. Section~\ref{sec:dynamic} makes the corresponding point for strategy-proof mechanisms, arguing that a dynamic mechanism or a passive presentation guides the participant by making the local comparison salient. Table~\ref{tab:evidence} in Appendix~\ref{app:evidence} collects the experiments that hold the outcomes fixed and vary the mechanism or its presentation.\footnote{Among these experiments, \citet{ght23} also formalize descriptions of a mechanism as directed graphs that compute its outcome. \citet{kagellevin01} report related experiments on multi-unit auctions, and \citet{hakimovkubler21} survey the matching experiments.}

Two further lines of literature concern the errors that remain. \citet{bdk23} define the primitive complexity of a menu as the number of operations needed to find its best element. \citet{oprea20} and \citet{banovetzoprea23} find that the number of states in a rule predicts its cost and use. \citet{enkeshubatt23} and \citet{oprea24} relate errors in choices between lotteries to their complexity. Primitive complexity is the kind of difficulty the Sudoku game of Section~\ref{sec:games} has, and the worst-best comparison concepts classify it as absent. \citet{velezbrown} ask which deviations from the dominant strategy are consistent with empirical equilibrium, a refinement of Nash equilibrium.\footnote{\citet{velezbrown} find that deviations that change the outcome are consistent with empirical equilibrium only when the rule is bossy and information is not interior.} Their criterion asks which deviations survive as limits of noisy behavior in which an action is played more often than another only if it pays more. Local improvement solvability concerns which deviations survive the participant's own comparison of neighboring strategies.

\section{The criterion}
\label{sec:model}

\subsection{Basic setup}
\label{sec:setup}

A mechanism $\Ga$ specifies for each player $i$ a finite strategy set $S_{i}$ and an outcome function $f:S \to \Omega$ that maps strategy profiles $s = (s_{0}, s_{1}, \dots, s_{n}) \in S = S_{0} \times S_{1} \times \dots \times S_{n}$ to outcomes, where $s_{0} \in S_{0}$ is nature's move. Player $i$ has a type $\ta_{i}$ in a set $\Ta_{i}$ and a utility $u_{i}(s_{i}, s_{-i}; \ta_{i})$ over the outcome of each profile. Throughout, the type is carried explicitly as an argument of the utility, and $s_{-i}$ ranges over the contingencies, the profiles of the others' strategies and nature's moves.

The new primitive is a neighbor relation on each strategy set.

\begin{definition}[Neighbor relation]
\label{def:nbhd}
A neighbor relation on $S_{i}$ assigns to each strategy $s_{i}$ a set $N_{i}(s_{i}) \subseteq S_{i} \setminus \{s_{i}\}$ of neighbors, with $s_{i}' \in N_{i}(s_{i})$ if and only if $s_{i} \in N_{i}(s_{i}')$.
\end{definition}

\begin{definition}[Finer and complete neighbor relations]
\label{def:finer}
A neighbor relation $N_{i}$ is finer than $N_{i}'$ if $N_{i}(s_{i}) \subseteq N_{i}'(s_{i})$ for every $s_{i}$. The complete neighbor relation is $N_{i}(s_{i}) = S_{i} \setminus \{s_{i}\}$.
\end{definition}

Three relations on bids illustrate the definitions. Let a strategy be a bid on the grid $S_{i} = \{0, 1, \dots, K\}$. Under one-step neighbors, the neighbors of $b$ are $b - 1$ and $b + 1$. Under two-step neighbors, they are the bids at most two steps from $b$, namely $b - 2$, $b - 1$, $b + 1$, and $b + 2$. Under the complete neighbor relation, they are all other bids. In each case only bids on the grid count. One-step neighbors are finer than two-step neighbors, and both are finer than the complete neighbor relation, as every neighbor relation is.

The choice of neighbor relation matters, since a mechanism can be local improvement solvable under one relation and not under another. Section~\ref{sec:dynamic} argues that a dynamic mechanism can determine which neighbors a participant considers.

\begin{definition}[Dominance and strategy-proofness]
\label{def:dom}
A strategy $s_{i}^{*}$ is dominant for $\ta_{i}$ if
\begin{equation}
u_{i}(s_{i}^{*}, s_{-i}; \ta_{i}) \;\ge\; u_{i}(s_{i}, s_{-i}; \ta_{i}) \qquad \text{for all } s_{i} \in S_{i} \text{ and all } s_{-i} \in S_{-i}.
\label{eq:SPdef}
\end{equation}
The mechanism $\Ga$ is strategy-proof (SP) if every type of every player has a dominant strategy.
\end{definition}

Two strategies are payoff-equivalent for a type if they give her the same utility in every contingency. Two dominant strategies must be payoff-equivalent, since each is at least as good as the other in every contingency.

\begin{definition}[Recommendation]
\label{def:rec}
A recommendation is a profile of maps $\si_{i} : \Ta_{i} \to S_{i}$. The mechanism $\Ga$ is strategy-proof under $\si$ if each $\si_{i}(\ta_{i})$ is dominant for $\ta_{i}$.
\end{definition}

In a direct mechanism, where $S_{i} = \Ta_{i}$, the recommendation is truthful reporting, $\si_{i}(\ta_{i}) = \ta_{i}$. In the ascending clock auction of Section~\ref{sec:clock} it is to exit at the first price at or above the value. In the dynamic serial dictatorship of Section~\ref{sec:dynamicmatching} it is to take the top available object. The results below concern whether comparisons between neighboring strategies lead to $\si_{i}(\ta_{i})$. They do not use that $\si_{i}(\ta_{i})$ is a reported type.

\begin{definition}[Improvement]
\label{def:improves}
A strategy $s_{i}'$ improves $s_{i}$ for $\ta_{i}$ if $u_{i}(s_{i}', s_{-i}; \ta_{i}) \ge u_{i}(s_{i}, s_{-i}; \ta_{i})$ for every $s_{-i} \in S_{-i}$, with strict inequality for some $s_{-i} \in S_{-i}$.
\end{definition}

No strategy improves itself, and no strategy improves a dominant strategy. If $s_{i}''$ improves $s_{i}'$ and $s_{i}'$ improves $s_{i}$, then $s_{i}''$ improves $s_{i}$. 

\subsection{Local improvement solvability}
\label{sec:lis}

\begin{definition}[Local improvement]
A strategy $s_{i}'$ locally improves $s_{i}$ for $\ta_{i}$ if $s_{i}' \in N_{i}(s_{i})$ and $s_{i}'$ improves $s_{i}$. Write $s_{i} \to s_{i}'$.
\end{definition}

In words, a local improvement is a neighbor that improves the current strategy, one at least as good in every contingency and strictly better in some. Compare the local improvement with (\ref{eq:SPdef}). The local comparison is the same pointwise comparison that strategy-proofness makes, restricted to a pair of neighbors, so it does not require beliefs about the others. In the applications of Section~\ref{sec:classical}, the local comparison reduces to one case, the contingencies in which the two neighbors give the participant different outcomes.

\begin{definition}[Local improvement algorithm]
The local improvement algorithm for $\ta_{i}$ starts at some $s_{i} \in S_{i}$ and, while the current strategy has at least one local improvement, moves to one of them, chosen arbitrarily. A run is the sequence of strategies visited, $s_{i}^{0} \to s_{i}^{1} \to \dots \to s_{i}^{k}$. A strategy with no local improvement is locally unimprovable.
\end{definition}

\begin{lemma}
\label{lem:terminate}
Under any neighbor relation, every run of the local improvement algorithm is finite and ends at a locally unimprovable strategy.
\end{lemma}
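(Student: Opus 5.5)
The plan is to show that the improvement relation is a strict partial order on the finite set $S_{i}$, so that no run can revisit a strategy. I would first record that improvement for $\ta_{i}$ is irreflexive and transitive. Both facts are noted right after the definition of improvement. Irreflexivity holds because strict inequality in some contingency fails when $s_{i}' = s_{i}$. Transitivity holds because the weak inequalities chain contingency by contingency, and the strict inequality in some contingency for either step carries over.

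Next I would take a run $s_{i}^{0} \to s_{i}^{1} \to \dots$. Each arrow is in particular an improvement, since a local improvement is an improvement that happens to be a neighbor. By transitivity, $s_{i}^{m}$ improves $s_{i}^{l}$ whenever $l < m$. If $s_{i}^{l} = s_{i}^{m}$ for some $l < m$, then $s_{i}^{l}$ would improve itself, which contradicts irreflexivity. So all strategies in a run are distinct, and because $S_{i}$ is finite, a run has at most $|S_{i}|$ strategies. The run is therefore finite.

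Finally, by the definition of the algorithm, a run stops only when the current strategy has no local improvement. Its last strategy $s_{i}^{k}$ is therefore locally unimprovable. The argument uses nothing about $N_{i}$ beyond the fact that every local step is an improvement, so it holds under any neighbor relation. I expect no real obstacle. The only point needing care is that finiteness must come from the acyclicity given by transitivity and irreflexivity, rather than from any bound on payoffs, since the contingency set $S_{-i}$ plays no role beyond being the index set of the pointwise comparisons.
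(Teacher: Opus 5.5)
Your proposal is correct and follows the paper's proof: each step of a run is an improvement, transitivity and irreflexivity rule out revisiting a strategy, so a run has at most $|S_{i}| - 1$ steps, and it stops only at a strategy with no local improvement. Your explicit check that strictness carries through the chain only adds detail the paper leaves implicit.
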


\begin{proof}
Each strategy in a run improves the one before it. The improvement relation is transitive, so if a run returned to a strategy it had visited, that strategy would improve itself, which is impossible. Hence no run visits a strategy twice, every run has at most $|S_{i}| - 1$ steps, and the last strategy of a run has no local improvement.
\end{proof}

\begin{definition}[Local improvement solvability]
\label{def:lis}
The mechanism $\Ga$ is local improvement solvable for $\ta_{i}$ under $N_{i}$ if all locally unimprovable strategies of $\ta_{i}$ are payoff-equivalent. It is local improvement solvable under $N = (N_{1}, \dots, N_{n})$ if it is local improvement solvable for every type $\ta_{i}$ of every player $i$ under $N_{i}$.
\end{definition}

Under the definition, every run of the algorithm stops at the same strategy up to payoff equivalence, from any starting strategy and with any choice of improvements along the way.\footnote{The definition does not require the locally unimprovable strategy to be unique. A strategy that is payoff-equivalent to a dominant strategy is itself dominant and therefore locally unimprovable, so uniqueness would rule out every mechanism in which some type has two dominant strategies.}

\begin{theorem}
\label{thm:lis}
(a) If $\Ga$ is local improvement solvable for $\ta_{i}$ under a neighbor relation $N_{i}$, its locally unimprovable strategies under $N_{i}$ are exactly the dominant strategies of $\ta_{i}$. Hence local improvement solvability under any neighbor relation implies strategy-proofness.

(b) Strategy-proofness does not imply local improvement solvability under one-step neighbors.
\end{theorem}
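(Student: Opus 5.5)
For part (a), I would start from Lemma~\ref{lem:terminate}. Fix $\ta_{i}$ and suppose $\Ga$ is local improvement solvable for $\ta_{i}$ under $N_{i}$.

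The first step is to show that a dominant strategy exists. Since $S_{i}$ is finite, the improvement relation is a strict partial order: it is irreflexive and transitive, as noted after Definition~\ref{def:improves}. Hence some strategy $s_{i}^{*}$ is maximal, meaning no strategy improves it.

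I claim $s_{i}^{*}$ is dominant. Take any $s_{i}$ and run the local improvement algorithm from it. By Lemma~\ref{lem:terminate} the run ends at a locally unimprovable $t_{i}$. Along the chain each strategy improves the previous one, so either $t_{i}=s_{i}$ or $t_{i}$ improves $s_{i}$. In both cases $u_{i}(t_{i},s_{-i};\ta_{i}) \ge u_{i}(s_{i},s_{-i};\ta_{i})$ for every $s_{-i}$.

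Next, $s_{i}^{*}$ is itself locally unimprovable, since no strategy at all improves it. By local improvement solvability, $s_{i}^{*}$ is therefore payoff-equivalent to $t_{i}$. Hence $s_{i}^{*}$ is weakly better than $s_{i}$ in every contingency, and $s_{i}^{*}$ is dominant.

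Now the two inclusions:
\begin{itemize}
\item Every locally unimprovable strategy is payoff-equivalent to $s_{i}^{*}$, by local improvement solvability. It is therefore dominant, because a strategy payoff-equivalent to a dominant one satisfies (\ref{eq:SPdef}).
\item Every dominant strategy is locally unimprovable, since no strategy improves a dominant strategy.
\end{itemize}
So the locally unimprovable strategies are exactly the dominant ones. Applying this to every type of every player gives strategy-proofness.

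The only delicate point in (a) is existence. One might worry that local improvement solvability could hold without any dominant strategy. The argument avoids this by using the maximal element of the finite partial order, combined with the fact that the algorithm reaches a locally unimprovable strategy from every start.

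For part (b), I would build a single-player example with one contingency, as the introduction promises. Take $S_{i}=\{0,1,2\}$ with one-step neighbors, so $1$ is adjacent to both $0$ and $2$, and set utilities $u(0)=2$, $u(1)=0$, $u(2)=1$.
\begin{itemize}
\item Strategy $0$ is dominant, so the mechanism is strategy-proof.
\item Strategy $2$ has only the neighbor $1$, which does not improve it, so $2$ is locally unimprovable.
\item Strategy $2$ is not payoff-equivalent to $0$.
\end{itemize}
Hence local improvement solvability fails under one-step neighbors. Nature and the other players can be made trivial (singleton sets) so that the example fits the setup; checking this is routine.
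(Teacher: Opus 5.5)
Your proof is correct and follows the paper's argument. In part (a) you run the algorithm from an arbitrary $s_{i}$ and use payoff-equivalence of the locally unimprovable strategies to transfer the pointwise inequality; your part (b) example is the paper's three-strategy, single-contingency example reflected by $k \mapsto 2-k$. The maximal-element step is harmless but unnecessary, and existence is not actually delicate: Lemma~\ref{lem:terminate} already supplies a locally unimprovable strategy, and the same payoff-equivalence argument shows directly that every such strategy is dominant.
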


\begin{proof}
Part (a) follows from transitivity. Fix any strategy $s_{i}$. By Lemma~\ref{lem:terminate} a run from $s_{i}$ ends at a locally unimprovable strategy $p$, and each step of the run is a pointwise inequality, so $u_{i}(p, s_{-i}; \ta_{i}) \ge u_{i}(s_{i}, s_{-i}; \ta_{i})$ for every $s_{-i}$. Any locally unimprovable strategy $p'$ is payoff-equivalent to $p$, so $u_{i}(p', s_{-i}; \ta_{i}) \ge u_{i}(s_{i}, s_{-i}; \ta_{i})$ for every $s_{-i}$. Since $s_{i}$ was arbitrary, every locally unimprovable strategy is dominant. Conversely, no strategy improves a dominant strategy, so a dominant strategy has no local improvement and is locally unimprovable.

Part (b) uses the following counterexample. A single agent with one type chooses a strategy from $\{0, 1, 2\}$. There are no other players and no moves of nature, so there is a single contingency. Strategy $0$ pays her $1$, strategy $1$ pays $0$, and strategy $2$ pays $2$. One strategy thus improves another exactly when it pays more. Under one-step neighbors, as for bids in Section~\ref{sec:setup}, the only neighbor of $0$ is $1$, the neighbors of $1$ are $0$ and $2$, and the only neighbor of $2$ is $1$.

Strategy $2$ pays the most, so it is dominant and the mechanism is strategy-proof. No strategy improves a dominant strategy, so strategy $2$ is locally unimprovable. Strategy $0$ is locally unimprovable as well, since its only neighbor, strategy $1$, pays less. The locally unimprovable strategies $0$ and $2$ are not payoff-equivalent, since they pay $1$ and $2$. Hence the mechanism is not local improvement solvable under one-step neighbors.
\end{proof}

The example in part (b) shows the failure that local reasoning allows. A participant who compares strategy $0$ against its neighbor finds no improvement, although strategy $2$, two steps away, improves it.

The example has a single contingency, so the failure does not come from contingent reasoning. Strategy $2$ is even obviously dominant in the sense of Section~\ref{sec:extdef}, since with a single contingency the worst case and the best case of each strategy coincide. The failure comes instead from the neighbor relation, under which the best strategy is two steps from strategy $0$. With a single contingency, local improvement solvability is the condition that every local maximum of the payoff under the neighbor relation is a global maximum. The neighbor relation thus captures a source of simplicity beyond contingent reasoning, one that concerns the search through the strategy set. With three strategies the search is short, since the participant can compare every pair. Section~\ref{sec:sudoku} discusses Sudoku, in which the strategies are the many fillings of a grid and the same failure seems to make the puzzle difficult.

Local improvement solvability depends on the neighbor relation, and the dependence is monotone in the order of Definition~\ref{def:finer}.

\begin{proposition}
\label{prop:coarse}
(a) If $\Ga$ is local improvement solvable for $\ta_{i}$ under $N_{i}$ and $N_{i}$ is finer than $N_{i}'$, it is local improvement solvable for $\ta_{i}$ under $N_{i}'$.

(b) $\Ga$ is local improvement solvable for $\ta_{i}$ under the complete neighbor relation if and only if $\ta_{i}$ has a dominant strategy.
\end{proposition}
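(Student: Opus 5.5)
For part (a), I will use Theorem~\ref{thm:lis}(a) and argue directly about the sets of locally unimprovable strategies. Suppose $\Ga$ is local improvement solvable for $\ta_{i}$ under $N_{i}$, and $N_{i}(s_{i}) \subseteq N_{i}'(s_{i})$ for every $s_{i}$.

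The first step is to compare the two sets of locally unimprovable strategies. A strategy that is locally unimprovable under $N_{i}'$ has no improving strategy in $N_{i}'(s_{i})$. It therefore has none in the smaller set $N_{i}(s_{i})$, so it is also locally unimprovable under $N_{i}$.

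By Theorem~\ref{thm:lis}(a), the locally unimprovable strategies under $N_{i}$ are exactly the dominant strategies of $\ta_{i}$. Every locally unimprovable strategy under $N_{i}'$ is therefore dominant. Dominant strategies are pairwise payoff-equivalent, as noted after Definition~\ref{def:dom}. Hence all locally unimprovable strategies under $N_{i}'$ are payoff-equivalent, which is the definition of local improvement solvability under $N_{i}'$.

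This step needs no existence claim. Even so, the set of locally unimprovable strategies under $N_{i}'$ is nonempty by Lemma~\ref{lem:terminate}.

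For part (b), the ``only if'' direction is immediate from Theorem~\ref{thm:lis}(a). Local improvement solvability under any relation makes the locally unimprovable strategies dominant. At least one locally unimprovable strategy exists by Lemma~\ref{lem:terminate}, since $S_{i}$ is finite and nonempty. So $\ta_{i}$ has a dominant strategy.

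For the ``if'' direction, suppose $s_{i}^{*}$ is dominant, and take any strategy $p$ that is locally unimprovable under the complete relation. I will show that $p$ is payoff-equivalent to $s_{i}^{*}$, in two cases.

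In the first case, $p = s_{i}^{*}$, and there is nothing to prove.

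In the second case, $p \neq s_{i}^{*}$, so $s_{i}^{*} \in N_{i}(p)$, because every other strategy is a neighbor under the complete relation. Dominance gives $u_{i}(s_{i}^{*}, s_{-i}; \ta_{i}) \ge u_{i}(p, s_{-i}; \ta_{i})$ for all $s_{-i}$. If the inequality were strict for some $s_{-i}$, then $s_{i}^{*}$ would locally improve $p$, contradicting that $p$ is locally unimprovable. So equality holds everywhere, and $p$ is payoff-equivalent to $s_{i}^{*}$.

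All locally unimprovable strategies are thus payoff-equivalent to $s_{i}^{*}$, and hence to each other.

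There is no real obstacle in either part. The only point needing care is the case split in (b) between $p = s_{i}^{*}$ and $p \ne s_{i}^{*}$, which is required because $N_{i}(s_{i})$ excludes $s_{i}$ itself.
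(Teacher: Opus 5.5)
Your proof is correct and follows the paper's argument. For (a) you use the inclusion of the $N_{i}'$-unimprovable strategies among the $N_{i}$-unimprovable ones, and for (b) you use Theorem~\ref{thm:lis}(a) together with the observation that a dominant strategy strictly better somewhere would improve $p$. The only difference is in (a), where you pass through Theorem~\ref{thm:lis}(a) and the payoff-equivalence of dominant strategies; the paper applies the definition of local improvement solvability under $N_{i}$ directly, which is slightly shorter but equivalent.
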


The proof is in Appendix~\ref{app:proofs}. Since every neighbor relation is finer than the complete one, the proposition orders the family of conditions by the neighbor relation, with strategy-proofness as its coarsest member. Strategy-proofness uses no structure on the strategy set, and under the complete neighbor relation the participant compares the candidate strategy against every other strategy. A finer neighbor relation uses more of the structure and gives a stronger condition. The strengthening can be strict, as the mechanism in Theorem~\ref{thm:lis}(b) shows.

Which neighbor relation should the analysis use? I treat the neighbor relation as part of the description of a mechanism, much as a type space is part of the description of an environment. In each application I take the relation that the presentation of the strategy set defines, which I call its natural neighbor relation. Under the natural neighbor relation, the neighbors of a strategy are the strategies one change away in its presentation, and for a bid they are the one-step neighbors.\footnote{The relations for the other mechanisms are defined where the mechanisms are introduced, in Sections~\ref{sec:classical} to~\ref{sec:dynamic}.} Choosing the natural neighbor relation makes local improvement solvability as demanding as the presentation allows and the comparison between two neighbors as small as it can be.

Appendix~\ref{app:remarks} explains the separate roles of the two requirements in Definitions~\ref{def:improves} and~\ref{def:lis}, that a local improvement be strictly better in some contingency and that all locally unimprovable strategies be payoff-equivalent.

\section{Classical mechanisms are local improvement solvable}
\label{sec:classical}

Are the classical strategy-proof mechanisms local improvement solvable? This section shows that five of them are, under their natural neighbor relations: the sealed-bid second-price auction, the multi-unit Vickrey auction, serial dictatorship, top trading cycles, and deferred acceptance. For the three matching mechanisms, the result holds in a rich environment, defined in Section~\ref{sec:matching}. In all five, the comparison between two neighboring strategies reduces to one case whose sign the participant determines from her own preference. The mechanisms are direct, apart from the grid of bids in the two auctions, and I call the recommended strategies truthful. All five are static mechanisms, in which each participant submits her strategy in one move and the outcome is computed from the strategies. Their dynamic counterparts are the subject of Section~\ref{sec:dynamic}.

\subsection{The sealed-bid second-price auction}
\label{sec:spa}

The sealed-bid second-price auction is defined as follows. There are $n \ge 2$ bidders and one object. Each bidder submits a bid on the grid $\{0, \dots, K\}$ with $K \ge 2$, simultaneously and without observing the others' bids. The object goes to the bidder with the highest bid, who pays the highest bid among the other bidders. If two or more bidders tie for the highest bid, the object is not sold and no one pays. A bidder who does not win pays nothing.

Fix one bidder, the agent. Her type is her value $\ta$ for the object, with $0 \le \ta < K$, which may or may not be a grid point. Her utility is her value minus her payment if she wins and zero otherwise. The contingency is the profile of the $n - 1$ rival bids. The agent's payoff depends on the rivals only through the highest rival bid $p$, and
\[
u(b, p; \ta) = (\ta - p)\,\one[b > p],
\]
where $\one[\cdot]$ is the indicator function. Let $T(\ta)$ be the set of grid points $b$ with $\ta \le b \le \ta + 1$, the truthful bids. If $\ta$ is not a grid point, $T(\ta)$ contains one bid, the least grid point above $\ta$. If $\ta$ is a grid point, $T(\ta) = \{\ta, \ta + 1\}$, and these two bids give the same utility in every contingency, since they give her different outcomes only when $p = \ta$. In that contingency the bid $\ta$ ties and loses, and the bid $\ta + 1$ wins at price $\ta$, both with payoff $0$.\footnote{This setup includes the design of \citet{li17} and \citet{bsk22}, in which values and bids are on the same grid and a tie for the highest bid means that no bidder wins.} The neighbor relation is one-step neighbors, so the neighbors of $b$ are $b - 1$ and $b + 1$ when these are on the grid.

\begin{proposition}
\label{prop:spa}
In the sealed-bid second-price auction, $u(b + 1, p; \ta) - u(b, p; \ta) = (\ta - b)\,\one[p = b]$ for adjacent bids. Under one-step neighbors, the locally unimprovable strategies are the bids in $T(\ta)$, and the auction is local improvement solvable.
\end{proposition}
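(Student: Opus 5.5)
I would prove the three claims in order. The difference formula comes first, by direct computation. The indicators $\one[b+1>p]$ and $\one[b>p]$ differ only when $p = b$, since $b+1>p$ and $b>p$ agree unless $b \le p < b+1$, and $p$ is a grid point. When $p=b$, the bid $b+1$ wins at price $b$, giving $\ta - b$, while the bid $b$ ties or loses, giving $0$. So $u(b+1,p;\ta)-u(b,p;\ta) = (\ta-b)\one[p=b]$. One point needs care: the contingency $p=b$ must actually occur, i.e.\ some rival profile has highest rival bid equal to $b$. This holds for every grid point $b \in \{0,\dots,K\}$, since with $n \ge 2$ rivals can bid anything on the grid.

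Next I would read off the local improvements from the formula. The comparison between $b$ and $b+1$ depends only on the sign of $\ta - b$. If $b < \ta$, then $b+1$ improves $b$, with strictness at $p=b$. If $b > \ta$, then $b$ improves $b+1$. If $b = \ta$, the two bids are payoff-equivalent, so neither improves the other. A bid $b$ therefore has a local improvement in exactly two cases. Upward: $b+1 \le K$ and $b < \ta$. Downward: $b-1 \ge 0$ and $b-1 > \ta$, i.e.\ $b > \ta+1$.

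It follows that $b$ is locally unimprovable iff $b \ge \ta$ (or $b = K$) and $b \le \ta + 1$ (or $b = 0$). The boundary cases need checking, and this is the step where I expect the only real care.
\begin{itemize}[label={}]
\item If $b = K$, then $\ta < K$ already gives $b > \ta$, so the clause $b \ge \ta$ holds anyway and the ``or $b=K$'' alternative adds nothing.
\item If $b = 0$, then $\ta \ge 0$ gives $0 \le \ta+1$, so the clause $b \le \ta+1$ holds anyway and the ``or $b=0$'' alternative adds nothing.
\end{itemize}
Hence the locally unimprovable bids are exactly the grid points in $[\ta,\ta+1]$, namely $T(\ta)$. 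This set is nonempty because $\ta < K$ ensures some grid point lies in $[\ta, \ta+1]$.

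Finally, local improvement solvability requires all locally unimprovable strategies to be payoff-equivalent. If $T(\ta)$ is a singleton this is trivial. Otherwise $T(\ta)=\{\ta,\ta+1\}$, and these two bids are payoff-equivalent by the formula with $b=\ta$, as already noted in the text. Definition~\ref{def:lis} then gives local improvement solvability, and Theorem~\ref{thm:lis}(a) yields as a by-product that the bids in $T(\ta)$ are dominant.
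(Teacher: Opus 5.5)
Your proof is correct and follows the paper's own argument. Like the paper, you compute the adjacent-bid difference $(\ta-b)\,\one[p=b]$, read off the direction of each local improvement from the sign of $\ta-b$, and observe that the two bids in $T(\ta)$ are payoff-equivalent when $\ta$ is a grid point. You also make explicit the grid endpoints and the existence of the contingency $p=b$, which the paper leaves implicit.
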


The proposition follows directly from the payoff. Two neighboring bids give the agent different outcomes in exactly one case, the contingencies in which the highest rival bid equals the lower of the two bids. In that case the higher bid wins at that price and the lower bid loses. The difference is $\ta - b$, which the agent computes from her own value. It is positive for $b < \ta$, negative for $b > \ta$, and zero for $b = \ta$, the last case arising only when $\ta$ is a grid point.

Figure~\ref{fig:flow} shows the local improvements. An arrow from $b$ to $b'$ means that $b'$ locally improves $b$, $\De_{b} = u(b+1, p; \ta) - u(b, p; \ta)$ is the payoff difference between adjacent bids, and $a^{*}$ is the least grid point at or above $\ta$. The figure is drawn for a value that is not a grid point. Every bid below $a^{*}$ is improved by the bid one step above it, and every bid above $a^{*}$ by the bid one step below it. Thus $a^{*}$ is the only locally unimprovable strategy and, by Theorem~\ref{thm:lis}, the dominant bid. When $\ta$ is a grid point the arrow from $a^{*} + 1$ to $a^{*}$ is absent, and $a^{*}$ and $a^{*} + 1$ are two payoff-equivalent locally unimprovable strategies. In both cases the locally unimprovable strategies are the bids in $T(\ta)$, and every run of the algorithm ends at one of them.

The argument does not depend on $K$ or on the number $n$ of bidders, and the agent never compares a truthful bid against a bid other than its neighbors. By contrast, the sealed-bid second-price auction is not OSP, and a direct comparison of a truthful bid with a bid farther away requires her to check every highest rival bid between the two.

\begin{figure}[t]
\centering
\begin{tikzpicture}[x=1.42cm, y=1cm, >=Stealth, every node/.style={font=\small}]
\foreach \i/\l in {0/{$0$},1/{$1$},2/{$\cdots$},3/{$a^{*}\!-\!1$},4/{$a^{*}$},5/{$a^{*}\!+\!1$},6/{$\cdots$},7/{$K$}}
  \node (n\i) at (\i,0) {\l};
\foreach \a/\b in {0/1, 1/2, 2/3}
  \draw[->, thick] (n\a) to[bend left=42] (n\b);
\draw[->, very thick] (n3) to[bend left=42] (n4);
\foreach \a/\b in {7/6, 6/5}
  \draw[->, thick] (n\a) to[bend left=42] (n\b);
\draw[->, very thick] (n5) to[bend left=42] (n4);
\node at (1.5,0.98) {$\De_{b} = (\ta-b)\one[p=b]$, $\ta-b > 0$};
\node at (6.0,-0.98) {$\De_{b} = (\ta-b)\one[p=b]$, $\ta-b < 0$};
\end{tikzpicture}
\caption{Local improvements in the sealed-bid second-price auction, Proposition~\ref{prop:spa}.}
\label{fig:flow}
\end{figure}

\subsection{Multi-unit demand and the Vickrey auction}
\label{sec:vickrey}

The multi-unit Vickrey auction is defined as follows. There are $Q$ identical units for sale and $n$ bidders. Each bidder submits a weakly decreasing vector of marginal bids on the grid $\{0, \dots, K\}$, one bid per unit she may receive, simultaneously and without observing the others' bids. I assume that any $n - 1$ of the bidders together submit at least $Q$ marginal bids. A bidder receives her $\ell$-th unit exactly when her $\ell$-th marginal bid exceeds the $(Q - \ell + 1)$-th highest marginal bid among the other bidders. Without ties the units go to the $Q$ highest marginal bids, and with ties some units can remain unsold, as the object does in the single-unit auction. A bidder who receives $x$ units pays, for her $\ell$-th unit, $\ell = 1, \dots, x$, the opportunity cost of that unit, the rival marginal bid that it displaces. The displaced bid is the $(Q - \ell + 1)$-th highest marginal bid among the other bidders. This payment rule is the Vickrey, or VCG, pricing rule for identical units, and it reduces to the second-price rule when $Q = 1$.

Fix the agent. She demands up to $L \le Q$ units. Her type is the vector $\ta = (\ta^{1}, \dots, \ta^{L})$ of marginal values, with $K > \ta^{1} > \dots > \ta^{L} \ge 0$, on or off the grid. Her utility is the sum of the marginal values of the units she receives minus her payments. Her strategy is a weakly decreasing vector $a = (a^{1}, \dots, a^{L})$ of marginal bids on the grid. For each unit $\ell$ let $T(\ta^{\ell})$ be the set of grid points $b$ with $\ta^{\ell} \le b \le \ta^{\ell} + 1$, as in the single-unit auction, and call a strategy truthful if $a^{\ell} \in T(\ta^{\ell})$ for every $\ell$. Since the marginal values are strictly decreasing, every truthful strategy is weakly decreasing. The contingency is the profile of the rivals' strategies. Under the pricing rule just stated, the agent's payoff depends on the rivals only through the vector of residual prices $p = (p_{1}, \dots, p_{L})$, where $p_{\ell}$ is the $(Q - \ell + 1)$-th highest rival bid, so that $p_{1} \le \dots \le p_{L}$. The residual prices are functions of the rivals' bids alone, and by the assumption above each can take every grid value. The agent wins unit $\ell$ if and only if $a^{\ell} > p_{\ell}$, and pays $p_{\ell}$ for it. Since $a$ is weakly decreasing and $p$ is weakly increasing, the units she wins are units $1$ through $x$ for some $x$. Her payoff is
\begin{equation}
u(a, p; \ta) \;=\; \sum_{\ell = 1}^{L} \big(\ta^{\ell} - p_{\ell}\big)\, \one[a^{\ell} > p_{\ell}],
\label{eq:unitsum}
\end{equation}
a sum of $L$ single-unit second-price payoffs. The neighbors of a strategy are its one-step neighbors on vectors, the weakly decreasing vectors that differ from it in one unit's bid by one grid step.

\begin{proposition}
\label{prop:vickrey}
In the multi-unit Vickrey auction, raising the bid for unit $\ell$ from $a^{\ell}$ to $a^{\ell} + 1$ changes the payoff by $(\ta^{\ell} - a^{\ell})\one[p_{\ell} = a^{\ell}]$. Under one-step neighbors on vectors, the locally unimprovable strategies are the truthful strategies, and the auction is local improvement solvable.
\end{proposition}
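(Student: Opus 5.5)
The plan has three parts. First I establish the payoff-difference formula from (\ref{eq:unitsum}). Then I show that every truthful strategy is locally unimprovable and every non-truthful strategy is not. Finally I note that all truthful strategies are payoff-equivalent, so the auction is local improvement solvable by Definition~\ref{def:lis}.

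The formula is immediate from (\ref{eq:unitsum}). Raising $a^{\ell}$ by one changes only the $\ell$-th summand, and $\one[a^{\ell}+1>p_{\ell}]-\one[a^{\ell}>p_{\ell}]=\one[p_{\ell}=a^{\ell}]$, exactly as in Proposition~\ref{prop:spa}. Because each residual price $p_{\ell}$ can take every grid value, the contingency $p_{\ell}=a^{\ell}$ exists. In that contingency the difference equals $\ta^{\ell}-a^{\ell}$, and it is zero in every other contingency. So a one-step move on unit $\ell$ is a strict improvement exactly when it moves $a^{\ell}$ toward $T(\ta^{\ell})$, in the following sense. Raising is an improvement iff $a^{\ell}<\ta^{\ell}$. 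Lowering from $a^{\ell}$ to $a^{\ell}-1$ is an improvement iff $\ta^{\ell}<a^{\ell}-1$, that is, iff $a^{\ell}>\ta^{\ell}+1$. When $\ta^{\ell}$ is a grid point, the moves between $\ta^{\ell}$ and $\ta^{\ell}+1$ are payoff-neutral.

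\emph{Truthful strategies are locally unimprovable.} Take a truthful $a$. For each $\ell$, $a^{\ell}\ge\ta^{\ell}$, so raising that coordinate is not an improvement. Also $a^{\ell}-1\le\ta^{\ell}$, so lowering it is not an improvement either. Every neighbor of $a$ changes a single coordinate by one step, so no neighbor improves $a$.

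\emph{Non-truthful strategies are locally improvable.} Take a non-truthful weakly decreasing $a$. I must exhibit an improving move that keeps the vector weakly decreasing. This feasibility requirement is the main obstacle, because the monotonicity constraint can block the obvious move on a given unit.

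Suppose first that some unit is bid too low, with $a^{\ell}<\ta^{\ell}$. Choose the smallest such $\ell$. If $\ell=1$, raising $a^{1}$ is feasible, since $a^{1}<\ta^{1}<K$. If $\ell>1$, minimality gives $a^{\ell-1}\ge\ta^{\ell-1}>\ta^{\ell}>a^{\ell}$, so $a^{\ell}+1\le a^{\ell-1}$ and raising $a^{\ell}$ stays weakly decreasing.

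Otherwise some unit is bid too high, with $a^{\ell}>\ta^{\ell}+1$. Choose the largest such $\ell$. If $\ell=L$, lowering is feasible, since $a^{L}\ge1$. If $\ell<L$, maximality gives $a^{\ell+1}\le\ta^{\ell+1}+1<\ta^{\ell}+1<a^{\ell}$. Both $a^{\ell+1}$ and $a^{\ell}$ are integers, so $a^{\ell}-1\ge a^{\ell+1}$ and lowering $a^{\ell}$ stays weakly decreasing.

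In both cases the feasible move is a local improvement, so $a$ is not locally unimprovable. Hence the locally unimprovable strategies are exactly the truthful strategies.

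Finally, two truthful strategies differ only in coordinates where $\ta^{\ell}$ is a grid point and the bid is $\ta^{\ell}$ or $\ta^{\ell}+1$. Such a swap is payoff-neutral by the formula, so all truthful strategies are payoff-equivalent. By Definition~\ref{def:lis} the auction is local improvement solvable, and by Theorem~\ref{thm:lis}(a) the truthful strategies are the dominant ones.
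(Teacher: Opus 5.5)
Your proof is correct and follows the paper's argument. It uses the same payoff-difference formula, the same two feasibility cases (lower the largest over-bid unit, raise the smallest under-bid unit, with the extremal choice keeping the vector weakly decreasing), and the same payoff-equivalence check for truthful strategies; the only difference is that you state the inequalities directly in terms of $\ta^{\ell}$ and the integrality of bids, where the paper uses the endpoints $\un{t}^{\ell}$ and $\ov{t}^{\ell}$ of $T(\ta^{\ell})$.
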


The proof is in Appendix~\ref{app:proofs}. The multi-unit problem consists of $L$ copies of the single-unit problem, connected only by the constraint that the strategy is weakly decreasing. The proof shows that from every strategy that is not truthful, the constraint leaves at least one local improvement toward a truthful strategy. As in the single-unit auction, the comparison between two neighboring strategies involves one case, the contingencies in which unit $\ell$'s residual price equals the lower bid. The sign of the payoff difference in that case is determined by unit $\ell$'s marginal value.\footnote{Identity (\ref{eq:unitsum}) holds under Vickrey pricing. It does not hold under uniform pricing, where the price of every unit depends on the agent's own strategy, so the argument does not apply to uniform-price auctions. The Ausubel auction of \citet{ausubel04} is a dynamic implementation of the Vickrey rule, and Section~\ref{sec:dynamic} discusses it.}

\subsection{Matching: local comparisons on rankings}
\label{sec:matching}

A matching mechanism assigns each of $n$ agents one of $m$ objects. Object $o$ has a capacity $q_{o} \ge 1$, the number of agents it can be assigned to, and $\sum_{o} q_{o} \ge n$, so that every agent is assigned an object. Each agent submits a strict ranking of all $m$ objects. Each object $o$ has a strict priority ranking $\succ_{o}$ over the agents. In the school choice interpretation the objects are schools with seats, and in the housing interpretation each object has capacity one and its owner has top priority at it. The three mechanisms studied are the following.

\emph{Serial dictatorship.} The agents are ordered by a single priority order, the same for every object, which may be drawn at random. The first agent in the order receives the object she ranks highest, the second receives the object she ranks highest among those with remaining capacity, and so on until every agent is assigned.

\emph{Top trading cycles.} In each round, every object with remaining capacity points to the remaining agent with the highest priority at it. Every remaining agent points to the object she ranks highest among those with remaining capacity. Since every remaining agent and every remaining object points at something, there is at least one cycle, and no two cycles intersect. Every agent in a cycle is assigned the object she points to and is removed, the capacity of that object falls by one, and the round ends. The mechanism stops when every agent is assigned. With capacities of one and owners at the top of the priorities this is the top trading cycles mechanism for housing markets of \citet{shapleyscarf74}.

\emph{Deferred acceptance.} In the first round every agent applies to the object she ranks highest. Each object tentatively holds the applicants with the highest priority at it, up to its capacity, and rejects the rest. In each later round every agent rejected in the previous round applies to the object she ranks highest among those that have not rejected her. Each object considers its new applicants together with the applicants it holds, again holds those with the highest priority up to its capacity, and rejects the rest. The mechanism stops in the first round with no rejection, and every agent is assigned the object that holds her. This is the agent-proposing, or student-proposing, deferred acceptance mechanism of \citet{galeshapley62}. Since every agent ranks every object and capacities suffice, every agent is assigned.

Fix an agent. Her type $\ta$ is her true strict ranking of the objects, and her strategy $s_{i}$ is a strict ranking. The contingency $s_{-i}$ is the profile of the others' strategies together with the priority data of the mechanism. The priority data are the priority order in serial dictatorship and the rankings $\succ_{o}$ in top trading cycles and deferred acceptance. Write $f(s_{i}, s_{-i})$ for the object she receives. The full strict domain is the type space in which every strict ranking is a possible type. The neighbors of a list are its one-exchange neighbors, the $m - 1$ lists obtained by exchanging two entries in adjacent positions.

\begin{definition}[Adjacent-swap property]
\label{def:swap}
The mechanism has the adjacent-swap property if, whenever $s_{i}$ and $s_{i}'$ differ by an adjacent exchange of $o$ and $o'$ with $s_{i}$ listing $o$ first, for every $s_{-i}$ either $f(s_{i}, s_{-i}) = f(s_{i}', s_{-i})$ or $f(s_{i}, s_{-i}) = o$ and $f(s_{i}', s_{-i}) = o'$. In the second case the pair $\{o, o'\}$ is decisive at $s_{-i}$ in both $s_{i}$ and $s_{i}'$.
\end{definition}

\begin{definition}[Rich environment]
\label{def:rich}
The environment is rich if every adjacent pair of every list is decisive at some contingency.
\end{definition}

Under the adjacent-swap property, two strategies that differ by an adjacent exchange of $o$ and $o'$ lead to different outcomes only in contingencies where the pair is decisive. There the strategy listing $o$ first receives $o$ and the other receives $o'$. The comparison between the two strategies therefore involves one case. The sign of the payoff difference in that case is determined by whether $\ta$ ranks $o$ above $o'$. The exchange toward $\ta$ is a local improvement if the pair is decisive in at least one contingency, which is what richness provides.

\begin{proposition}
\label{prop:matching}
Let strategies be strict rankings with one-exchange neighbors.

(a) If $f$ has the adjacent-swap property and the environment is rich, then $\ta$ is the only locally unimprovable strategy, and $f$ is local improvement solvable under one-exchange neighbors.

(b) Every deterministic mechanism on the full strict domain that is strategy-proof under truthful reporting has the adjacent-swap property.
\end{proposition}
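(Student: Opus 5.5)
For part (a), I would first show that every list $s_{i} \neq \ta$ has a local improvement, and then that $\ta$ has none. If $s_{i} \neq \ta$, then some adjacent pair in $s_{i}$ is listed in the order opposite to $\ta$. Indeed, a strict ranking that agrees with $\ta$ on every adjacent pair agrees with $\ta$ everywhere, by transitivity along the list. So choose an adjacent pair $o, o'$ in which $s_{i}$ lists $o$ first while $\ta$ ranks $o'$ above $o$, and let $s_{i}'$ be the list with the two exchanged.

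By the adjacent-swap property, in every contingency $s_{i}$ and $s_{i}'$ give the same object, or $s_{i}$ gives $o$ and $s_{i}'$ gives $o'$. In the second case $s_{i}'$ is strictly better for $\ta$. Hence $s_{i}'$ is weakly better everywhere. Richness supplies a contingency at which the pair is decisive, so $s_{i}'$ is strictly better there, and $s_{i}'$ locally improves $s_{i}$.

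For $\ta$ itself, every adjacent exchange moves an object $\ta$ prefers below one she ranks lower. The same two-case argument shows that the neighbor is weakly worse in every contingency, so it does not improve $\ta$. Thus $\ta$ is the unique locally unimprovable strategy. Local improvement solvability then holds trivially from Definition~\ref{def:lis}.

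For part (b), fix $s_{-i}$ and let $s_{i}$ and $s_{i}'$ differ by the exchange of $o$ over $o'$ at positions $k, k+1$. Write $x = f(s_{i}, s_{-i})$ and $y = f(s_{i}', s_{-i})$, and suppose $x \neq y$. Since the domain is the full strict domain, both $s_{i}$ and $s_{i}'$ are possible types, and truthful reporting is dominant for each.

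Dominance for type $s_{i}$ gives that $s_{i}$ weakly prefers $x$ to $y$, and since $x \neq y$ and rankings are strict, $x$ is ranked strictly above $y$ in $s_{i}$. Dominance for type $s_{i}'$ gives that $y$ is ranked strictly above $x$ in $s_{i}'$. The two lists induce the same order on every pair except $\{o, o'\}$, so $\{x, y\} = \{o, o'\}$. Because $s_{i}$ ranks $o$ above $o'$, we get $x = o$ and $y = o'$.

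The decisiveness clause is then just the definition of this case. The symmetric statement for $s_{i}'$ follows because the relation is the same pair read in the other direction.

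The step I expect to be the main obstacle is making sure part (b) uses only deterministic outcomes and strict preferences. Then "weakly prefers" with $x \neq y$ really forces strict ranking, and the argument needs no further structure of the mechanism. Richness is not needed in (b).
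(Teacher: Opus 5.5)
Your proposal is correct and follows the paper's proof essentially step for step. In (a) you exchange an adjacent pair ordered against $\ta$, use the adjacent-swap property for the weak inequality and richness for the strict one, and note that no neighbor of $\ta$ improves it; in (b) you apply strategy-proofness to the two types whose true rankings are $s_{i}$ and $s_{i}'$ and use that the two lists disagree only on $\{o, o'\}$.
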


The proof is in Appendix~\ref{app:proofs}. For part (a), a list other than $\ta$ has an adjacent pair ordered differently from $\ta$. Exchanging that pair is a local improvement by the adjacent-swap property and richness, and each exchange reduces the number of pairs ordered differently from $\ta$ by one, so every run ends at $\ta$. For part (b), strategy-proofness for the two types whose true rankings are the two lists forces any change in the outcome to be the exchange of the two objects.

\begin{corollary}
\label{cor:matching}
Serial dictatorship, top trading cycles, and student-proposing deferred acceptance have the adjacent-swap property, and in a rich environment they are local improvement solvable under one-exchange neighbors.
\end{corollary}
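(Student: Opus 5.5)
The corollary follows from Proposition~\ref{prop:matching}. By part (b), it suffices to check that each of the three mechanisms is a deterministic mechanism on the full strict domain that is strategy-proof under truthful reporting. The adjacent-swap property then follows at once, and part (a) gives local improvement solvability under one-exchange neighbors in a rich environment. So the plan is to verify the hypotheses of part (b) mechanism by mechanism, treating the priority data as part of the contingency $s_{-i}$. Fixing the priority data makes each mechanism a deterministic function of the submitted rankings.

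First, determinism and domain. For each fixed contingency (others' rankings plus priority order or priority rankings $\succ_{o}$), each algorithm as described in Section~\ref{sec:matching} produces a unique assignment. This holds for any strict rankings submitted. For top trading cycles, uniqueness uses that cycles are disjoint and are all cleared in a round. For deferred acceptance, it uses that the tentative holds are determined by strict priorities. So each mechanism is defined on the full strict domain for every agent. The random draw of the serial order is absorbed into $s_{-i}$, so no randomness remains in $f$.

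Second, strategy-proofness under truthful reporting, contingency by contingency.

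\emph{Serial dictatorship.} The set of objects with remaining capacity when the agent's turn comes depends only on earlier agents' rankings. Truthful reporting gives her the $\ta$-best object in that set, and any report gives some object in the same set.

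\emph{Top trading cycles.} I would invoke the standard strategy-proofness result for TTC with capacities and priorities (Roth 1982 for housing markets, Abdulkadiro\u{g}lu and S\"onmez 2003 for school choice). If a self-contained argument is wanted, the key observation is the following. The objects available to the agent when she is removed under truthful reporting cannot be cleared while she is still present, unless they are in cycles not involving her. The set of objects she could obtain by pointing elsewhere in any round is contained in the set still available to her under truth.

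\emph{Deferred acceptance.} I would cite Dubins and Freedman (1981) and Roth (1982) for strategy-proofness of the proposing side, which holds with capacities and strict priorities.

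The main obstacle, if one insists on a self-contained proof rather than citations, is the TTC argument. The reason is that a misreport changes the pointing structure and can change which cycles form among other agents. The standard proof handles this by showing that the chains of agents pointing toward her are unaffected by her report until she is in a cycle. I would present that argument briefly or rely on the literature, since the corollary is stated as an application of Proposition~\ref{prop:matching}.
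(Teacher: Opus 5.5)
Your proposal is correct and follows the paper's own argument: treat the priority data as part of the contingency, note that the three mechanisms are deterministic and strategy-proof under truthful reporting on the full strict domain (using the same cited results for top trading cycles and deferred acceptance), obtain the adjacent-swap property from Proposition~\ref{prop:matching}(b), and conclude with part (a) under richness. The only cosmetic difference is that the paper also notes that, for serial dictatorship, the adjacent-swap property can be read directly off the rules, since earlier dictators never read the agent's list, whereas you route serial dictatorship through strategy-proofness as well.
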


The corollary follows from Proposition~\ref{prop:matching}. The three mechanisms are deterministic, and on the full strict domain they are strategy-proof under truthful reporting for the agents. For deferred acceptance, strategy-proofness is shown by \citet{roth82matching} and \citet{dubinsfreedman81}, and for top trading cycles by \citet{roth82ttc} in housing markets and by \citet{abdulkadiroglusonmez03} with capacities and priorities. Part (b) therefore gives the adjacent-swap property. For serial dictatorship the property also follows from the rules directly. The earlier dictators do not read the agent's list, so the exchange of $o$ and $o'$ changes her outcome only when both are still available and she would have taken $o$. In a rich environment, part (a) then applies.

For example, the environment is rich when the agent's priority is part of the contingency, every profile of priorities and others' strategies is possible, and there are enough other agents. Appendix~\ref{app:rich} verifies richness under these conditions for each of the three mechanisms.

Richness depends on what the agent knows, and it fails when she knows that some adjacent pair on her list cannot affect her outcome. A list that is not payoff-equivalent to $\ta$ can then have no local improvement, as an example in Appendix~\ref{app:rich} shows.

In the random serial dictatorship treatments of \citet{li17}, every position remains possible whatever a subject's priority score, so the environment is rich and Corollary~\ref{cor:matching} applies. Appendix~\ref{app:lidesign} shows that the exchange of her two highest entries can be decisive whenever she does not move last, and this exchange is the most common error that \citet{li17} reports under list submission.

By Proposition~\ref{prop:matching}(b), the one-case comparison is available in every deterministic ranking mechanism on the full strict domain that is strategy-proof under truthful reporting. What differs across mechanisms is how the participant can learn that an adjacent exchange changes nothing but the two exchanged objects. In serial dictatorship this follows from the rules. In top trading cycles and deferred acceptance it is a consequence of strategy-proofness, and Section~\ref{sec:dynamicmatching} and Appendix~\ref{app:ida} argue that dynamic mechanisms display it.

\section{Extensive-form mechanisms and games: locality along situations}
\label{sec:extensive}

The neighbor relations of Section~\ref{sec:classical} are defined on the strategies of static mechanisms. This section treats extensive-form mechanisms formally, and with them extensive-form games. What is the neighbor relation there, and what is local improvement solvability under it? Section~\ref{sec:extdef} defines extensive-form mechanisms, one-situation neighbors, and four conditions, three existing and one introduced in this paper. Section~\ref{sec:hamming} relates local improvement solvability under one-situation neighbors to strict action-proofness, and Section~\ref{sec:taxonomy} relates the four conditions to one another.

\subsection{Extensive-form mechanisms and simplicity concepts}
\label{sec:extdef}

Consider a finite extensive-form mechanism of perfect recall, a game tree with an outcome function, with nodes defined as action sequences as in \citet{osborne94}. Let $I$ be the set of players, including nature as the nonstrategic player $0$, and let $I_{+} = I \setminus \{0\}$ be the set of strategic players. I follow the terminology of \citet{watsonzhang}. A situation is a place in the game where a player must choose an action. I reserve the term information set for the information about the strategy profile that the player has at that place, expressed as a set of strategy profiles. Let $H$ be the set of situations, let $H_{i}$ be the set of situations of player $i \in I$, and for $h \in H$ let $A(h)$ be the set of actions available at $h$, with $|A(h)| \ge 2$. A strategy profile is $s = (a_{h})_{h \in H}$ with $a_{h} \in A(h)$, and the set of strategy profiles is $S = \times_{h \in H} A(h)$. For $L \subset H$, the profile of actions at the situations in $L$ is $s_{L} = (a_{h})_{h \in L}$, with $S_{L}$ the corresponding projection of $S$ and $-L = H \setminus L$. Player $i$'s strategy is $s_{i} = s_{H_{i}}$, and $S_{i} = \times_{h \in H_{i}} A(h)$ is her strategy set. A strategy specifies an action at every situation in $H_{i}$, including situations that the strategy itself makes unreachable. For $h \in H_{i}$, $s_{h}$ is the action that $s_{i}$ specifies at $h$. Nature's strategy $s_{0}$ is a component of the profile, and no probability distribution over nature's strategy is specified. A contingency for player $i \in I_{+}$ is a profile $s_{-i} \in S_{-i}$ of the other players' strategies, nature's included. Player $i$ has a type $\ta_{i} \in \Ta_{i}$ and a payoff function $u_{i} : S \times \Ta_{i} \to \R$, written $u_{i}(s_{i}, s_{-i}; \ta_{i})$, which is her utility from the outcome at the terminal node that the profile reaches.

The extensive-form mechanism is a mechanism in the sense of Section~\ref{sec:setup}, with the strategy sets $S_{i}$ and the outcome function given by the terminal nodes. Its reduced normal-form mechanism is the mechanism obtained by identifying two strategies of a player whenever they reach the same terminal node in every contingency. A static mechanism is an extensive-form mechanism in which every strategic player has exactly one situation, and a dynamic mechanism is one in which some strategic player has more than one. When the payoffs are fixed rather than indexed by types, the same object is called a game. An extensive-form game is therefore the special case of an extensive-form mechanism in which every player has a single type.

Conversely, fixing a type $\ta_{i}$ of player $i$ turns an extensive-form mechanism into an extensive-form game for her, with payoffs $u_{i}(\cdot\,; \ta_{i})$ and with the other players and nature as the source of her contingencies. Each condition in this section, and local improvement solvability, is stated for a single type and quantifies only over that type's strategies and the contingencies she faces. A mechanism therefore has each property exactly when every game that a type faces has it. The definitions and results below apply without change to extensive-form games, and the examples that separate the conditions are games of this kind.

For $h \in H$, let $S(h)$ be the set of strategy profiles that reach $h$, that is, the profiles whose induced action sequence has a node in $h$ as a truncation. $S(h)$ is the information set at $h$. For $i \in I_{+}$ and $h \in H_{i}$, perfect recall implies that $S(h)$ is a restriction for player $i$, $S(h) = S(h)_{i} \times S(h)_{-i}$. A strategy $s_{i}$ reaches $h$ if $s_{i} \in S(h)_{i}$, and a contingency $s_{-i}$ reaches $h$ if $s_{-i} \in S(h)_{-i}$. A situation $h \in H_{i}$ precedes $h' \in H_{i}$, and $h'$ follows $h$, if every node in $h'$ has a node in $h$ as a strict prefix. Perfect recall implies that for every $h' \in H_{i}$ that precedes $h$ the projection $S(h)_{h'}$ is a single action, the action at $h'$ that leads toward $h$. Perfect recall also implies that for every other $h' \in H_{i}$ the projection $S(h)_{h'}$ is $A(h')$. Hence $S(h)_{i}$ is the set of strategies that specify, at every situation of player $i$ preceding $h$, the action that leads toward $h$. Two strategies that both reach $h$ therefore agree at every situation preceding $h$. For a strategy $r \in S_{i}$, a situation $h \in H_{i}$, and an action $a \in A(h)$, write $r[h \leftarrow a]$ for the strategy $(a, r_{H_{i} \setminus \{h\}})$ that agrees with $r$ at every situation other than $h$ and specifies $a$ at $h$. A continuation of an action $a$ at $h$ is a strategy $r \in S(h)_{i}$ with $r_{h} = a$. A situation $h'$ follows $h$ through $a$ if $h'$ follows $h$ and $a$ is the action at $h$ that leads toward $h'$, so that $S(h')_{h} = \{a\}$. The actions that $r$ specifies at the situations that follow $h$ through $a$ are its continuation after $a$.

The information in the extensive-form structure already restricts the comparisons that a player must make. At a situation $h$ she knows that play has reached $h$, so the strategies she compares there are those in $S(h)_{i}$. Every strategy that reaches $h$ is then in effect a neighbor of every other strategy that reaches $h$. The contingencies are reduced in the same way. Suppose two strategies differ only at $h$, or only at $h$ and at situations following $h$. In every contingency that does not reach $h$ they lead to the same terminal node, since play never passes through a situation at which they differ. A comparison between the two therefore involves only the contingencies in $S(h)_{-i}$. The extensive-form mechanism shows the player both restrictions. Its reduced normal-form mechanism presents the same strategies as wholes and shows neither. When $S(h)_{-i}$ is smaller than $S_{-i}$, the dominant strategy can therefore be easier to find in an extensive-form mechanism than in its reduced normal-form mechanism. The conditions defined below all use these restrictions.

A finer relation can also be used. The one-situation neighbors of $s_{i}$ are the strategies that differ from $s_{i}$ at exactly one situation,
\[
N_{i}(s_{i}) \;=\; \big\{ s_{i}' \in S_{i} : \big|\{h \in H_{i} : s_{h} \neq s_{h}'\}\big| = 1 \big\}.
\]
Under this relation, a step of the local improvement algorithm changes the action at one situation and holds the rest of the strategy fixed. The one-situation neighbors of $s_{i}$ are thus its one-shot deviations $s_{i}[h \leftarrow a]$ with $h \in H_{i}$ and $a \in A(h) \setminus \{s_{h}\}$, the strategies against which the one-shot deviation principle checks $s_{i}$.\footnote{The principle is presented, for instance, in \citet{osborne94}, and \citet{hendonetal96} extend it from subgame perfection to sequential rationality.} Section~\ref{sec:hamming} shows what local improvement solvability then requires, and Section~\ref{sec:taxonomy} ends by comparing it with the principle. One-situation neighbors differ in a single action, the smallest change of a strategy that the extensive-form structure defines.\footnote{A mechanism whose strategy set has more structure can support a relation with an even simpler path to the dominant strategy, as the ascending clock auction of Section~\ref{sec:clock} shows.}

The following definitions state three existing concepts, namely dominance, the obvious dominance of \citet{li17}, and the strong obvious dominance of \citet{pyciatroyan23}. The definitions also state a fourth condition, action dominance, which I introduce in this paper. Fix a player $i \in I_{+}$, a type $\ta_{i}$, and a strategy $s_{i}^{*}$, called the recommended strategy. Under a recommendation $\si$ in the sense of Section~\ref{sec:setup}, $s_{i}^{*} = \si_{i}(\ta_{i})$. A strategy $s_{i}'$ deviates from $s_{i}^{*}$ at a situation $h \in H_{i}$ if both strategies reach $h$, that is, $s_{i}^{*}, s_{i}' \in S(h)_{i}$, and $s_{h}' \neq s^{*}_{h}$. Recall that two strategies that reach $h$ agree at every situation preceding $h$. Hence the situations at which $s_{i}'$ deviates from $s_{i}^{*}$ are those at which the two strategies first differ along some path of play, which \citet{li17} calls their earliest points of departure. Each condition below is stated at a situation $h$ that $s_{i}^{*}$ reaches, and the contingencies in each comparison are those that reach $h$, the elements of $S(h)_{-i}$. Each condition depends on $h$ through $S(h)_{-i}$ and through the set $S(h)_{i}$ of the player's own strategies that reach $h$, which in action dominance and strong obvious dominance supplies both sides of the comparison.

\begin{definition}[Dominance at a situation]
\label{def:dominant}
The strategy $s_{i}^{*}$ is dominant for $\ta_{i}$ if (\ref{eq:SPdef}) holds. Equivalently, at every situation $h$ that $s_{i}^{*}$ reaches and for every $s_{i}'$ that deviates from $s_{i}^{*}$ at $h$,
\begin{equation}
u_{i}(s_{i}^{*}, s_{-i}; \ta_{i}) \;\ge\; u_{i}(s_{i}', s_{-i}; \ta_{i}) \qquad \text{for every } s_{-i} \in S(h)_{-i}.
\tag{D}
\label{eq:D}
\end{equation}
\end{definition}

\begin{definition}[Obvious dominance]
\label{def:od}
The strategy $s_{i}^{*}$ is obviously dominant for $\ta_{i}$ if, at every situation $h$ that $s_{i}^{*}$ reaches and for every $s_{i}'$ that deviates from $s_{i}^{*}$ at $h$,
\begin{equation}
\min_{s_{-i} \in S(h)_{-i}} u_{i}(s_{i}^{*}, s_{-i}; \ta_{i}) \;\ge\; \max_{s_{-i}' \in S(h)_{-i}} u_{i}(s_{i}', s_{-i}'; \ta_{i}).
\tag{OD}
\label{eq:OD}
\end{equation}
\end{definition}

\begin{definition}[Action dominance]
\label{def:ad}
The strategy $s_{i}^{*}$ is action dominant for $\ta_{i}$ if, at every situation $h$ that $s_{i}^{*}$ reaches,
\begin{equation}
\begin{split}
& u_{i}(r[h \leftarrow s^{*}_{h}], s_{-i}; \ta_{i}) \;\ge\; u_{i}(r[h \leftarrow a'], s_{-i}; \ta_{i}) \\
& \qquad \text{for every } r \in S(h)_{i}, \; a' \in A(h), \text{ and } s_{-i} \in S(h)_{-i}.
\end{split}
\tag{AD}
\label{eq:AD}
\end{equation}
It is strictly action dominant if, in addition, for every such $h$, $r$, and $a' \neq s^{*}_{h}$ the inequality is strict for some $s_{-i} \in S(h)_{-i}$.
\end{definition}

\begin{definition}[Strong obvious dominance]
\label{def:sod}
The strategy $s_{i}^{*}$ is strongly obviously dominant for $\ta_{i}$ if, at every situation $h$ that $s_{i}^{*}$ reaches and for every $a' \in A(h)$ with $a' \neq s^{*}_{h}$,
\begin{equation}
\min_{\substack{r \in S(h)_{i} : \, r_{h} = s^{*}_{h} \\ s_{-i} \in S(h)_{-i}}} u_{i}(r, s_{-i}; \ta_{i}) \;\ge\; \max_{\substack{r' \in S(h)_{i} : \, r'_{h} = a' \\ s_{-i}' \in S(h)_{-i}}} u_{i}(r', s_{-i}'; \ta_{i}).
\tag{SOD}
\label{eq:SOD}
\end{equation}
\end{definition}

The mechanism is strategy-proof (SP), obviously strategy-proof (OSP), or action-proof (AP) if every type of every strategic player has a dominant, an obviously dominant, or an action dominant strategy, respectively. It is strictly action-proof or strongly obviously strategy-proof (SOSP) if every type of every strategic player has a strictly action dominant or a strongly obviously dominant strategy, respectively.\footnote{For two such properties, $X \Rightarrow Y$ means that every mechanism with $X$ has $Y$, $X \not\Rightarrow Y$ that some mechanism with $X$ lacks $Y$, and $X + Y$ is the conjunction. The same notation is used for the conditions (\ref{eq:D}), (\ref{eq:OD}), (\ref{eq:AD}), and (\ref{eq:SOD}) on a strategy.}

Condition (\ref{eq:D}) is dominance stated situation by situation. Condition (\ref{eq:OD}) is the definition of \citet{li17}, and condition (\ref{eq:SOD}) is the definition of \citet{pyciatroyan23}. Compare the two definitions. In (\ref{eq:OD}) the deviation is a strategy, and the comparison holds the recommended strategy fixed after $h$. In (\ref{eq:SOD}) the deviation is an action, and the comparison ranges over everything the player might do after $h$ on either side. These three conditions compare payoffs across the contingencies in $S(h)_{-i}$, either one contingency at a time or by the worst case against the best case, and none uses a neighbor relation on the strategy set. Conditions (\ref{eq:OD}) and (\ref{eq:SOD}) make the second comparison and are worst-best comparison concepts in the sense of the introduction.

Action dominance takes the deviation of (\ref{eq:SOD}), an action, and the comparison of (\ref{eq:D}), one contingency at a time.\footnote{One precedent is the weak dominance between actions of \citet{catoninixue}, which Section~\ref{sec:taxonomy} compares with action dominance.} The player's strategy at her other situations is arbitrary but the same on both sides of the comparison. Action dominance also has a precedent in the agent normal form of \citet{selten75}, in which each situation of a player is treated as a separate agent who chooses the action there. Action dominance at $h$ is weak dominance of the recommended action for the agent at $h$, with the player's own actions at her other situations among that agent's contingencies. This weak dominance is required at every situation that $s_{i}^{*}$ reaches. Unlike conditions (\ref{eq:D}), (\ref{eq:OD}), and (\ref{eq:SOD}), action dominance compares two strategies that differ at one situation only. A change of the action at one situation is a move to a one-situation neighbor, a one-shot deviation. Theorem~\ref{prop:hamming} shows that, in extensive-form mechanisms without duplicate dominant strategies, local improvement solvability under one-situation neighbors is equivalent to strict action-proofness.

Each of the four conditions is stated situation by situation, and Section~\ref{sec:taxonomy} and Appendix~\ref{app:other} use the version at a single situation. Fix a situation $h$ that $s_{i}^{*}$ reaches and an action $a' \in A(h)$ with $a' \neq s^{*}_{h}$. The strategy $s_{i}^{*}$ satisfies dominance at $h$ against $a'$ if (\ref{eq:D}) holds for every $s_{i}' \in S(h)_{i}$ with $s_{h}' = a'$. It satisfies obvious dominance at $h$ against $a'$ if (\ref{eq:OD}) holds for every such $s_{i}'$. It satisfies action dominance at $h$ against $a'$ if (\ref{eq:AD}) holds for every $r \in S(h)_{i}$ and every $s_{-i} \in S(h)_{-i}$. It satisfies strong obvious dominance at $h$ against $a'$ if (\ref{eq:SOD}) holds. Appendix~\ref{app:dominance} proves the equivalence of (\ref{eq:SPdef}) and (\ref{eq:D}) in Definition~\ref{def:dominant}.

The last concept is the complexity measure of \citet{nagelsaitto}. It first merges consecutive situations of the player between which no payoff-relevant information arrives. It then removes, one at a time, actions whose best payoff at a situation, over everything that can follow, is no larger than the worst payoff of another surviving action there. Finally, it counts the actions that survive, and the complexity is the largest such count over the players, their types, and their situations.\footnote{Appendix~\ref{app:ns} gives a simplified version of the construction in the notation of this paper and defines the terms used in Section~\ref{sec:games}, such as the evident reduction and the essential frame.} The results in Section~\ref{sec:games} use only two properties of the measure. The removal step compares the maximum of one set of payoffs against the minimum of another. Two consecutive situations of the player are merged unless the information between them excludes some payoff possible under her dominant continuation.

\subsection{One-situation neighbors}
\label{sec:hamming}

A change of action at a situation that the current strategy does not reach changes no outcome, so it is not a local improvement. A change of action at a situation $h$ that the current strategy reaches replaces the strategy $r$ by $r[h \leftarrow a]$. The comparison between the two strategies is the comparison in (\ref{eq:AD}) with $r$ fixed, together with the requirement that the change be strictly better in some contingency. Strict action dominance, Definition~\ref{def:ad}, is exactly (\ref{eq:AD}) together with that requirement.

Two dominant strategies of a type are duplicates if they specify different actions at some situation that both of them reach. Two dominant strategies are payoff-equivalent, so duplicates exist only when two different actions at some situation lead to continuations that are equally good in every contingency, as in the game of Figure~\ref{fig:duplicate}. In a static mechanism every strategy reaches the player's single situation, so any two different dominant strategies are duplicates. The truthful bids $\ta$ and $\ta + 1$ of the sealed-bid second-price auction with a value on the bid grid are an example.

\begin{figure}[t]
\centering
\begin{tikzpicture}[x=1cm, y=1cm, every node/.style={font=\small}, dn/.style={circle, fill, inner sep=1.4pt}]
\node[dn, label=above:{$i$}] (r) at (0,0) {};
\node[dn, label=left:{$i$}] (l) at (-1.6,-1.2) {};
\node[dn, label=right:{$i$}] (m) at (1.6,-1.2) {};
\node (lx) at (-2.3,-2.4) {$10$};
\node (ly) at (-0.9,-2.4) {$0$};
\node (mx) at (0.9,-2.4) {$10$};
\node (my) at (2.3,-2.4) {$0$};
\draw (r) -- node[above left] {$a^{*}$} (l);
\draw (r) -- node[above right] {$a'$} (m);
\draw (l) -- node[left] {$x$} (lx);
\draw (l) -- node[right] {$y$} (ly);
\draw (m) -- node[left] {$x'$} (mx);
\draw (m) -- node[right] {$y'$} (my);
\end{tikzpicture}
\caption{A game with duplicate dominant strategies, Theorem~\ref{prop:hamming}.}
\label{fig:duplicate}
\end{figure}

\begin{theorem}
\label{prop:hamming}
The type $\ta_{i}$ has a strictly action dominant strategy if and only if $\Ga$ is local improvement solvable for $\ta_{i}$ under one-situation neighbors and $\ta_{i}$ has no duplicate dominant strategies.
\end{theorem}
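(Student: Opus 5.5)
The plan is to prove the two directions separately. Both rest on one observation. If $h \in H_{i}$ is reached by a strategy $r$ and $a \in A(h)$, then $r$ and $r[h \leftarrow a]$ reach the same terminal node in every contingency outside $S(h)_{-i}$, because play never passes through $h$ there. So $r[h \leftarrow a]$ improves $r$ exactly when (\ref{eq:AD})-type weak inequalities hold on $S(h)_{-i}$, with strict inequality for some $s_{-i} \in S(h)_{-i}$. I will also use the perfect-recall facts stated above. Two strategies that reach $h$ agree at every situation preceding $h$. A strategy reaches $h$ if and only if it prescribes, at every preceding situation, the action leading toward $h$.

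\emph{Only if.} Let $s_{i}^{*}$ be strictly action dominant and let $p$ be any locally unimprovable strategy.

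\begin{enumerate}
\item Suppose $p$ differs from $s_{i}^{*}$ at some situation $p$ reaches. Take such an $h$ that is minimal in the precedence order.
\item The situations preceding $h$ are reached by $p$, so by minimality $p$ agrees with $s_{i}^{*}$ there. Hence $s_{i}^{*}$ also reaches $h$.
\item Apply strict action dominance at $h$ with $r = p$ and $a' = p_{h}$. It shows that $p[h \leftarrow s^{*}_{h}]$, a one-situation neighbor, improves $p$. This contradicts local unimprovability.
\end{enumerate}

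So $p$ agrees with $s_{i}^{*}$ at every situation $p$ reaches. Then $p$ reaches the same terminal node as $s_{i}^{*}$ in every contingency, so $p$ is payoff-equivalent to $s_{i}^{*}$. Every locally unimprovable strategy is therefore payoff-equivalent to $s_{i}^{*}$, which is local improvement solvability. By Theorem~\ref{thm:lis}(a), $s_{i}^{*}$ is then dominant.

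For the absence of duplicates, any dominant strategy is locally unimprovable. By the argument above it agrees with $s_{i}^{*}$ at every situation it reaches. Two dominant strategies therefore agree at every situation both reach, so they are not duplicates.

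\emph{If.} Assume local improvement solvability and no duplicates. By Lemma~\ref{lem:terminate} and Theorem~\ref{thm:lis}(a), a dominant strategy $s_{i}^{*}$ exists. Fix $h$ reached by $s_{i}^{*}$, an action $a' \neq s^{*}_{h}$, and $r \in S(h)_{i}$. I must show two things:
\begin{itemize}
\item $r[h \leftarrow s^{*}_{h}] \ge r[h \leftarrow a']$ on $S(h)_{-i}$;
\item the inequality is strict for some contingency in $S(h)_{-i}$.
\end{itemize}

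\emph{Strictness.} Suppose equality held everywhere on $S(h)_{-i}$ for some such $r$. The plan is to graft $r$'s continuations onto $s_{i}^{*}$, so that the one-shot deviation to $a'$ yields a strategy payoff-equivalent to a dominant strategy. That strategy would be dominant, reach $h$, and prescribe $a' \ne s^{*}_{h}$ there, giving a duplicate of $s_{i}^{*}$.

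\emph{Weak inequality.} Suppose it fails at some $r$ and some $\bar s_{-i} \in S(h)_{-i}$. Let $c$ and $c'$ be $r$'s continuations after $s^{*}_{h}$ and after $a'$. Form $v$ from $s_{i}^{*}$ by installing $c$ after $s^{*}_{h}$ and $c'$ after $a'$. Then $v[h \leftarrow a']$ beats $v$ at $\bar s_{-i}$.

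Next, run the local improvement algorithm from $v[h \leftarrow a']$, restricted to changes at situations other than $h$ and its predecessors. This restricted run keeps the strategy reaching $h$ with action $a'$. I then want to argue that its end point is locally unimprovable for the full relation, hence dominant. Being dominant, reaching $h$, and prescribing $a'$ there, it would be a duplicate of $s_{i}^{*}$, a contradiction.

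\emph{Main obstacle.} The hard step is this last one. At the end of the restricted run, a one-shot change at $h$ or at a predecessor of $h$ might still improve the strategy. Ruling this out requires relating the end point to $s_{i}^{*}$ through transitivity of improvement. The key facts are that the end point improves $v[h \leftarrow a']$, and that $v[h \leftarrow a']$ beats $v$ somewhere. If that route fails, I would instead argue by induction on the depth of $h$ from the terminal nodes. At the deepest situations, action dominance is immediate from dominance plus no duplicates. For earlier situations, I would use action dominance at later situations to replace arbitrary continuations by $s_{i}^{*}$'s continuation one situation at a time.
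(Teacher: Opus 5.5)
Your ``only if'' direction is sound. Taking a minimal situation of departure gives you agreement with $s_{i}^{*}$ at every situation $p$ reaches, which makes the no-duplicates claim immediate. The gap is in the ``if'' direction, and it affects both halves.

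\emph{Strictness.} The graft does not do what you need. Your $v$ carries $r$'s continuation after $s^{*}_{h}$, which may be poor, so $v[h \leftarrow a']$ is payoff-equivalent on $S(h)_{-i}$ only to $v$, not to any dominant strategy. The sketch also never invokes local improvement solvability, and it cannot work without it. The two-move game of Figure~\ref{fig:twomove} is strategy-proof with no duplicate dominant strategies, yet $r = (\cdot, y, x')$ pays $0$ after both $a^{*}$ and $a'$.

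\emph{Weak inequality.} Freezing $h$ and its predecessors is exactly what creates the obstacle you flag. At the end of that run, switching $h$ to a third action, or changing a predecessor of $h$, may still be an improvement. The endpoint then need not be locally unimprovable, and Theorem~\ref{thm:lis}(a) cannot be applied to it. Your fallback induction does not close this. Condition (\ref{eq:AD}) at $h$ requires \emph{every} continuation after $s^{*}_{h}$ to do at least as well as every continuation after $a'$ in each contingency. Replacing continuations by that of $s_{i}^{*}$ one situation at a time only bounds the left side from above.

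The missing idea, which is how the paper closes the argument, is to freeze the other side. Start from $q = r[h \leftarrow a']$. Hold fixed only $r$'s actions at the situations that follow $h$ through $s^{*}_{h}$, and allow changes everywhere else, including at $h$ and its predecessors. Along such a run, the switch $p' \mapsto p'[h \leftarrow s^{*}_{h}]$ is never a local improvement. If $p'$ does not reach $h$, the switch changes nothing. If it does, the switched strategy plays on $S(h)_{-i}$ as $r[h \leftarrow s^{*}_{h}]$. When the weak inequality fails, that is strictly worse than $q$, and hence than $p'$, at the witness contingency. When equality holds throughout, it is at most $p'$ everywhere. So the endpoint $p$ never plays $s^{*}_{h}$ at $h$, and it reaches no frozen situation. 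The frozen changes therefore alter no outcome, $p$ is locally unimprovable under the full relation, and so $p$ is dominant. It need not reach $h$. If it does, it differs from $s_{i}^{*}$ at $h$. If it does not, it leaves the path to $h$ at an earlier situation $g$ that both strategies reach. Either way $p$ duplicates $s_{i}^{*}$. This single argument covers both the failure of (\ref{eq:AD}) and the failure of strictness, so no separate grafting step is needed.
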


The proof is in Appendix~\ref{app:proofs}. If $s_{i}^{*}$ is strictly action dominant, any strategy that differs from it at a situation both reach can be improved there. Every locally unimprovable strategy is therefore payoff-equivalent to $s_{i}^{*}$, and a similar argument rules out duplicates. Conversely, let $s_{i}^{*}$ be dominant and suppose strict action dominance fails for it at a situation $h$. A run of the algorithm from the failing deviation at $h$, with the actions after $s^{*}_{h}$ held fixed, ends at a locally unimprovable strategy that does not play $s^{*}_{h}$ at $h$. Under local improvement solvability this strategy is dominant and differs from $s_{i}^{*}$ at a situation both reach, so the two are duplicates.

Consider the dynamic serial dictatorship of Section~\ref{sec:dynamicmatching}, in which the agents are called one at a time in the priority order and each takes one of the objects that remain. Each agent's choice from the offered set determines her outcome, so choosing her most preferred object in that set is strictly action dominant. By the theorem, the dynamic serial dictatorship is local improvement solvable under one-situation neighbors. By Corollary~\ref{cor:matching}, its direct mechanism, serial dictatorship with list submission, is local improvement solvable under one-exchange neighbors in a rich environment.

Local improvement solvability alone does not rule out duplicates. In the game of Figure~\ref{fig:duplicate}, the agent chooses $a^{*}$ or $a'$, and the two choices lead to identical subgames. After $a^{*}$ she chooses $x$, paying $10$, or $y$, paying $0$, and after $a'$ she chooses $x'$, paying $10$, or $y'$, paying $0$. Here and in Section~\ref{sec:taxonomy}, a strategy is written as the list of the agent's actions, from the top of the tree down and from left to right. A dot marks a situation whose action does not matter. Every strategy with payoff $0$ is improved by correcting its action at the second situation it reaches, so every locally unimprovable strategy pays $10$ and the game is local improvement solvable under one-situation neighbors. The strategies $(a^{*}, x, \cdot)$ and $(a', \cdot, x')$ are both dominant and differ at the first situation, which both reach, so they are duplicates. The game is not action-proof. If $a^{*}$ is recommended, the strategy $r = (\cdot, y, x')$ pays $0$ with $a^{*}$ and $10$ with $a'$, and the case with $a'$ recommended is symmetric.

\subsection{Four conditions on a step}
\label{sec:taxonomy}

The four conditions of Definitions~\ref{def:dominant} to~\ref{def:sod} differ in two respects. The first is whether the player's own actions after the recommended action are held equal to those of $s_{i}^{*}$ or allowed to vary. The second is whether the contingencies are compared one at a time or by comparing the worst case of one side against the best case of the other. Table~\ref{tab:four} arranges the four conditions. Its rows are the two ways of comparing contingencies and its columns are the two treatments of the player's own other actions.

Moving down a column replaces the comparison contingency by contingency with the comparison of the worst case against the best case. This replacement is the difference between strategy-proofness and OSP and between action-proofness and SOSP, and it removes the need to consider which contingency obtains. Moving across a row allows the player's own actions at other situations to vary. This variation removes the need to consider what the player will do at later situations. Action-proofness is the condition obtained by moving across a row alone.

Most other worst-best comparison concepts lie between these four conditions. The partition-based concepts of \citet{zhanglevin21} and \citet{arribillagamassoneme24} fill the left column of Table~\ref{tab:four}, between dominance and obvious dominance, and the concepts of \citet{pyciatroyan23} fill the bottom row. The conditions of \citet{catoninixue} compare two actions and treat the player's own continuation in a third way. Appendix~\ref{app:other} relates all of them to the table.

Comparing actions is not what places a condition in the right column, since the conditions of \citet{catoninixue} compare actions as well. What places it there is the requirement that the comparison hold for every continuation after the recommended action. Action dominance is the entry of that column in which the contingencies are compared one at a time. As a property of mechanisms it does not appear in the literature as far as I know, although each neighboring condition in the table does. The weak dominance between actions of \citet{catoninixue} is a closely related precedent. It asks only that some continuation after the recommended action match each continuation after $a'$. Weak dominance between actions, required at every situation that a strategy reaches, is equivalent to dominance. Hence the requirement for every continuation after the recommended action is exactly what separates action-proofness from strategy-proofness. I know of no concept between strategy-proofness and action-proofness, or between action-proofness and SOSP.

\begin{figure}[t]
\centering
\begin{tikzpicture}[x=1cm, y=1cm, every node/.style={font=\small}, dn/.style={circle, fill, inner sep=1.4pt}]
\node[dn, label=above:{$i$}] (r) at (0,0) {};
\node[dn, label=left:{$i$}] (l) at (-1.6,-1.2) {};
\node[dn, label=right:{$i$}] (m) at (1.6,-1.2) {};
\node (lx) at (-2.3,-2.4) {$10$};
\node (ly) at (-0.9,-2.4) {$0$};
\node (mx) at (0.9,-2.4) {$0$};
\node (my) at (2.3,-2.4) {$5$};
\draw (r) -- node[above left] {$a^{*}$} (l);
\draw (r) -- node[above right] {$a'$} (m);
\draw (l) -- node[left] {$x$} (lx);
\draw (l) -- node[right] {$y$} (ly);
\draw (m) -- node[left] {$x'$} (mx);
\draw (m) -- node[right] {$y'$} (my);
\end{tikzpicture}
\caption{The two-move game, Proposition~\ref{prop:taxonomy}(b).}
\label{fig:twomove}
\end{figure}

\begin{table}[t]
\centering
\small
\begin{tabular}{lcc}
\hline
 & own other actions equal to $s_{i}^{*}$ & own other actions arbitrary \\
\hline
contingency by contingency & strategy-proof & action-proof \\
worst case against best case & OSP & SOSP \\
\hline
\end{tabular}
\caption{Four conditions on a step of the local improvement algorithm at a situation.}
\label{tab:four}
\end{table}

\begin{proposition}
\label{prop:taxonomy}
(a) (\ref{eq:SOD}) $\Rightarrow$ (\ref{eq:OD}) $\Rightarrow$ (\ref{eq:D}) and (\ref{eq:SOD}) $\Rightarrow$ (\ref{eq:AD}) $\Rightarrow$ (\ref{eq:D}). Hence $\mathrm{SOSP} \Rightarrow \mathrm{OSP} \Rightarrow \mathrm{SP}$ and $\mathrm{SOSP} \Rightarrow \mathrm{AP} \Rightarrow \mathrm{SP}$.

(b) No implication in (a) reverses, $\mathrm{OSP} \not\Rightarrow \mathrm{AP}$, and $\mathrm{AP} \not\Rightarrow \mathrm{OSP}$.

(c) $\mathrm{OSP} + \mathrm{AP} \not\Rightarrow \mathrm{SOSP}$.
\end{proposition}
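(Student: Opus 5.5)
\textbf{Part (a).} The plan is to prove each implication between the conditions at a fixed situation $h$ that $s_{i}^{*}$ reaches. The mechanism-level implications then follow by applying them to every type's recommended strategy.

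\emph{(\ref{eq:SOD}) $\Rightarrow$ (\ref{eq:OD}).} Let $s_{i}'$ deviate at $h$ with action $a'$. Then $s_{i}^{*}$ is a continuation of $s^{*}_{h}$ in $S(h)_{i}$, and $s_{i}'$ is a continuation of $a'$ there. So the minimum of $u_{i}(s_{i}^{*}, \cdot)$ over $S(h)_{-i}$ is at least the minimum on the left of (\ref{eq:SOD}). By (\ref{eq:SOD}) that minimum is at least the maximum on the right, which in turn is at least the maximum of $u_{i}(s_{i}', \cdot)$.

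\emph{(\ref{eq:OD}) $\Rightarrow$ (\ref{eq:D}).} This is immediate, since for every $s_{-i} \in S(h)_{-i}$ we have $u_{i}(s_{i}^{*}, s_{-i}) \ge \min \ge \max \ge u_{i}(s_{i}', s_{-i})$.

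\emph{(\ref{eq:SOD}) $\Rightarrow$ (\ref{eq:AD}).} For $r \in S(h)_{i}$, changing the action at $h$ itself does not affect reaching $h$. Hence $r[h \leftarrow s^{*}_{h}]$ and $r[h \leftarrow a']$ lie in $S(h)_{i}$ and are continuations of $s^{*}_{h}$ and $a'$ respectively. The same min--max chain as above then gives the pointwise inequality (\ref{eq:AD}).

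\emph{(\ref{eq:AD}) $\Rightarrow$ (\ref{eq:D}).} This is the only step with content, and I expect it to be the main obstacle. I will fix a contingency $s_{-i} \in S(h)_{-i}$ and a strategy $s_{i}'$ deviating at $h$. The idea is to transform $s_{i}'$ into $s_{i}^{*}$ one situation at a time, following the play path that $s_{-i}$ induces.

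First set $r^{1} = s_{i}'[h \leftarrow s^{*}_{h}]$. By (\ref{eq:AD}) at $h$, with $r = s_{i}'$, $r^{1}$ does weakly better than $s_{i}'$ against $s_{-i}$. Now suppose the current strategy $r^{k}$ still differs from $s_{i}^{*}$ on the play path against $s_{-i}$. Let $h_{k}$ be the first such situation. Both $r^{k}$ and $s_{i}^{*}$ reach $h_{k}$ and agree before it, and $s_{-i}$ reaches $h_{k}$. So (\ref{eq:AD}) at $h_{k}$ applies with $r = r^{k}$ and gives a weak improvement at $s_{-i}$ from replacing the action at $h_{k}$.

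Each step moves the first point of disagreement strictly later on a finite path. Hence after finitely many steps the current strategy reaches the same terminal node as $s_{i}^{*}$ against $s_{-i}$. Chaining the weak inequalities gives (\ref{eq:D}) at $s_{-i}$. The care needed is to check at each step that the new situation is reached both by the current strategy and by the fixed contingency.

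\textbf{Part (b).} The plan is to use two examples.

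\emph{The two-move game of Figure~\ref{fig:twomove}.} Here $(a^{*}, x, \cdot)$ is dominant, paying $10$. It is obviously dominant at the root, since its worst case $10$ is at least every payoff after $a'$, and it is trivially obviously dominant at the later situation. So the game is OSP.

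The game is not AP. With $a^{*}$ recommended, the strategy $r = (\cdot, y, y')$ pays $0$ with $a^{*}$ and $5$ with $a'$. With $a'$ recommended, $r = (\cdot, x, x')$ pays $10$ with $a^{*}$ and $0$ with $a'$. This one example gives OSP $\not\Rightarrow$ AP, and hence OSP $\not\Rightarrow$ SOSP and SP $\not\Rightarrow$ AP. The same failures hold for the conditions on the strategy.

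\emph{A static mechanism.} In a static mechanism each player has one situation, so $r$ in (\ref{eq:AD}) is just the action, and (\ref{eq:AD}) coincides with (\ref{eq:D}). The sealed-bid second-price auction, or the pick-a-number game, is therefore AP but not OSP. This gives AP $\not\Rightarrow$ OSP, and hence AP $\not\Rightarrow$ SOSP and SP $\not\Rightarrow$ OSP.

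\textbf{Part (c).} The plan is a one-player game with nature. The player chooses $a^{*}$ or $a'$ at the root, and nature then chooses $L$ or $R$. After $a'$ the game ends, with payoff $1$ under $L$ and $5$ under $R$. After $a^{*}$ the player moves again, knowing nature's choice. Action $x$ pays $10$ under either choice. Action $y$ pays $1$ after $L$ and $5$ after $R$.

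\emph{AP holds for the recommendation $(a^{*}, x, x)$.} At the root, contingency by contingency, every continuation after $a^{*}$ pays at least $1$ under $L$ and at least $5$ under $R$. These match what $a'$ pays, so (\ref{eq:AD}) holds. At the later situations, $x$ is weakly better than $y$, in fact strictly.

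\emph{OSP holds.} The recommended strategy guarantees $10$, which is at least the maximum $5$ after $a'$.

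\emph{SOSP fails.} No strongly obviously dominant strategy exists. A strategy with $a^{*}$ at the root fails (\ref{eq:SOD}) there, since the worst continuation after $a^{*}$ pays $1$ while the best after $a'$ pays $5$. A strategy with $a'$ at the root fails as well, since the worst after $a'$ is $1$ and the best after $a^{*}$ is $10$.
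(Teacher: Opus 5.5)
Parts (a) and (b) are sound. Your path-walking proof of (AD) $\Rightarrow$ (D) is valid at the level of strategies, but it invokes action dominance at every later situation on the path. The paper gets (D) at $h$ from (AD) at $h$ alone, with a single choice of $r$: equal to $s_{i}'$ at the situations that follow $h$ through $a'$, and to $s_{i}^{*}$ elsewhere. Then $r[h \leftarrow s^{*}_{h}]$ plays as $s_{i}^{*}$ and $r[h \leftarrow a']$ plays as $s_{i}'$ against every contingency that reaches $h$. That single-situation version is the one the paper relies on later. In (b), your two examples already give all six non-implications once (a) is in hand, so the paper's extra example is not logically needed.

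The gap is in (c). In your game nature moves \emph{after} the root choice, so it has two nodes, one after $a^{*}$ and one after $a'$. In the paper's framework a contingency is a profile $s_{-i}$ that includes nature's whole strategy. Nature's actions at those two nodes are therefore separate coordinates, and all four of nature's strategies reach the root. Take the contingency in which nature plays $L$ after $a^{*}$ and $R$ after $a'$, and let $r$ play $y$ at the situation after $(a^{*}, L)$. At the root $h$ this gives $u_{i}(r[h \leftarrow a^{*}], s_{-i}) = 1 < 5 = u_{i}(r[h \leftarrow a'], s_{-i})$, so (AD) fails there. By (a) an action dominant strategy is dominant, and $(a^{*}, x, x)$ is the only dominant strategy, so no strategy is action dominant. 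As written, your game is OSP and not AP. It reproves OSP $\not\Rightarrow$ AP rather than establishing (c).

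Your check that every continuation after $a^{*}$ pays at least $1$ under $L$ and $5$ under $R$, matching $a'$, silently treats nature's choice as one variable shared by both branches. That sharing is exactly what the contingency-by-contingency comparison in (AD) needs. The repair is to build it in: let nature move first, unobserved at the player's first move (observed or not at her second). This is how the paper constructs its example, with the other player choosing $c_{1}$ or $c_{2}$ before the agent moves. The contingencies are then just $L$ and $R$, the same on both sides of the root comparison, and your payoffs give OSP, AP, and no strongly obviously dominant strategy exactly as you verify.
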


Parts (a) and (c) together say that obvious strategy-proofness and action-proofness are both necessary for strong obvious strategy-proofness, and that the two together are not sufficient for it. The proof is in Appendix~\ref{app:proofs}. All four implications in part (a) are immediate from the definitions, situation by situation.

Parts (b) and (c) are proved by four examples. The two that show that obvious strategy-proofness and action-proofness are independent are given here, and the other two are in Appendix~\ref{app:proofs}. In the game of Figure~\ref{fig:twomove} no other player moves, so there is a single contingency, and strategies are written as in Section~\ref{sec:hamming}.

The game of Figure~\ref{fig:twomove} is OSP and not AP. The recommended strategy $(a^{*}, x, \cdot)$ pays $10$, the best payoff in the game, so it is obviously dominant. It is not action dominant, since for the strategy $r = (\cdot, y, y')$ the first action $a^{*}$ pays $0$ and $a'$ pays $5$. The recommended action at the first situation is best only if the agent goes on to play $x$. Obvious dominance holds this later action fixed, and action dominance does not.

The sealed-bid second-price auction is AP and not OSP. In a static mechanism each player has one situation, at which action dominance and dominance are the same condition, so the auction is action-proof, and \citet{li17} shows that it is not OSP.

Theorem~\ref{prop:hamming} places local improvement solvability under one-situation neighbors next to action-proofness in Table~\ref{tab:four}. Absent duplicate dominant strategies, the two differ only by strictness, so local improvement solvability under one-situation neighbors relates to obvious strategy-proofness as action-proofness does. The sealed-bid second-price auction with a value off the grid is strictly action-proof and not OSP, while the two-move game of Figure~\ref{fig:twomove} is OSP and not action-proof. Strong obvious strategy-proofness implies action-proofness but not local improvement solvability under one-situation neighbors. In dynamic Sudoku, Section~\ref{sec:sudoku}, the recommended digit at a cell is action dominant without being strictly so, since an error at a later cell can leave the payoff at $0$ whichever digit is entered. This lack of strictness is the only way a strongly obviously strategy-proof mechanism can fail to be local improvement solvable under one-situation neighbors. The two-move game of Figure~\ref{fig:twomove} shows the other way a second locally unimprovable strategy arises. The strategy $(a', y, y')$ pays $5$ and its three one-situation neighbors pay $0$, $5$, and $0$, so it can be improved only by changing $a'$ to $a^{*}$ and $y$ to $x$ at once.

Theorem~\ref{prop:hamming} can be read as a one-shot deviation principle for dominance. The classical principle fixes the strategies of the others and checks the one-shot deviations from the candidate strategy at every situation, as if play had reached it. In a finite game, a strategy that passes every such check is optimal. Action dominance requires that, at each situation that $s_{i}^{*}$ reaches, no one-shot deviation from the recommended action pay more, whatever the player does at her other situations and whatever the others do. When the type has a strictly action dominant strategy, every strategy with no improving one-shot deviation is dominant, by the theorem and Theorem~\ref{thm:lis}(a). Absent duplicate dominant strategies, the converse holds as well. The requirement on the player's other actions matters because the local improvement algorithm may start from a strategy that errs at several situations, such as $(a', y, y')$.

\section{Three games and the worst-best comparison concepts}
\label{sec:games}

This section examines the worst-best comparison concepts through three games. The pick-a-number game, which a short argument solves, is not OSP and is maximally complex in the measure of \citet{nagelsaitto}. Chess, when White has a winning strategy, is OSP for White, and Sudoku with a unique solution is SOSP, although both seem difficult for a participant. Local improvement solvability under the natural neighbor relation of each game accounts for part of these discrepancies, since it holds in the easy game and fails in the two difficult ones.

\subsection{The pick-a-number game}
\label{sec:additive}

Two players each pick a number from $A = \{0, \dots, K\}$, and player $i$ receives $u_{i}(a_{i}, a_{j}) = a_{i} + \lambda a_{j}$, where $a_{j}$ is the opponent's number and $\lambda > K$. There is no private information, so the type is degenerate and omitted. The one-step neighbors of a number are the numbers one step above and below it. The classifications below hold the payoffs and the set $A$ fixed.

\begin{proposition}
\label{prop:additive}
In the pick-a-number game with $K \ge 1$ and $\lambda > K$:

(a) $u_{i}(a + 1, a_{j}) - u_{i}(a, a_{j}) = 1$ for every $a < K$ and every $a_{j}$, and the game is local improvement solvable under one-step neighbors.

(b) The static game is not OSP, and its complexity in the measure of \citet{nagelsaitto} is $K + 1$.

(c) Let an SP extensive-form game implement the outcomes of the static game through maps $\phi$, in the sense of Definition~\ref{def:implements}. For numbers $a \neq a'$ and $b$, let $h$ be the first situation of player $i$ on the play of $(\phi_{i}(a), \phi_{j}(b))$ at which $\phi_{i}(a)$ and $\phi_{i}(a')$ differ. Then every $\phi_{j}(b')$ reaches $h$. No such game is OSP. If each $\phi_{i}(a)$ gives $i$ the number $a$ whatever $j$ does, the complexity is at least $K + 1$, and every strategy of $j$ reaches every situation of $i$ that two such strategies reach.
\end{proposition}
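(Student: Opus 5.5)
Part (a) is a direct computation followed by an appeal to the structure already used for the single-unit auction. Since $u_{i}(a+1, a_{j}) - u_{i}(a, a_{j}) = 1$ for every $a_{j}$, the number $a+1$ improves $a$, strictly in every contingency. So every $a < K$ has a local improvement and is not locally unimprovable. The number $K$ is dominant, because $u_{i}(K, a_{j}) - u_{i}(a, a_{j}) = K - a \ge 0$, and no strategy improves a dominant strategy. Hence $K$ is the unique locally unimprovable strategy, and Definition~\ref{def:lis} holds trivially. This is just the pattern of Figure~\ref{fig:flow} with all arrows pointing up.

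For part (b), I first check (\ref{eq:OD}) at player $i$'s single situation against the deviation $a = K-1$. The worst payoff of $K$ is $K$, at $a_{j} = 0$. The best payoff of $K-1$ is $K - 1 + \lambda K > K$, since $\lambda > K \ge 1$. So OSP fails. For the measure of \citet{nagelsaitto} there is nothing to merge in a static game, so I examine the removal step of Appendix~\ref{app:ns}. Each number $a$ has payoff range $[a, a + \lambda K]$. An action can be removed only if its maximum is at most the minimum of another surviving action, that is, $a + \lambda K \le a'$ for some $a'$. This is impossible because $\lambda K > K \ge a' - a$. So all $K+1$ actions survive, and the count is $K+1$ for each player.

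Part (c) is the main obstacle, and I would handle it in three steps.

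\textbf{Reachability claim.} Fix $a \ne a'$ and $b$, and let $h$ be the first situation on the play of $(\phi_{i}(a), \phi_{j}(b))$ at which $\phi_{i}(a)$ and $\phi_{i}(a')$ differ. Suppose some $\phi_{j}(b')$ does not reach $h$. Then at the last point where the play of $(\phi_{i}(a), \phi_{j}(b'))$ departs from the path to $h$, the departing mover is $j$ or nature. That point precedes $h$, so $\phi_{i}(a)$ and $\phi_{i}(a')$ agree up to it. Hence $(\phi_{i}(a), \phi_{j}(b'))$ and $(\phi_{i}(a'), \phi_{j}(b'))$ follow the same path thereafter until they separate, which could only happen at a situation of $i$ where the two strategies differ. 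I would argue that no such separation is possible off the path to $h$ in a way consistent with $h$ being first. The delicate case is that $i$'s strategies could differ at later situations off the path to $h$. I would then use implementation: the two profiles must give the distinct outcomes $(a, b')$ and $(a', b')$, so the plays must separate at some situation of $i$. Perfect recall should then force that first separating situation to be $h$, or to lead to a contradiction with minimality along the play with $b$. I expect this bookkeeping to be the hard part. If it fails, I would weaken the claim to the situations actually reached by both $\phi_{i}(a)$ and $\phi_{i}(a')$.

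\textbf{Failure of OSP.} Take $a = K$ and $a' = K-1$, and let $h$ be the situation given by the claim. Every $\phi_{j}(b')$ reaches $h$. So at $h$ the contingencies include every $\phi_{j}(b')$, among them $b' = K$ and $b' = 0$. Against the deviation $\phi_{i}(K-1)$, the best outcome at $h$ is at least $K - 1 + \lambda K$. Meanwhile $\phi_{i}(K)$ (or, if the recommended strategy is another dominant one, its outcome, which must equal that of $\phi_{i}(K)$) yields at worst $K$. So (\ref{eq:OD}) fails. Any obviously dominant strategy is dominant and must give $i$ the number $K$ against each $\phi_{j}(b')$, so the same comparison applies to it.

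\textbf{Complexity bound.} Under the extra hypothesis, each $\phi_{i}(a)$ fixes $i$'s number irrespective of $j$. The same argument shows that $j$'s moves cannot exclude any payoff before $i$'s relevant choices. So the situations of $i$ where the numbers separate are merged, and at the merged situation all $K+1$ numbers remain available with payoff ranges $[a, a + \lambda K]$ over all of $j$'s numbers. The computation from (b) then shows that no removal occurs, which gives complexity at least $K+1$. The final assertion, that every strategy of $j$ reaches every situation of $i$ reached by two such strategies, follows from the reachability claim applied to each pair.
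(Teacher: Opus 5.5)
Your parts (a) and (b) are correct and follow the paper. Once the reachability claim is available, your OSP step in (c) also goes through. The gap is the reachability claim itself. You try to obtain it from perfect recall, the identity (\ref{eq:implements}), and the minimality of $h$, without using that the extensive-form game is strategy-proof or that $\lambda > K$. No argument of that kind can work, because the claim is false without strategy-proofness. Let $j$ move first and choose $b$, and let $i$ observe $b$ and then choose her number, with $\phi_{i}(a)$ choosing $a$ at each of her situations and $\phi_{j}(b)$ choosing $b$. This game has perfect recall and satisfies (\ref{eq:implements}). Yet $h$ is $i$'s situation after $b$, which $\phi_{j}(b')$ does not reach for $b' \neq b$.

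What this game violates is strategy-proofness for $j$, and using it is the missing idea. Suppose some $\phi_{j}(b'')$ misses $h$. Since $\phi_{j}(b)$ reaches $h$, there are two strategies of the form $\phi_{j}(\cdot)$, one of them $\phi_{j}(K)$, of which exactly one reaches $h$. Both $\phi_{i}(a)$ and $\phi_{i}(a')$ reach $h$. So take a hybrid $\tau$ that follows one of them at $h$ and at every situation following $h$, and the other elsewhere. Then $\tau$ plays as the first against every strategy of $j$ that reaches $h$, and as the second against every one that does not. Arrange it so that $\phi_{j}(K)$ faces $\min\{a, a'\}$ and the other strategy faces $\max\{a, a'\}$. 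Against $\tau$ the other strategy earns at least $\lambda \max\{a, a'\}$, while $\phi_{j}(K)$ earns $K + \lambda \min\{a, a'\}$. Since $\lambda |a - a'| \ge \lambda > K$, this contradicts the dominance of $\phi_{j}(K) = \sigma_{j}$. Your fallback of restricting to situations that both $\phi_{i}(a)$ and $\phi_{i}(a')$ reach does not help, since $h$ is already one; the OSP step needs $\phi_{j}(0)$ and $\phi_{j}(K)$ to reach it.

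The final assertion of (c) is a second gap. It does not follow from the reachability claim applied to pairs. That claim covers only the strategies $\phi_{j}(b')$, and only the first situation at which two of the $\phi_{i}(a)$ differ on a play against some $\phi_{j}(b)$. The assertion covers every strategy of $j$, including moves that represent no number, such as a veto. It also covers every situation that two of the $\phi_{i}(a)$ reach, including situations off all such plays. This is where the extra hypothesis enters. Each $\phi_{i}(y)$ gives $i$ the number $y$ against every strategy of $j$. So a hybrid of $\phi_{i}(x)$ and $\phi_{i}(y)$ that switches at a situation reached by exactly one of two strategies of $j$, one of them $\phi_{j}(K)$, leaves $i$ with the smaller number against $\phi_{j}(K)$ and the larger against the other strategy. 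This again contradicts strategy-proofness.

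The merger then follows, but the merged situation also has actions other than those of the $\phi_{i}(a)$, whose payoff ranges your computation from (b) does not cover. You need three facts. First, every action there has minimum payoff at most $K$, because $\phi_{j}(0)$ reaches the situation and no strategy beats the dominant $\phi_{i}(K)$ against $\phi_{j}(0)$. Second, the action of $\phi_{i}(a)$ can be followed by $a + \lambda K > K$ against $\phi_{j}(K)$. Third, the $K + 1$ strategies $\phi_{i}(a)$ take distinct actions there.
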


Parts (a) and (b) are immediate, and part (c) is proved in Appendix~\ref{app:proofs}. In part (a), the term $\lambda a_{j}$ cancels from the difference between two neighboring numbers, so the difference is $1$ in every contingency. Hence $a + 1$ locally improves $a$ for every $a < K$, and $K$ is the only locally unimprovable strategy. The participant does not need to know which contingency obtains. Part (a) is a statement about payoff differences. The worst-best comparison concepts compare payoff levels. In part (b) the minimum payoff from $K$ is $K$, at $a_{j} = 0$, and the maximum payoff from any other number $a$ is $a + \lambda K > K$, at $a_{j} = K$, so (\ref{eq:OD}) fails. Since $a + \lambda K$ exceeds every number in $A$, no number is evidently never better than another, and the evident reduction is $A$.

For part (c), an extensive-form game can reveal a player's earlier actions to the other player, and the other player can condition on them. Since $\lambda > K$, such conditioning lets a player punish the opponent's recommended strategy. So wherever two of a player's strategies that represent numbers first differ on a play against such a strategy of the opponent, strategy-proofness requires every such strategy of the opponent to reach that situation. Along these strategies every strategy-proof implementation thus works like the static game, and (\ref{eq:OD}) fails where those for $K$ and a smaller number first differ. The measure of \citet{nagelsaitto} also registers information about moves that represent no number, such as a veto. Suppose instead that each of a player's strategies that represents a number gives her that number whatever the opponent does. She then cannot tell two strategies of the opponent apart at a situation that two of these strategies reach, for the same reason. These situations therefore merge into one, so dynamic implementation cannot lower the complexity, although the assumption itself restricts no information.

The parameter $\lambda$ shows how the classification by the worst-best comparison concepts depends on the weight of the opponent's number. For $\lambda = 1$ there is a dynamic game that implements the outcomes of the static game. In it the players alternately raise their numbers by one step, observe each other, and cannot resume raising once they stop. That game is obviously strategy-proof, because a number already committed bounds the loss from any later action of the opponent. For $\lambda > K$ that bound never suffices, and no strategy-proof extensive-form game that implements these outcomes is OSP. A classification by obvious strategy-proofness that changes with the exchange rate $\lambda$, while the decision is the same, seems to measure something other than the difficulty of the decision.

\subsection{Chess}
\label{sec:chess}

Chess and Sudoku show the opposite error. Consider chess with payoff $1$ for a win, $1/2$ for a draw, and $0$ for a loss, and suppose, following \citet{pyciatroyan23}, that White has a winning strategy.\footnote{Black need not have a dominant strategy, so the classifications of chess in this paper concern White.} Any winning strategy is then obviously dominant, since its worst outcome at every position it reaches is a win and no deviation does better. It is not strongly obviously dominant, since after the winning move some continuation of White loses while some deviation wins against an error of Black.\footnote{\citet{pyciatroyan23} use this observation to motivate their stronger concepts.} Under one-situation neighbors, however, chess is not local improvement solvable for White. Call a move losing if Black can prevent a win after it. Fix a winning strategy, a position $p$ that it reaches, its move $w$ there, and a losing move $m$ at $p$ after which White can still win against some error of Black. Take a strategy that makes $m$ at $p$ and a losing move at a position $q$ that the winning strategy reaches after $w$. Appendix~\ref{app:chess} specifies the winning strategy and the other moves so that no other change at one position improves the strategy. Correcting $m$ alone is not a local improvement, because the corrected strategy reaches $q$. The corrected strategy fails to win against a strategy of Black that leads to $p$, errs after $m$ so that the original strategy wins, and after $w$ leads to $q$ and prevents a win. Correcting the move at $q$ alone changes nothing, since $q$ is not reached. The strategy with the two losing moves is therefore locally unimprovable, and it is not dominant, since it does not win against a strategy of Black that leads to $p$ and plays correctly after $m$.

\subsection{Sudoku}
\label{sec:sudoku}

Sudoku is a single-agent game in which, as in chess, a strategy with two mistakes can have no improving neighbor. Let $C$ be a set of cells and $D$ a set of digits, in the usual puzzle the $81$ squares of a $9 \times 9$ grid and $\{1, \dots, 9\}$. A filling is a function $g : C \to D$. I restrict attention to puzzles with a unique solution, so there is exactly one valid filling $g^{*}$, and the payoff of a filling is $w(g) = \one[g = g^{*}]$. There are no other players and no moves of nature, so there is a single contingency. I consider two games. In static Sudoku the agent submits a filling in one move, so the game has one situation, at which the actions are the fillings. The one-cell neighbors of a filling are the fillings that differ from it at one cell. In dynamic Sudoku the cells are filled in a fixed order, one cell per move, with the digits $D$ as the actions and no information revealed between moves other than the agent's own entries. By perfect recall, a situation is a cell together with the digits already entered. A strategy specifies a digit at every situation, and each strategy leads to one filling, so the two games have the same reduced normal-form game.

\begin{proposition}
\label{prop:sudoku}
(a) Static and dynamic Sudoku are SOSP, and their complexity in the measure of \citet{nagelsaitto} is $1$.

(b) Static Sudoku is not local improvement solvable under one-cell neighbors, and dynamic Sudoku is not local improvement solvable under one-situation neighbors.
\end{proposition}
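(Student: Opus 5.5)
Part (a) follows from the payoff structure, and part (b) from a single filling with two errors that no one-cell change can fix.

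\emph{Part (a).} Start with static Sudoku. There is one situation and one contingency, so (\ref{eq:SOD}) at that situation compares the payoff of $g^{*}$ with the payoff of any other filling. These are $1$ and $0$, so (\ref{eq:SOD}) holds.

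For dynamic Sudoku, let $s^{*}$ enter $g^{*}(c)$ at every situation. Fix a situation $h$ that $s^{*}$ reaches. Then the digits already entered agree with $g^{*}$. Any deviation $a' \neq g^{*}(c)$ at the cell $c$ of $h$ fixes $g(c) \neq g^{*}(c)$ for every continuation. Hence the maximum on the right of (\ref{eq:SOD}) is $0$, while the minimum on the left is at least $0$. This gives SOSP.

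For the complexity, I would apply the construction of \citet{nagelsaitto} as summarized in Appendix~\ref{app:ns}. In dynamic Sudoku, no payoff-relevant information arrives between the agent's moves, so all her situations along a path merge into one situation, whose actions are fillings. The same holds trivially in static Sudoku. At that situation every filling other than $g^{*}$ has best payoff $0$, which is no larger than the worst payoff $1$ of $g^{*}$. All of them are therefore removed, one action survives, and the complexity is $1$.

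\emph{Part (b), static game.} I need a locally unimprovable filling that is not dominant. Take two distinct cells $c_{1}, c_{2}$ and let $g$ agree with $g^{*}$ except that $g(c_{k}) \neq g^{*}(c_{k})$ for $k = 1, 2$. Every one-cell neighbor of $g$ still differs from $g^{*}$ in at least one cell, so it pays $0$, the same as $g$. No neighbor is strictly better, so $g$ is locally unimprovable. Since $g$ pays $0$ while $g^{*}$ pays $1$, local improvement solvability fails. This needs $|C| \ge 2$ and $|D| \ge 2$, which holds for any real puzzle.

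\emph{Part (b), dynamic game.} I would take a strategy $s$ whose induced filling differs from $g^{*}$ at two cells $c_{1}, c_{2}$, and show that no one-situation change of $s$ produces $g^{*}$. A one-situation change alters the action at a single situation.
\begin{itemize}
\item If that situation is off the path of $s$, the filling is unchanged.
\item If it is on the path at a cell other than $c_{1}$ or $c_{2}$, the new path may diverge there, and $s$ then prescribes digits at the later situations.
\item If it is on the path at $c_{1}$ or $c_{2}$, the same divergence can occur.
\end{itemize}
The main obstacle is therefore the off-path specification: $s$ must be chosen so that every deviating path still ends at a wrong filling. I would make $s$ enter a digit different from $g^{*}(c_{2})$ at every situation of $c_{2}$ (assuming $c_{1}$ precedes $c_{2}$ in the order). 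Then every path through a one-situation change other than one at a $c_{2}$-situation yields a filling with $g(c_{2}) \neq g^{*}(c_{2})$. A change at a single $c_{2}$-situation can only help if the digits entered at $c_{1}$ along that path equal $g^{*}(c_{1})$. So I would also make $s$ enter a wrong digit at every situation of $c_{1}$. This gives a wrong $c_{1}$ entry on every path through a changed situation other than a $c_{1}$-situation. The two requirements cover each other's cases: a change at $c_{1}$ leaves $c_{2}$ wrong, and a change at $c_{2}$ leaves $c_{1}$ wrong.

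Hence every one-situation neighbor of $s$ pays $0$. So $s$ is locally unimprovable, it is not payoff-equivalent to $s^{*}$, and dynamic Sudoku is not local improvement solvable under one-situation neighbors.
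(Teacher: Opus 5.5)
Your proposal is correct. Part (a) and the static half of part (b) follow the paper's argument. The maximum after any wrong digit, or any wrong filling, is $0$. The essential frame of dynamic Sudoku is static Sudoku, because no information about the contingency arrives between moves. A filling with two wrong cells is locally unimprovable under one-cell neighbors.

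For dynamic Sudoku you use a different witness from the paper.

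\emph{Your witness.} Your strategy enters a wrong digit at \emph{every} situation of two cells $c_{1}$ and $c_{2}$. The verification is then a counting argument. Every play passes through exactly one situation of each cell. A one-situation change alters the action at a situation of at most one of the two cells. So every neighbor's filling is still wrong at $c_{1}$ or at $c_{2}$. This holds however $s$ is specified at the other situations, and your assumption that $c_{1}$ precedes $c_{2}$ is not needed.

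\emph{The paper's witness.} The paper's strategy errs at a cell $c$ and at a later cell $c'$, but only at the situations where all earlier entries are correct, and it plays correctly everywhere else. Its realized filling therefore has a single wrong cell, at $c$, and the second error lies off the path of play. Correcting $c$ alone sends play into the planned error at $c'$.

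\emph{What each buys.} Your construction is simpler and more robust. It is, however, essentially the static two-error argument transported to the tree, since every play of your strategy already has two errors. The paper's construction exhibits a failure specific to the extensive form. A strategy whose play is one cell away from $g^{*}$ can still be locally unimprovable because of its off-path plan. This is the same mechanism as the strategy $(a', y, y')$ in the two-move game and as the chess argument. It is also the lack of strictness in action dominance that Section~\ref{sec:taxonomy} attributes to dynamic Sudoku.
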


The proof is immediate from the fact that every wrong filling has payoff $0$ and $g^{*}$ has payoff $1$. For (a), the minimum in (\ref{eq:SOD}) after the correct digit at any cell, or after $g^{*}$ in static Sudoku, is at least the maximum after a wrong one, since that maximum is $0$. In dynamic Sudoku no information about the contingency arrives between moves, so its essential frame is static Sudoku. Every $g \neq g^{*}$ is also evidently never better than $g^{*}$. For (b), a filling with two or more wrong cells and all of its one-cell neighbors pay $0$, so it is locally unimprovable and not payoff-equivalent to $g^{*}$. In dynamic Sudoku, take a strategy that enters wrong digits at a cell $c$ and at a later cell $c'$ when all earlier entries are correct, and correct digits everywhere else. Its filling has one wrong cell, correcting $c$ alone leads to the wrong digit at $c'$, and every other single change either leaves a wrong cell or is at a situation that is not reached. So it is locally unimprovable with payoff $0$, while the recommended strategy pays $1$.

Both classifications in (a) record that the participant needs no contingent reasoning about others and no reasoning about her own later moves. They do not record that the participant must find $g^{*}$, and that comparing neighboring fillings does not help her find it. The comparison does not help because a single change of a wrong filling leaves the payoff at $0$ unless every other cell is already correct. The difficulty of finding $g^{*}$ is the primitive complexity of \citet{bdk23}, the number of operations needed to find the best element of a menu, here the fillings.\footnote{The errors that \citet{enkeshubatt23} document in choices without strategic content seem to be of this kind.}

The proposition has two consequences. The first concerns dynamic implementation. Dynamic Sudoku is SOSP, so the obviousness concepts classify it as the simplest kind of extensive-form game, and yet filling the cells one at a time does not seem to help. The comparison at a cell has the same payoff on both sides unless the rest of the strategy completes $g^{*}$, so it carries no information about the solution until then. Obvious strategy-proofness therefore does not seem sufficient for dynamic implementation to help.

The second consequence concerns what the two approaches measure. Static Sudoku is a single decision with one contingency and no later move. It is SOSP, the strongest of the obviousness concepts, because these concepts hold trivially for any single choice with one contingency. It nevertheless seems difficult, for a reason that has nothing to do with contingent reasoning or with planning one's own later moves, the only sources of difficulty the obviousness concepts measure. A concept that classifies static Sudoku as no more difficult than choosing between two objects does not appear to capture all of the difficulty of recognizing the dominant strategy. Local improvement solvability captures part of it. With a single contingency, the comparison of two neighboring strategies has no contingent reasoning to reduce. Local improvement solvability is then the condition that every local maximum of the payoff is a global maximum, as Section~\ref{sec:lis} notes, and part (b) shows that neither version of Sudoku meets it. As in the example of Theorem~\ref{thm:lis}(b), the neighbor relation thus captures a source of simplicity beyond contingent reasoning, one that concerns the search through the strategy set.

\subsection{What the two approaches measure}
\label{sec:measure}

\begin{table}[!htbp]
\centering
\small
\renewcommand{\arraystretch}{1.1}
\setlength{\tabcolsep}{3pt}
\begin{tabular}{lcc}
\hline
 & Worst-best comparison concepts & Local improvement solvable \\
\hline
Pick-a-number game, $\lambda > K$ & not OSP, complexity $K + 1$ & yes, one-step neighbors \\
Chess, White has a winning strategy & OSP for White & no, one-situation neighbors \\
Sudoku, static & SOSP, complexity $1$ & no, one-cell neighbors \\
Sudoku, dynamic & SOSP, complexity $1$ & no, one-situation neighbors \\
\hline
\end{tabular}
\caption{Classification of the three games, Proposition~\ref{prop:additive}, Section~\ref{sec:chess}, and Proposition~\ref{prop:sudoku}.}
\label{tab:games}
\end{table}

Table~\ref{tab:games} collects the classifications of the three games. Both approaches describe ways to simplify the recognition of the dominant strategy, and they classify the three games differently. The worst-best comparison concepts, including the measure of \citet{nagelsaitto}, classify the extensive-form game by the contingent reasoning it requires. The pick-a-number game requires it if payoff levels are compared, and chess and Sudoku do not require it. Local improvement solvability classifies a game by the payoff differences between neighboring strategies, and asks whether their signs lead to the dominant strategy. This question arises even with a single contingency, as in Sudoku, where no contingent reasoning is needed. In the pick-a-number game every difference is $1$. In chess and Sudoku some strategies with two mistakes have no improving neighbor, although they are not dominant. The three games therefore suggest that the need for contingent reasoning is neither necessary nor sufficient for a mechanism to be difficult.

Local improvement solvability under the natural neighbor relation of each game thus accounts for part of what the worst-best comparison concepts miss. The explanation depends on that relation, since under the complete neighbor relation the pick-a-number game and Sudoku are local improvement solvable, being strategy-proof, and so is chess for White, who has a dominant strategy.

\section{Salience of the local comparison}
\label{sec:dynamic}

Section~\ref{sec:classical} gave the participant a path to the dominant strategy, in which each step compares two neighboring strategies and involves one case whose sign she determines from her own preference. The path helps her only if she notices it. A participant who is asked for her whole strategy at once may never compare a strategy with its neighbors. When a mechanism instead asks her at each step to choose between two neighbors, the question itself leads her along the path. Two strategy-proof mechanisms can therefore implement the same outcomes, be local improvement solvable for the same reason, and still differ in how easily the local comparison is noticed. The prediction of this section is that the one in which the comparison is more salient produces more dominant-strategy play.

The prediction compares two mechanisms. I define the relation between them formally and leave the salience of the local comparison informal. \begin{definition}[Implementing the outcomes]
\label{def:implements}
Let $\Ga$ and $\Ga'$, with outcome functions $f$ and $f'$, share players, type sets, and outcome space, and be strategy-proof under recommendations $\si$ and $\si'$. The mechanism $\Ga$ implements the outcomes of $\Ga'$ if there are maps $\phi_{i} : S_{i}' \to S_{i}$, the identity for nature, such that
\begin{equation}
f\big(s_{0}', \phi_{1}(s_{1}'), \dots, \phi_{n}(s_{n}')\big) \;=\; f'(s_{0}', s_{1}', \dots, s_{n}') \qquad \text{for every profile } s' \in S',
\label{eq:implements}
\end{equation}
and $\si_{i} = \phi_{i} \circ \si_{i}'$ for every player.
\end{definition}

In words, every strategy of $\Ga'$ is represented in $\Ga$ by a strategy that produces the same outcome against every profile of the others' strategies carried over by $\phi$, and the recommended strategies correspond. The two mechanisms therefore produce the same outcome at every type profile and every move of nature, $f(s_{0}, \si(\ta)) = f'(s_{0}, \si'(\ta))$, and implement the same choice rule in the sense of \citet{li17}.

The relation need not be symmetric. The sealed-bid second-price auction and the ascending clock auction of Section~\ref{sec:clock} implement each other's outcomes. The map $\phi$ sends a bid $b$ to the strategy that exits at $b$, and a map in the other direction sends each strategy of the ascending clock auction to its exit price. A strategy that stays in at every price below $K$ is sent to $K$. The dynamic serial dictatorship of Section~\ref{sec:dynamicmatching} implements the outcomes of serial dictatorship with list submission, with $\phi$ sending a list to the strategy that takes the top available object on the list. The converse fails, since the dynamic serial dictatorship has strategies whose outcomes no list reproduces. When $\Ga$ implements the outcomes of $\Ga'$, the map $\phi_{i}$ carries the neighbor relation of $\Ga'$ to the image $\phi_{i}(S_{i}')$. A run of the local improvement algorithm in $\Ga'$ is then a run in $\Ga$ along that image under one condition. Every contingency of $\Ga$ must give the participant, on that image, the payoffs of some contingency of $\Ga'$. The condition holds in the applications below, since there the others' play that affects her payoff cannot depend on her own.

Between two mechanisms of which one implements the outcomes of the other, the local comparison is more salient in the one that leads the participant to compare a strategy with its neighbors. A mechanism leads her to the comparison in one of two ways. It can ask her for a choice that contains a comparison between two neighboring strategies, as the ascending clock auction does at each price and the dynamic serial dictatorship at each menu. Or it can resolve her submitted strategy step by step in the order of the neighbor relation, as the passive clock of \citet{bsk22} does. A mechanism that asks for a bid, a vector of marginal bids, or a ranking in one move and resolves it at once does neither, however its rules are described. I use the ordering by salience only where the design makes it clear. The ordering is clear among the sealed-bid second-price auction, the passive clock, and the dynamic clock, and between list submission and a mechanism that asks for one object at a time. Among multi-unit auctions it is clear for the static Vickrey auction, the Ausubel auction, and the multi-account clock of Appendix~\ref{sec:accounts}. Among dynamic mechanisms, a question confined to two neighbors makes the comparison more salient than one that also offers strategies that are not neighbors.

Why is salience not formalized? It is a property of what the participant is shown and asked, not of the extensive-form mechanism. The worst-best comparison concepts take the extensive-form mechanism as their only input, and so does local improvement solvability once the neighbor relation is fixed. A formal measure of salience would need a model of what a participant attends to, and the evidence below does not require one.

The prediction does not deny that obvious strategy-proofness helps, since an OSP mechanism removes the need to reason about contingencies. The local comparison may be a second reason a participant finds the dominant strategy. I read the evidence below as evidence about the local comparison, with obvious strategy-proofness held fixed where the designs allow.

\subsection{The sealed-bid second-price auction and the ascending clock auction}
\label{sec:clock}

The ascending clock auction is defined as follows. There are $n \ge 2$ bidders and one object, and the agent's value is as in Section~\ref{sec:spa}. A price starts at $0$ and rises one grid step at a time through $\{0, \dots, K\}$. At each price $q$, every bidder who is still active chooses to stay in or to exit. A bidder who exits is inactive for the rest of the auction and observes nothing further, and an active bidder observes only the current price.\footnote{In the design of \citet{li17} and in some treatments of \citet{bsk22}, active bidders also see when rivals exit, information that is payoff-irrelevant under private values and is discussed below.} If exactly one bidder stays in at a price $q < K$, she wins the object and pays $q$. If all active bidders exit at the same price, or if two or more bidders are still active when the price reaches $K$, the object is not sold and no one pays. A strategy of the agent specifies, at every price at which she could still be active, whether to stay in or to exit. The bidder who plans to exit at price $b$, staying in at every lower price, obtains the object exactly when every rival exits before $b$. In that case she pays the price at which the last rival exits. She gets nothing when some rival exits at $b$ or later. Her payoff is therefore that of the bid $b$ in the sealed-bid second-price auction of Section~\ref{sec:spa}, including the tie rule, so the two auctions implement each other's outcomes.

Strategy-proofness alone therefore does not distinguish the two auctions. Yet \citet{li17} finds that the ascending clock auction produces more dominant-strategy play than the sealed-bid second-price auction and explains the difference by obvious strategy-proofness. The ascending clock auction is OSP. At each price, the worst outcome of the recommended action, staying in below the value and exiting at or above it, is at least as good as the best outcome of the other action. The sealed-bid second-price auction is not OSP. The ascending clock auction, however, also differs from the sealed-bid second-price auction in a second respect.

The ascending clock auction makes the local comparison salient. Exiting at $q$ and staying in at $q$ to decide again at $q + 1$ give the bidder different outcomes only if the last rival exits at $q$. In that case staying wins at price $q$ and exiting loses. The choice between exiting and staying in at $q$ is the comparison between the bids $q$ and $q + 1$ in Proposition~\ref{prop:spa}, restricted to the contingencies in which they differ. The ascending clock auction asks this comparison at every price in turn, and a bidder who answers each question according to her value follows a run of the local improvement algorithm of Figure~\ref{fig:flow}. The sealed-bid second-price auction asks for a single number and leads the bidder to no comparison.

Since the ascending clock auction is OSP and also makes the local comparison salient, Li's comparison changes both channels at once and cannot separate them. Displaying a clock while keeping the extensive-form mechanism of the sealed-bid second-price auction can separate the two channels.

\citet{bsk22} run such a design. They replicate the design of \citet{li17} and add intermediate treatments. In the passive clock treatment, subjects submit a sealed bid and then watch an ascending clock determine the outcome. In the dynamic clock treatment, subjects bid in an ascending clock auction without seeing when rivals exit. The passive clock has the same extensive-form mechanism as the sealed-bid second-price auction and is not OSP. Obvious strategy-proofness on its own therefore predicts no difference between the passive clock and the sealed-bid second-price auction, and the same holds for every worst-best comparison concept. The passive clock nevertheless makes the comparison between neighboring bids salient, since it resolves the bid price by price.

The passive clock removes about half of the difference in deviations from truthful bidding between the sealed-bid second-price auction and the ascending clock auction in which bidders see their rivals exit. Dynamic bidding, the feature that makes the ascending clock auction OSP, adds a further twelve percent, which is not statistically significant.

One finding in \citet{bsk22} lies outside both channels. Observing rivals' exits during a dynamic clock accounts for almost forty percent of the total effect, although this information is payoff-irrelevant under private values. The effect is significant only when the exits are seen while bidding. I attribute it to learning from the environment, a channel that neither obvious strategy-proofness nor the local improvement model of this paper contains.

This paper's explanation can account for the remaining findings, although Section~\ref{sec:limits} notes that it is not the only explanation that does. Under this explanation, the gain from the sealed-bid second-price auction to the passive clock comes from the local comparison alone. Most of the ascending clock auction's advantage that does not come from observing rivals' exits is already present in the passive clock. The explanation also predicts that the passive clock's effect lags that of the dynamic clock by about one round, which \citet{bsk22} report, as Appendix~\ref{app:further} explains.

Appendix~\ref{app:evidence} discusses further evidence consistent with this paper's explanation. The menu description of \citet{ght23} changes what a bidder in the sealed-bid second-price auction is told without changing what she is asked, and it leaves straightforward behavior unchanged.

The ascending clock auction also shows that the verdict of the criterion depends on the neighbor relation. It is local improvement solvable under one-step neighbors on exit prices, the neighbors its dynamic presentation makes salient, and not under one-situation neighbors, as Appendix~\ref{app:clockhamming} shows.

\subsection{The static Vickrey auction and the Ausubel auction}
\label{sec:ausubel}

The Ausubel auction of \citet{ausubel04} is a dynamic implementation of the multi-unit Vickrey rule of Section~\ref{sec:vickrey}. There are $Q$ identical units and $n$ bidders, and a price rises from $0$ one grid step at a time. At each price $q$ every bidder states a demand, a number of units no larger than the demand she stated at the previous price. A bidder observes only the current price and whether the auction is still open, the case that \citet{ausubel04} calls no bid information. She learns the units she has clinched when the auction ends. When the total demand of the other bidders at price $q$ is $D_{-i}(q)$, the number of units guaranteed to bidder $i$ is $\max\{0, Q - D_{-i}(q)\}$, capped at her own demand. Each unit that becomes guaranteed at price $q$ is clinched and paid for at $q$, and she keeps it even if her demand later falls below the number of units she has clinched. The auction ends at the first price at which total demand is at most $Q$. If total demand still exceeds $Q$ at price $K - 1$, the auction ends at $K$ without further clinching, and the units not clinched remain unsold.

A bidder who reduces her demand from $x$ to $x - 1$ at price $q$ withdraws her $x$-th marginal bid at $q$. A strategy that reduces demand from $\ell$ to $\ell - 1$ at price $a^{\ell}$ for each $\ell$, and never otherwise, has the payoff of the marginal bid vector $a$ in the static Vickrey auction. The Ausubel auction therefore implements the outcomes of the static Vickrey auction, with $\phi$ sending each vector to this strategy. As the definition of implementing the outcomes requires, sincere bidding is weakly dominant, since with no bid information every strategy has the payoff of some vector of marginal bids in the static Vickrey auction.\footnote{When bidders observe the others' bids, sincere bidding is an ex post equilibrium, and that version lies outside the class to which the prediction applies.}

The Ausubel auction makes the local comparison salient much as the ascending clock auction does, although at each price it asks for a number of units rather than a yes or a no. Reducing demand from $x$ to $x - 1$ units at price $q$, instead of holding it, changes the bidder's outcome only if the $x$-th unit would clinch at $q$. The choice between reducing and holding demand at $q$ is the comparison of Proposition~\ref{prop:vickrey} for unit $x$ at bid $q$. The Ausubel auction asks this comparison for each unit at each price, and a bidder who answers each according to the marginal value of the unit follows a run of the local improvement algorithm. The static Vickrey auction asks for a vector of marginal bids in one move, so the prediction is that the Ausubel auction produces more dominant-strategy play.\footnote{\citet{nagelsaitto} show that the Ausubel auction is the simplest implementation of the Vickrey outcome under their measure.} \citet{kagellevin09} compare the static Vickrey auction with two Ausubel auctions, one with and one without public dropout information. Bidding is significantly closer to sincere bidding in the version with dropout information than in either of the other two. In that version, however, sincere bidding is an ex post equilibrium rather than a dominant strategy. The gain resembles the effect of observed exits that Section~\ref{sec:clock} attributes to learning from the environment. A passive Ausubel auction, in which marginal bid vectors are submitted and then resolved price by price, would separate the salience of the local comparison from dynamic bidding. Appendix~\ref{sec:accounts} proposes a further design, a multi-account clock that asks the comparisons of Proposition~\ref{prop:vickrey} one unit at a time. Neither it nor the Ausubel auction is OSP for a bidder who demands two or more units, so comparing the two would test the salience of the local comparison with obvious strategy-proofness held fixed.

\subsection{Serial dictatorship and dynamic matching mechanisms}
\label{sec:dynamicmatching}

The dynamic serial dictatorship is the following extensive-form mechanism. The agents are called one at a time in the priority order, and each agent, when called, observes the set of objects with remaining capacity and picks one of them, which she receives. It implements the outcomes of serial dictatorship with list submission of Section~\ref{sec:matching}. In the random serial dictatorship treatments of \citet{li17}, the dynamic serial dictatorship produces the dominant-strategy outcome in 93 percent of games, against 64 percent for list submission. \citet{li17} explains the difference by obvious strategy-proofness. The dynamic serial dictatorship is OSP, and in fact SOSP in the sense of \citet{pyciatroyan23}, since the object the agent takes is the object she receives, while list submission is not OSP.

The dynamic serial dictatorship also makes the local comparison salient, and list submission does not. Suppose the agent is called and compares taking $a$ now with taking $b$ now, with the rest of her strategy fixed. In doing so she compares the list that has $a$ just before $b$ against the list that has $b$ just before $a$. These are two one-exchange neighbors, and by Proposition~\ref{prop:matching} they lead to different outcomes only where one gives her $a$ and the other gives her $b$. The most common error under list submission, the exchange of the two highest entries, is a single adjacent exchange away from the truthful list. This exchange seems to be the error of a participant who has not compared the two lists. As with the ascending clock auction, Li's comparison changes obvious strategy-proofness and the salience of the local comparison at once. The two explanations are separated only by mechanisms that change one without the other.

\citet{bohakimov23} provide such mechanisms. In a pick-an-object mechanism, every agent is first shown a menu of objects and picks one. A pick is tentative. An agent whose pick is rejected is shown a new menu, a strict subset of her previous one without the rejected pick, and picks again. Each agent observes only her own menus and picks, and when the procedure ends she receives the last object she picked. The recommended strategy picks the most preferred object on each menu. Since a pick can be rejected, the pick-an-object mechanisms discussed here are not OSP.\footnote{\citet{bohakimov23} show that an individually rational rule has a pick-an-object implementation in perfect ex post equilibrium if and only if the rule is strategy-proof and satisfies a condition they call monotonic discoverability.} Pick-an-object implementations exist for rules with no OSP implementation at all, including top trading cycles under cyclic priorities and deferred acceptance under general priorities.

For the pick-an-object serial dictatorship the recommended strategy is dominant, and the mechanism implements the outcomes of serial dictatorship with list submission, as Appendix~\ref{app:pao} shows. In the pick-an-object top trading cycles the menus of other agents can depend on an agent's picks, and I treat that mechanism, like iterative deferred acceptance in Appendix~\ref{app:ida}, as partial evidence.

The local comparison is more salient in a pick-an-object mechanism than under list submission. Each menu asks the agent for one object, and choosing $a$ over $b$ on a menu, with the rest of her strategy fixed, compares the one-exchange neighbors of Proposition~\ref{prop:matching}. The prediction is therefore that the pick-an-object serial dictatorship produces more dominant-strategy play than serial dictatorship with list submission. Obvious strategy-proofness on its own makes no such prediction, since the pick-an-object serial dictatorship is not OSP.

\citet{bohakimov23} find gains that vary in size and in the aspect of behavior they show up in. Under top trading cycles the rate of truthful play rises from 35 percent under list submission to 55 percent under pick-an-object with cyclic priorities, and from 56 to 67 percent with acyclic priorities. Both differences are significant. Under serial dictatorship the rate rises from 72 to 75 percent, which is not significant. But the objects that agents receive are significantly better under pick-an-object than under list submission, measured by the average rank of the assigned object in the agent's true preference. This suggests that the errors that remain under pick-an-object are more often ones that do not change the outcome. Under the obviously strategy-proof dynamic mechanisms, which have both channels, truthful play is 98 percent for serial dictatorship and 77 percent for top trading cycles with acyclic priorities. Yet the assigned objects are no better than under pick-an-object.

A dynamic mechanism asks for one choice at a time. In each dynamic mechanism of this section, the choice at a move contains a comparison between two neighbors under the neighbor relation of Section~\ref{sec:classical}. These mechanisms therefore make the local comparison salient, one comparison at a time. Making the comparison salient should help when the mechanism is local improvement solvable under those neighbors, so that each comparison has a sign that leads toward the dominant strategy. Dynamic play does not appear to be required for the effect of salience, since the passive clock makes the comparison salient without it. Obvious strategy-proofness is a separate channel. It does not appear to be required for the effect, as the passive clock and, in part, the pick-an-object serial dictatorship and iterative deferred acceptance suggest. Nor does obvious strategy-proofness seem sufficient for dynamic implementation to help, as dynamic Sudoku in Section~\ref{sec:sudoku} suggests.

\subsection{What the explanation does not account for}
\label{sec:limits}

Table~\ref{tab:evidence} in Appendix~\ref{app:evidence} collects the comparisons discussed in this section, and several of them fit this paper's explanation only in part or not at all. The passive clock of \citet{bsk22} shows subjects the clock after they have bid. Its effect could therefore come from learning from the feedback of watching their own bid resolved, and not only from the salience of the comparison between neighboring bids.\footnote{\citet{bsk22} describe the clock display as inducing a reasoning process that runs through the bids in ascending order, which is close to this paper's explanation. They also stress that subjects discover dominance by learning from feedback.} The lag of about one round fits both explanations, so the passive clock separates the salience of the local comparison from obvious strategy-proofness without separating salience from learning.

Other findings go against this paper's explanation or lie outside it. Observing rivals' exits in the ascending clock auction, and observing dropouts in the Ausubel auction of \citet{kagellevin09}, move bidding closer to the recommended strategy. Yet neither changes obvious strategy-proofness or the salience of the local comparison. Under the pick-an-object serial dictatorship, truthful play does not rise significantly, and the gain appears only in the objects that agents receive.

The rationing procedure of \citet{sbv} makes the local comparison salient and raises the rate of actions consistent with the dominant strategy. The obviously strategy-proof procedure for the uniform rule that they test asks each participant, step by step, whether to stop or to let her allocation move one more unit. It nevertheless produces the outcome of the uniform rule no more often than the direct mechanism, which the authors attribute to mistakes that accumulate over the several moves the procedure requires.\footnote{A mechanism with pre-play feedback, which is not OSP, performs better.}

Finally, the criterion is silent on the direction of errors. Under one-step neighbors the only locally unimprovable bids in the sealed-bid second-price auction are the truthful ones, so the criterion does not predict that subjects overbid, as \citet{kagel87} find, rather than underbid. This paper's explanation is a claim about which of two mechanisms produces more dominant-strategy play. It seems to fit most of the comparisons in Table~\ref{tab:evidence}, but it is not a model of the errors that remain.

\section{Design recommendations and conclusion}
\label{sec:design}

Strategy-proof mechanisms can be simple in a second way that the worst-best comparison concepts do not measure, since these concepts do not fully use the structure of the strategy set. In some mechanisms, every chain of moves to an improving neighbor ends at a dominant strategy. Local improvement solvability formalizes this property, and strategy-proofness is its weakest version. Many classical strategy-proof mechanisms are local improvement solvable under their natural neighbor relations, in a rich environment for the matching mechanisms, and each of their comparisons reduces to one case. In extensive-form mechanisms and games without duplicate dominant strategies, the property under one-situation neighbors is strict action-proofness, and it neither implies nor is implied by obvious strategy-proofness. Under each game's natural neighbor relation, the criterion accounts for part of what the worst-best comparison concepts miss in chess, Sudoku, and the pick-a-number game. The evidence, above all the passive clock, suggests that the local comparison helps participants when a mechanism makes it salient, even with obvious strategy-proofness held fixed.

Two design recommendations follow. The first is to choose, among mechanisms that implement the same outcomes, one that is local improvement solvable under its natural neighbor relation, the relation that the presentation of its strategy set defines. Without this property the comparison between neighbors can stall before reaching the dominant strategy, as it does in chess and Sudoku. Strong obvious strategy-proofness does not guarantee local improvement solvability under the natural neighbor relation, since static Sudoku is SOSP and is not local improvement solvable under one-cell neighbors.

The second recommendation is to make the local comparison salient, so that participants see the path when a mechanism has one. Two presentations do this, and the evidence for the first seems stronger.

First, where submissions must be simultaneous or sealed, resolve them step by step while the participant observes, as the passive clock of \citet{bsk22} does for the sealed-bid second-price auction. A passive presentation leaves the extensive-form mechanism, and therefore its incentive properties, unchanged.\footnote{A sealed-bid online auction, for example, can display the price rising after the bids are in, and a sealed-bid procurement auction can display it falling. A related static presentation can elicit willingness to pay by the Becker-DeGroot-Marschak procedure with a list that asks whether to buy at each price, rather than asking for a single number.} The evidence for the passive presentation separates the salience of the local comparison from obvious strategy-proofness, although not from learning from feedback.

Second, implement the rule dynamically where the setting permits. The ascending clock auction, the dynamic serial dictatorship, pick-an-object mechanisms, and iterative deferred acceptance each ask the participant for one choice at a time, and each choice contains a comparison between neighboring strategies. \citet{li17}, \citet{bsk22}, \citet{bohakimov20}, and \citet{bohakimov23} find corresponding gains, including for rules that have no obviously strategy-proof implementation.\footnote{Appendix~\ref{sec:accounts} proposes a design that would test the second recommendation in multi-unit auctions with obvious strategy-proofness held fixed.} The rationing procedure of \citet{sbv}, which also asks one comparison at a time and shows no gain in outcomes, is a reminder that a dynamic implementation does not guarantee better outcomes.

The criterion captures only one dimension of simplicity. It takes the neighbor relation as given and does not say which relation a participant actually uses. It records whether every run of local comparisons ends at the dominant strategy, not how hard each comparison is to evaluate, which is what the worst-best comparison concepts measure. And it can hold only in strategy-proof mechanisms. The first and the last of these limits lead to two of the open questions below.

Three questions are open. The first concerns mechanisms that are not strategy-proof, where the criterion cannot hold. In the strategy-proof mechanisms of Section~\ref{sec:classical}, the comparison between two neighbors has a sign that does not depend on a belief about the others. Iterative deferred acceptance already lies outside the class of strategy-proof mechanisms. Its recommended strategy is an equilibrium strategy rather than a dominant one, and yet asking for one choice at a time appears to bring a similar gain there. Do comparisons between neighbors lead a participant to a best response given her belief about the others? Which condition on the neighbor relation and the outcome function makes every run of the local improvement algorithm end at such a best response? And how does that condition relate to the equilibrium concepts under which mechanisms such as iterative deferred acceptance are analyzed?

The second question concerns design, and it stays within the strategy-proof rules. Local improvement solvability under a natural neighbor relation is stronger than strategy-proofness, so a rule implemented by a mechanism that satisfies it must be strategy-proof. Which strategy-proof rules have such an implementation? \citet{yangdworczak26} impose a requirement of this kind on menus. They call an ordered menu searchable if, for every type, every local maximum of utility over the menu is global, up to a condition on ties. They show that searchability restricts multidimensional screening. For a finite menu, the condition on local maxima is local improvement solvability for every type under one-step neighbors on the menu's order. This paper may therefore serve as a micro-foundation for their approach. It argues that comparing neighbors can simplify the search for the best strategy, and Section~\ref{sec:dynamic} suggests when participants make such comparisons. The question remains open for rules with several participants.

The third question concerns a model of salience. A model of what a description or a presentation must do to induce a given neighbor relation, in the spirit of the descriptions of \citet{ght23}, would make the prediction sharper.\footnote{The finding of \citet{bsk22} that observing rivals' exits accounts for a large share of the effect of the ascending clock auction lies outside such a model. Section~\ref{sec:clock} attributes the effect of observed exits to learning from the environment. Combining the comparison of neighbors with a model of learning is a further direction.} What a mechanism asks a participant to compare, and not only what the mechanism is, appears to shape whether she finds the dominant strategy.

\newpage
\appendix
\counterwithin*{table}{section}
\counterwithin*{figure}{section}
\renewcommand{\thetable}{\thesection.\arabic{table}}
\renewcommand{\thefigure}{\thesection.\arabic{figure}}

\section{Proofs}
\label{app:proofs}

\subsection{Proof of Proposition \ref{prop:coarse}}

Part (a). A locally unimprovable strategy $s_{i}$ under $N_{i}'$ is a strategy that no strategy in $N_{i}'(s_{i})$ improves. Since $N_{i}(s_{i}) \subseteq N_{i}'(s_{i})$, no strategy in $N_{i}(s_{i})$ improves it either, so it is locally unimprovable under $N_{i}$. Every locally unimprovable strategy under $N_{i}'$ is therefore a locally unimprovable strategy under $N_{i}$, all of which are payoff-equivalent by local improvement solvability under $N_{i}$. Hence $\Ga$ is local improvement solvable for $\ta_{i}$ under $N_{i}'$.

Part (b). Under the complete neighbor relation, a strategy is locally unimprovable if no strategy improves it. If $\Ga$ is local improvement solvable under the complete neighbor relation, then $\ta_{i}$ has a dominant strategy by Theorem~\ref{thm:lis}(a). Conversely, let $s_{i}^{*}$ be dominant and let $p$ be locally unimprovable. Then $s_{i}^{*}$ gives at least the same utility as $p$ in every contingency. If it gave a higher utility in some contingency, it would improve $p$, and $p$ would not be locally unimprovable. So $p$ is payoff-equivalent to $s_{i}^{*}$, all locally unimprovable strategies are payoff-equivalent, and $\Ga$ is local improvement solvable for $\ta_{i}$ under the complete neighbor relation. $\square$

\subsection{The measure of Nagel and Saitto}
\label{app:ns}

I state the measure of \citet{nagelsaitto} for a single strategic player $i$ in an extensive-form mechanism, the other players and nature being folded into the contingencies. For a situation $h \in H_{i}$ and a strategy $\rho \in S(h)_{i}$, let $Z(\rho, h) = \{u_{i}(\rho, s_{-i}; \ta_{i}) : s_{-i} \in S(h)_{-i}\}$ be the set of payoffs the player can receive when play reaches $h$ and she follows $\rho$ from $h$ on. For an action $a \in A(h)$, let $Z(a, h)$ be the union of $Z(\rho, h)$ over the strategies $\rho \in S(h)_{i}$ with $\rho_{h} = a$, the payoffs that can follow $a$ whatever she does later.

\begin{definition}[Evident reduction]
\label{def:ns}
At a situation $h$, action $a$ is evidently never better than $a'$ if $\max Z(a, h) \le \min Z(a', h)$. The evident reduction at $h$ is the set of actions that survive iterated removal, one action at a time, of an action evidently never better than some other surviving action.
\end{definition}

\begin{definition}[Essential frame and complexity]
\label{def:nscomplexity}
An essential frame merges two consecutive situations of the player whenever the information revealed between them excludes no payoff possible under her dominant continuation. An action at the merged situation specifies her actions at both. The complexity of an extensive-form mechanism is the largest number of actions in an evident reduction of its essential frame, over the players, their types, and their situations.
\end{definition}

The complexity is $1$ for the simplest mechanisms, and in the most complex static mechanisms it equals the number of strategies. \citet{nagelsaitto} define an essential frame directly, as a strategy-proof mechanism in which new relevant information is revealed to the agent between every two of her consecutive situations and whose tree is minimal. They define the complexity of any strategy-proof mechanism through an evident reduction of the player's continuation behaviors, which in an essential frame becomes the comparison of actions in Definition~\ref{def:ns}. They also show that every strategy-proof mechanism has an essential frame with the same outcomes and weakly lower complexity. The two definitions above simplify their construction.

\subsection{Proof of Proposition \ref{prop:vickrey}}

The residual prices are functions of the rivals' bids, weakly increasing in $\ell$, and do not depend on the agent's strategy. By the allocation rule, unit $\ell$ is won if and only if $a^{\ell} > p_{\ell}$. The strategy is weakly decreasing and the residual prices are weakly increasing, so the units the agent wins are units $1$ through $x$ for some $x$, and the payoff is (\ref{eq:unitsum}). Changing $a^{\ell}$ to $a^{\ell} + 1$ changes the term for unit $\ell$ only, and only in contingencies with $p_{\ell} = a^{\ell}$, where it changes the term from $0$ to $\ta^{\ell} - a^{\ell}$. Some profile of rival bids gives $p_{\ell} = a^{\ell}$, for example one in which the $Q$ highest rival bids all equal $a^{\ell}$. Three consequences follow, provided the changed strategy is weakly decreasing. The change from $a^{\ell}$ to $a^{\ell} + 1$ is a local improvement when $a^{\ell} < \ta^{\ell}$. The change from $a^{\ell}$ to $a^{\ell} - 1$ is a local improvement when $a^{\ell} - 1 > \ta^{\ell}$. A change between two bids in $T(\ta^{\ell})$, which occurs only when $\ta^{\ell}$ is a grid point and the bids are $\ta^{\ell}$ and $\ta^{\ell} + 1$, changes no payoff.

Write $\un{t}^{\ell}$ and $\ov{t}^{\ell}$ for the smallest and largest elements of $T(\ta^{\ell})$. Since $\ta^{\ell} > \ta^{\ell + 1}$, every element of $T(\ta^{\ell})$ is at least every element of $T(\ta^{\ell + 1})$. If $\ta^{\ell+1}$ is a grid point, then $\ov{t}^{\ell+1} = \ta^{\ell+1} + 1 \le \un{t}^{\ell}$, because $\un{t}^{\ell}$ is a grid point at or above $\ta^{\ell} > \ta^{\ell+1}$. Otherwise $\ov{t}^{\ell+1}$ is the least grid point above $\ta^{\ell+1}$, which is at most $\un{t}^{\ell}$.

Feasibility. Let $a$ be a weakly decreasing strategy that is not truthful. Suppose first that some unit has $a^{\ell} > \ov{t}^{\ell}$, and take the largest such $\ell$. Then $a^{\ell} - 1 \ge \ov{t}^{\ell} \ge \ta^{\ell}$, and $a^{\ell} - 1 > \ta^{\ell}$, because $a^{\ell} - 1 = \ta^{\ell}$ would put $a^{\ell}$ in $T(\ta^{\ell})$. If $\ell < L$, then $a^{\ell + 1} \le \ov{t}^{\ell + 1} \le \un{t}^{\ell} \le a^{\ell} - 1$. So lowering $a^{\ell}$ by one keeps the strategy weakly decreasing, and the change is a local improvement. Suppose instead that every unit has $a^{\ell} \le \ov{t}^{\ell}$. Then some unit has $a^{\ell} < \un{t}^{\ell}$, so $a^{\ell} < \ta^{\ell}$. Take the smallest such $\ell$. If $\ell > 1$, then $a^{\ell - 1} \ge \un{t}^{\ell - 1} \ge \ov{t}^{\ell} \ge \un{t}^{\ell} \ge a^{\ell} + 1$. So raising $a^{\ell}$ by one keeps the strategy weakly decreasing, and the change is a local improvement. Hence no strategy that is not truthful is locally unimprovable.

Every truthful strategy is locally unimprovable. Each of its neighbors either moves some $a^{\ell}$ to the other element of $T(\ta^{\ell})$, which changes no payoff, or moves $a^{\ell}$ outside $T(\ta^{\ell})$. By the first paragraph, the second kind of move lowers the payoff in the contingencies where $p_{\ell}$ equals the lower of the two bids and changes it nowhere else. All truthful strategies give the same utility in every contingency. Each unit's term depends on $a^{\ell}$ only through whether $a^{\ell} > p_{\ell}$, and the two elements of $T(\ta^{\ell})$, when there are two, differ in this only at $p_{\ell} = \ta^{\ell}$, where the term is $0$ either way. $\square$

\subsection{Proof of Proposition \ref{prop:matching}}

Part (a). Fix a type $\ta$ and a strategy $s_{i} \neq \ta$. Some adjacent pair of $s_{i}$ is ordered differently from $\ta$, since otherwise every adjacent pair of $s_{i}$ would be ordered as in $\ta$ and $s_{i}$ would equal $\ta$. Let $s_{i}'$ be the strategy obtained from $s_{i}$ by exchanging such a pair $o, o'$, where $\ta$ ranks $o$ above $o'$ and $s_{i}'$ lists $o$ first. By the adjacent-swap property applied to $s_{i}'$ and $s_{i}$, for every $s_{-i}$ either $f(s_{i}', s_{-i}) = f(s_{i}, s_{-i})$ or $f(s_{i}', s_{-i}) = o$ and $f(s_{i}, s_{-i}) = o'$. Hence $s_{i}'$ gives at least the same utility as $s_{i}$ in every contingency and a higher utility in every contingency where the pair is decisive. Richness provides such a contingency. So $s_{i}'$ locally improves $s_{i}$, and no strategy other than $\ta$ is locally unimprovable. The strategy $\ta$ is locally unimprovable. Each of its neighbors exchanges a pair into the order opposite to $\ta$ and, by the same argument with the roles exchanged, gives at most the same utility in every contingency. Every run therefore ends at $\ta$, and $\ta$ is dominant by Theorem~\ref{thm:lis}.

Part (b). Fix a contingency $s_{-i}$ and two strategies $s_{i}$ and $s_{i}'$ that differ by an adjacent exchange of $o$ and $o'$, with $s_{i}$ listing $o$ first. Let $x = f(s_{i}, s_{-i})$ and $y = f(s_{i}', s_{-i})$, and suppose $x \neq y$. The type whose true ranking is $s_{i}$ exists by the full domain. Strategy-proofness for that type says that she does at least as well with $s_{i}$ as with $s_{i}'$, so $s_{i}$ ranks $x$ above $y$. Strategy-proofness for the type whose true ranking is $s_{i}'$ says that $s_{i}'$ ranks $y$ above $x$. The two lists order every pair of objects in the same way except the pair $\{o, o'\}$, so $\{x, y\} = \{o, o'\}$, and since $s_{i}$ lists $o$ first, $x = o$ and $y = o'$. The only difference an adjacent exchange can make to the outcome is to exchange the two objects that were exchanged in the list. $\square$

\subsection{Richness in the three matching mechanisms}
\label{app:rich}

This subsection verifies the example of a rich environment given after Corollary~\ref{cor:matching}. Fix a list and an adjacent pair at positions $j$ and $j+1$. In serial dictatorship, the pair is decisive in a contingency in which the agent moves at position $j$ or later and her predecessors exhaust the capacities of exactly the $j - 1$ objects above the pair. In top trading cycles, the pair is decisive in a contingency in which two conditions hold. First, at each of the $j - 1$ objects above the pair, the agents with the highest priority, as many as its capacity, are other agents who rank that object first. Second, the agent has top priority at both objects of the pair. Until its capacity is exhausted, each object above the pair points to one of these agents, who points back to it. The agent points to such an object while one has capacity, so she is in no cycle until all of them are exhausted. Both objects of the pair point to her while she remains, so neither loses capacity before she is assigned, and she then receives the one her list ranks first. In deferred acceptance, the pair is decisive in a contingency in which two conditions hold. First, the $j - 1$ schools above the pair fill their capacities with students who rank them first and have higher priority there than the agent. Second, the agent has top priority at both schools of the pair. From the first round on, each school above the pair is then full with students of higher priority than the agent, so it rejects her. Neither school of the pair rejects her, so she is assigned the one her list ranks first. Suppose the agent's priority is part of the contingency, every profile of priorities and others' strategies is possible, and there are enough other agents to exhaust the capacities of the objects above the pair. Then each of these contingencies exists. $\square$

When the agent knows that some adjacent pair on her list cannot affect her outcome, the two strategies that exchange the pair are payoff-equivalent, and the exchange is not a local improvement. Suppose, for example, that the agent knows that she is second in a serial dictatorship over four objects, one unit each, and her true ranking is $abcd$. The list $acdb$ has no local improvement. Exchanging $d$ and $b$ changes no outcome, and the other two adjacent exchanges lower her payoff. Yet the list gives her $c$ instead of $b$ when the first dictator takes $a$. Richness rules out such a list.

\subsection{Dominance at a situation}
\label{app:dominance}

This subsection proves the equivalence of (\ref{eq:SPdef}) and (\ref{eq:D}) in Definition~\ref{def:dominant}. Condition (\ref{eq:SPdef}) implies (\ref{eq:D}), since (\ref{eq:D}) restricts (\ref{eq:SPdef}) to some strategies and contingencies. For the converse, suppose a strategy reaches a different terminal node from $s_{i}^{*}$ in some contingency. Then the two strategies' action sequences in that contingency first differ at some situation $h \in H_{i}$. Both strategies are therefore in $S(h)_{i}$, the contingency reaches $h$, and the strategies specify different actions at $h$. Hence (\ref{eq:D}) at $h$ gives the inequality in (\ref{eq:SPdef}) for that contingency. In a contingency in which the two strategies reach the same terminal node, they give the same utility. $\square$

\subsection{Proof of Theorem \ref{prop:hamming}}

Two strategies that specify the same action at every situation both of them reach lead to the same play path in every contingency, and are therefore payoff-equivalent.

From strict action dominance to local improvement solvability. Let $s_{i}^{*}$ be strictly action dominant. It is dominant by Proposition~\ref{prop:taxonomy}(a), hence locally unimprovable. Let $p$ be locally unimprovable and suppose $p$ and $s_{i}^{*}$ both reach some $h$ at which they differ. Then $p \in S(h)_{i}$, and by strict action dominance $p[h \leftarrow s^{*}_{h}]$ locally improves $p$, a contradiction. So every locally unimprovable strategy agrees with $s_{i}^{*}$ wherever both reach and is payoff-equivalent to it. A strictly action dominant strategy also rules out duplicates. Suppose a dominant strategy $p$ differed from $s_{i}^{*}$ at a situation $h$ that both reach. Strict action dominance at $h$ with $r = p$ would then make $p[h \leftarrow s^{*}_{h}]$ strictly better than $p$ in some contingency, which contradicts the dominance of $p$. So every dominant strategy agrees with $s_{i}^{*}$ wherever both reach. If two dominant strategies differed at a situation $g$ that both reach, $s_{i}^{*}$ could not reach $g$, since it would agree with both there. It would then differ from both at the situation where it leaves the path to $g$, which all three reach, a contradiction.

From local improvement solvability to strict action dominance. Let $\Ga$ be local improvement solvable under one-situation neighbors with no duplicate dominant strategies for $\ta_{i}$, and let $s_{i}^{*}$ be a dominant strategy, which exists by Lemma~\ref{lem:terminate} and Theorem~\ref{thm:lis}(a). Fix a situation $h$ that $s_{i}^{*}$ reaches, a strategy $r \in S(h)_{i}$, and an action $a' \neq s^{*}_{h}$, and suppose strict action dominance fails for this situation, strategy, and action. Then either $u_{i}(r[h \leftarrow a'], s_{-i}^{0}; \ta_{i}) > u_{i}(r[h \leftarrow s^{*}_{h}], s_{-i}^{0}; \ta_{i})$ for some $s_{-i}^{0}$ that reaches $h$, or $r[h \leftarrow a']$ and $r[h \leftarrow s^{*}_{h}]$ are payoff-equivalent. Let $q = r[h \leftarrow a']$, and call the situations $h'$ of player $i$ with $S(h')_{h} = \{s^{*}_{h}\}$ the recommended subtree. Run the local improvement algorithm from $q$ with changes restricted to situations outside the recommended subtree, and let $p$ be the strategy at which it stops. Along the run the utility weakly increases in every contingency, so $p$ does at least as well as $q$ everywhere.

The run never changes the action at $h$ to $s^{*}_{h}$. Such a change, from a strategy $p'$ to $p'[h \leftarrow s^{*}_{h}]$, alters no outcome if $p'$ does not reach $h$. If $p' \in S(h)_{i}$, it alters play only in contingencies that reach $h$. In those contingencies, $p'[h \leftarrow s^{*}_{h}]$ plays as $r[h \leftarrow s^{*}_{h}]$, because the actions in the recommended subtree are still those of $r$. In the first case the changed strategy is worse than $q$, and hence than $p'$, at $s_{-i}^{0}$. In the second case it gives the utility of $q$ in every contingency that reaches $h$ and that of $p'$ in every other, hence at most the utility of $p'$ everywhere. So the change is not a local improvement. Hence $p_{h} \neq s^{*}_{h}$, no situation in the recommended subtree is reached under $p$, and the changes the run did not consider alter no outcome. So $p$ is locally unimprovable, hence dominant by Theorem~\ref{thm:lis}. Since there are no duplicates, $p$ agrees with $s_{i}^{*}$ at every situation both reach. If $p$ reaches $h$, this agreement fails at $h$. If not, then at the earliest situation $g$ preceding $h$ at which $p$ leaves the path to $h$, both strategies reach $g$ and differ there. Either way there is a contradiction, so $s_{i}^{*}$ is strictly action dominant. $\square$

\subsection{Proof of Proposition \ref{prop:taxonomy}}

Part (a). Fix $s_{i}^{*}$, a situation $h$ it reaches, and $a' \neq s^{*}_{h}$, with minima and maxima ranging as in Definitions~\ref{def:od} and~\ref{def:sod}. $(\ref{eq:SOD}) \Rightarrow (\ref{eq:OD})$ holds because $s_{i}^{*}$ is a continuation of $s^{*}_{h}$ and every deviation $s_{i}'$ at $h$ with $s'_{h} = a'$ is a continuation of $a'$. Both this implication and $(\ref{eq:OD}) \Rightarrow (\ref{eq:D})$ use the fact that a minimum is at most and a maximum at least each element of its set. $(\ref{eq:SOD}) \Rightarrow (\ref{eq:AD})$ holds because $r[h \leftarrow s^{*}_{h}]$ and $r[h \leftarrow a']$ are continuations of $s^{*}_{h}$ and $a'$. Hence, in every contingency that reaches $h$, the first pays at least the minimum in (\ref{eq:SOD}) and the second at most the maximum, while in other contingencies they coincide.

$(\ref{eq:AD}) \Rightarrow (\ref{eq:D})$. Let $s_{i}'$ deviate from $s_{i}^{*}$ at $h$, so that $s'_{h} = a'$. Define $r$ by $r_{h''} = s'_{h''}$ at every situation $h''$ that follows $h$ through $a'$ and $r_{h''} = s^{*}_{h''}$ at every other situation. Then $r \in S(h)_{i}$, since $r$ agrees with $s_{i}^{*}$ at the situations preceding $h$. Fix $s_{-i} \in S(h)_{-i}$. The play of $(r[h \leftarrow s^{*}_{h}], s_{-i})$ passes through $h$, and after $h$ it meets only situations of player $i$ that follow $h$ through $s^{*}_{h}$. At all of these, and at the situations preceding $h$, $r$ agrees with $s_{i}^{*}$, so the play reaches the terminal node of $(s_{i}^{*}, s_{-i})$. In the same way $(r[h \leftarrow a'], s_{-i})$ reaches the terminal node of $(s_{i}', s_{-i})$, since $s_{i}'$ agrees with $s_{i}^{*}$ before $h$. Condition (\ref{eq:AD}) at $h$ therefore gives $u_{i}(s_{i}^{*}, s_{-i}; \ta_{i}) \ge u_{i}(s_{i}', s_{-i}; \ta_{i})$, which is (\ref{eq:D}) at $h$. Since $h$ and $s_{i}'$ were arbitrary, (\ref{eq:D}) holds at every situation that $s_{i}^{*}$ reaches, and by Appendix~\ref{app:dominance}, $s_{i}^{*}$ is dominant.

Parts (b) and (c). Section~\ref{sec:taxonomy} shows that the game of Figure~\ref{fig:twomove} is OSP and not AP and that the sealed-bid second-price auction is AP and not OSP. The other two examples are the games of Figure~\ref{fig:examples}. In each tree the other player $j$ moves first and chooses the contingency, $c_{1}$ or $c_{2}$. The agent $i$ does not observe this choice, so each of her situations contains two nodes, joined by a dashed line, and each terminal node shows her payoff only. Strategies are written as in Section~\ref{sec:hamming}, and a pair such as $(10, 3)$ lists the agent's payoffs at $c_{1}$ and $c_{2}$.

\begin{figure}[t]
\centering
\begin{tikzpicture}[x=1cm, y=1cm, every node/.style={font=\small}, dn/.style={circle, fill, inner sep=1.4pt}]
\node[dn, label=above:{$j$}] (r) at (0,0) {};
\node[dn] (A1) at (-1.6,-1.2) {};
\node[dn] (A2) at (1.6,-1.2) {};
\draw (r) -- node[above left] {$c_{1}$} (A1);
\draw (r) -- node[above right] {$c_{2}$} (A2);
\draw[dashed] (A1) -- node[above] {$i$} (A2);
\node (R1) at (-2.6,-2.4) {$5$};
\node[dn] (B1) at (-1.0,-2.4) {};
\node[dn] (B2) at (1.0,-2.4) {};
\node (R2) at (2.6,-2.4) {$2$};
\draw (A1) -- node[above left] {$R$} (R1);
\draw (A1) -- node[right] {$L$} (B1);
\draw (A2) -- node[left] {$L$} (B2);
\draw (A2) -- node[above right] {$R$} (R2);
\draw[dashed] (B1) -- node[above] {$i$} (B2);
\node (x1) at (-1.5,-3.6) {$10$};
\node (y1) at (-0.5,-3.6) {$0$};
\node (x2) at (0.5,-3.6) {$3$};
\node (y2) at (1.5,-3.6) {$0$};
\draw (B1) -- node[left] {$x$} (x1);
\draw (B1) -- node[right] {$y$} (y1);
\draw (B2) -- node[left] {$x$} (x2);
\draw (B2) -- node[right] {$y$} (y2);
\end{tikzpicture}
\hspace{1.2cm}
\begin{tikzpicture}[x=1cm, y=1cm, every node/.style={font=\small}, dn/.style={circle, fill, inner sep=1.4pt}]
\node[dn, label=above:{$j$}] (r) at (0,0) {};
\node[dn] (A1) at (-1.6,-1.2) {};
\node[dn] (A2) at (1.6,-1.2) {};
\draw (r) -- node[above left] {$c_{1}$} (A1);
\draw (r) -- node[above right] {$c_{2}$} (A2);
\draw[dashed] (A1) -- node[above] {$i$} (A2);
\node (R1) at (-2.6,-2.4) {$9$};
\node[dn] (B1) at (-1.0,-2.4) {};
\node[dn] (B2) at (1.0,-2.4) {};
\node (R2) at (2.6,-2.4) {$0$};
\draw (A1) -- node[above left] {$a'$} (R1);
\draw (A1) -- node[right] {$a^{*}$} (B1);
\draw (A2) -- node[left] {$a^{*}$} (B2);
\draw (A2) -- node[above right] {$a'$} (R2);
\draw[dashed] (B1) -- node[above] {$i$} (B2);
\node (x1) at (-1.5,-3.6) {$10$};
\node (y1) at (-0.5,-3.6) {$9$};
\node (x2) at (0.5,-3.6) {$10$};
\node (y2) at (1.5,-3.6) {$1$};
\draw (B1) -- node[left] {$b^{*}$} (x1);
\draw (B1) -- node[right] {$b$} (y1);
\draw (B2) -- node[left] {$b^{*}$} (x2);
\draw (B2) -- node[right] {$b$} (y2);
\end{tikzpicture}
\caption{The games of Proposition~\ref{prop:taxonomy}(b), left, and Proposition~\ref{prop:taxonomy}(c), right.}
\label{fig:examples}
\end{figure}

The game on the left of Figure~\ref{fig:examples} is SP and neither OSP nor AP. The strategy $(L, x)$ pays $(10, 3)$ and is dominant. It is not obviously dominant, since its worst payoff, $3$, is below the best payoff of $R$, $5$. It is not action dominant at the first situation, since $(L, y)$ pays $(0, 0)$ and $(R, y)$ pays $(5, 2)$.\footnote{Under one-situation neighbors the game is not local improvement solvable either. The strategy $(R, y)$ has no local improvement, since changing $R$ to $L$ gives $(0, 0)$ and changing $y$ changes nothing at an unreached situation.}

The game on the right of Figure~\ref{fig:examples} proves (c). The recommended strategy $(a^{*}, b^{*})$ pays $(10, 10)$, the best payoff in each contingency, so it is obviously dominant. Changing either of its actions weakly lowers the payoff in each contingency, whatever the other action is, so the recommended strategy is action dominant. It is not strongly obviously dominant, since the worst payoff after $a^{*}$, over the agent's continuations and the contingencies, is $1$, below the best payoff after $a'$, $9$. Strong obvious dominance compares the minimum over own continuations and contingencies jointly against the maximum over both jointly. Action dominance fixes the contingency, obvious dominance fixes the continuation, and neither compares a bad continuation in one contingency against the deviation in another. $\square$

\subsection{Proof of Proposition \ref{prop:additive}(c)}

Step one. Let $\phi$ and $\si$ be as in Definition~\ref{def:implements}, so that $\si_{i} = \phi_{i}(K)$ for each player. Let $h$ be the first situation of player $i$ on the play of $(\phi_{i}(a), \phi_{j}(b))$ at which $\phi_{i}(a)$ and $\phi_{i}(a')$ choose different actions, where $a < a'$ without loss of generality. It exists, since the profiles $(\phi_{i}(a), \phi_{j}(b))$ and $(\phi_{i}(a'), \phi_{j}(b))$ give the outcomes $(a, b)$ and $(a', b)$ and differ only in the strategy of $i$, so their plays first differ at a situation of $i$. The two strategies agree before $h$, so the play of $(\phi_{i}(a'), \phi_{j}(b))$ also reaches $h$. Suppose that some $\phi_{j}(b'')$ does not reach $h$. For $c \in \{a, a'\}$, let $\tau_{c}$ agree with $\phi_{i}(c)$ at $h$ and at every situation that follows $h$, and with the other of the two strategies elsewhere. Against $\phi_{j}(b)$ the strategy $\tau_{c}$ plays as $\phi_{i}(c)$. Against $\phi_{j}(b'')$ the plays of both strategies avoid $h$ and every situation that follows it, so $\tau_{c}$ plays as the other strategy. If $b'' = K$, then against $\tau_{a'}$ the strategy $\phi_{j}(b)$ gives $j$ the payoff $b + \lambda a' \ge \lambda a'$, and $\phi_{j}(K)$ gives her $K + \lambda a$. If $b = K$, then against $\tau_{a}$ the strategy $\phi_{j}(b'')$ gives $j$ at least $\lambda a'$, and $\phi_{j}(K)$ gives her $K + \lambda a$. Since $\lambda a' - K - \lambda a \ge \lambda - K > 0$, in either case $\phi_{j}(K)$ is not dominant, which contradicts strategy-proofness under $\si$. If neither $b$ nor $b''$ is $K$, the first case applies to $K$ in place of $b''$ if $\phi_{j}(K)$ does not reach $h$, and the second to $K$ in place of $b$ otherwise. Hence every $\phi_{j}(b')$ reaches $h$.

Step two. Fix $m < K$. Against $\phi_{j}(0)$ the strategies $\phi_{i}(K)$ and $\phi_{i}(m)$ give the outcomes $(K, 0)$ and $(m, 0)$. Let $h_{0}$ be the first situation of player $i$ on the play of $(\phi_{i}(K), \phi_{j}(0))$ at which the two strategies differ. By step one, $\phi_{j}(K)$ reaches $h_{0}$. Over the contingencies that reach $h_{0}$, the minimum payoff of $\phi_{i}(K)$ is at most $K$, at $\phi_{j}(0)$, and the maximum payoff of $\phi_{i}(m)$ is at least $m + \lambda K > K$, at $\phi_{j}(K)$. So (\ref{eq:OD}) fails at $h_{0}$. Every dominant strategy of $i$ pays what $\phi_{i}(K)$ pays in every contingency, $K + \lambda b$ against $\phi_{j}(b)$, and so gives the outcome $(K, b)$, since $\lambda > K$. Replacing $\phi_{i}(K)$ by it preserves Definition~\ref{def:implements}, so the same argument applies to it. Hence no extensive-form game that implements the outcomes of the static game is OSP.

Step three. Now suppose that, for every $a$, the strategy $\phi_{i}(a)$ gives $i$ the number $a$ against every strategy of $j$. Let $h$ be a situation of $i$ that $\phi_{i}(x)$ and $\phi_{i}(y)$ reach, with $x < y$, and suppose that some strategy of $j$ does not reach $h$. Then there are two strategies of $j$, one of them $\phi_{j}(K)$, of which exactly one reaches $h$. On the play of $\phi_{i}(x)$ against the one that reaches $h$, let $g$ be the first situation of $i$ that the play of $\phi_{i}(x)$ against the other does not reach. The strategies $\phi_{i}(x)$ and $\phi_{i}(y)$ agree before $g$, and against every strategy of $j$ they give different numbers. So the play of $\phi_{i}(x)$ against the other strategy reaches a situation $g' \neq g$ of $i$ after the same actions of $i$, and both strategies reach $g$ and $g'$. No play reaches both $g$ and $g'$, as neither precedes the other, so neither strategy of $j$ reaches both, and each reaches one of them. Let $\tau$ agree with $\phi_{i}(y)$ at the one of $g$ and $g'$ that $\phi_{j}(K)$ does not reach and at every situation that follows it, and with $\phi_{i}(x)$ elsewhere. Against the strategy of $j$ that reaches it, $\tau$ gives $i$ the number $y$, so that strategy of $j$ receives at least $\lambda y$. Against $\phi_{j}(K)$, the strategy $\tau$ gives $i$ the number $x$, so $\phi_{j}(K)$ receives at most $K + \lambda x$. Since $\lambda y - K - \lambda x \ge \lambda - K > 0$, this contradicts strategy-proofness. Hence every strategy of $j$ reaches every situation of $i$ that two of the strategies $\phi_{i}(a)$ reach, and nothing about $j$ is revealed between two consecutive such situations. In the construction of \citet{nagelsaitto}, two consecutive situations of player $i$ are merged unless the information revealed between them excludes some payoff possible under her dominant continuation. Nothing is excluded here, whether or not $\phi_{i}(K)$ reaches the two situations. Her first situation is such a situation, as all her strategies reach it, so every play reaches it. Every situation of $i$ before such a situation is again one. So the essential frame merges all these situations into a single situation. Against $\phi_{j}(0)$, every strategy of $i$ pays at most $K$, the payoff of the dominant strategy $\phi_{i}(K)$. So every action at the merged situation has a minimum payoff of at most $K$. The strategies $\phi_{i}(a)$ take $K + 1$ different actions there, since any two of them first differ at a situation that both reach. The action of $\phi_{i}(a)$ can be followed by $a + \lambda K > K$, against $\phi_{j}(K)$. Hence none of these actions is evidently never better than another action, all of them are in the evident reduction, and the complexity is at least $K + 1$. $\square$

\subsection{The chess argument}
\label{app:chess}

The strategy of Section~\ref{sec:chess} makes its other moves so that no other change at one position improves it. The winning strategy is chosen so that, whenever a position it reaches offers a move after which every continuation of White wins against all play of Black, it makes such a move. No position on the path to $p$, $p$ included, then offers such a move. At a position before $p$ the winning strategy would make it, and every continuation would then win at $p$, $m$ included. At $p$ the move would be $w$, and the move at $q$ could not be losing. At the other positions that the winning strategy reaches, the strategy plays as the winning strategy. After every other move that the winning strategy does not make, the strategy plays so that Black can prevent a win, where this is possible. After $m$ it plays a continuation that wins against some error of Black and that no change at one position improves. Such a continuation exists by Lemma~\ref{lem:terminate}, since a run of the local improvement algorithm after $m$, started from a continuation that wins against some error of Black, never lowers a payoff. A change after $m$ is then not an improvement. Nor is a change at any other reached position not on the path to $p$, since the strategy plays as the winning strategy there and wins against all play of Black through it. A change to a new move at $p$ or at a position before $p$ is not an improvement either. Against a strategy of Black that leads to $p$, errs after $m$ so that the original strategy wins, and prevents a win after the new move, the changed strategy does not win. $\square$

\section{Additional evidence}
\label{app:evidence}

This appendix collects the evidence of Section~\ref{sec:dynamic} in one table and discusses three groups of studies that the section mentions only briefly or not at all. It also checks richness in Li's random serial dictatorship design.

\subsection{Summary of the evidence}
\label{sec:evsummary}

Table~\ref{tab:evidence} collects the experimental comparisons between two mechanisms that produce the same outcomes, or between two descriptions of one mechanism. Each row changes a baseline into a treatment. The second column says whether the change makes the mechanism OSP, and the third says whether it makes the local comparison more salient. Rows marked with a dagger involve a recommended strategy that is an equilibrium strategy rather than a dominant strategy, so they are partial evidence. Section~\ref{sec:dynamic} and this appendix discuss the rows in turn.

The rows can be grouped by the channels they change. In the rows where only the salience of the local comparison changes, the gain is large for the passive clock. It is also present for the pick-an-object top trading cycles and iterative deferred acceptance, although these two rows are partial evidence. For the pick-an-object serial dictatorship, the gain appears in the objects received and not in truthful play. Where only obvious strategy-proofness changes, truthful play rises by 23 percentage points for serial dictatorship and by 10 for top trading cycles, although the objects received are no better. The gain is not significant for the dynamic clock. The ascending clock auction and the dynamic serial dictatorship change both channels and show the largest gains. The rationing procedure of \citet{sbv} also changes both channels, however, and shows no gain in outcomes, although actions consistent with the dominant strategy become more frequent. Section~\ref{sec:limits} takes up this result together with two effects that change neither channel.

\begin{table}[!htbp]
\centering
\footnotesize
\renewcommand{\arraystretch}{1.1}
\setlength{\tabcolsep}{4pt}
\begin{tabular}{p{4.0cm}ccp{3.9cm}p{2.9cm}}
\hline
\rule{0pt}{4.5ex}Baseline $\to$ treatment & \shortstack{Becomes\\OSP} & \shortstack{Comparison\\more salient} & Finding & Source \tabularnewline
\hline
\multicolumn{5}{l}{\emph{Auctions}} \\
\raggedright Sealed-bid $\to$ ascending clock & yes & yes & \raggedright more dominant-strategy play & \raggedright \citet{li17} \tabularnewline
\raggedright Sealed-bid $\to$ passive clock & no & yes & \raggedright about half of the gap to the clock closed & \raggedright \citet{bsk22} \tabularnewline
\raggedright Passive clock $\to$ dynamic clock & yes & no & \raggedright a further 12 percent, not significant & \raggedright \citet{bsk22} \tabularnewline
\raggedright Dynamic clock $\to$ dynamic clock showing rivals' exits & no & no & \raggedright almost 40 percent of the total effect & \raggedright \citet{bsk22} \tabularnewline
\raggedright Traditional $\to$ menu description of the sealed-bid second-price auction & no & no & \raggedright no difference & \raggedright \citet{ght23} \tabularnewline
\raggedright Ausubel without $\to$ with dropout information$^{\dagger}$ & no & no & \raggedright closer to sincere bidding than static Vickrey or Ausubel without dropout information & \raggedright \citet{kagellevin09} \tabularnewline
\multicolumn{5}{l}{\emph{Matching}} \\
\raggedright List submission $\to$ dynamic serial dictatorship & yes & yes & \raggedright 64 $\to$ 93 percent & \raggedright \citet{li17} \tabularnewline
\raggedright List submission $\to$ pick-an-object, serial dictatorship & no & yes & \raggedright 72 $\to$ 75 percent, not significant, better objects received & \raggedright \citet{bohakimov23} \tabularnewline
\raggedright Pick-an-object $\to$ OSP, serial dictatorship & yes & no & \raggedright 75 $\to$ 98 percent, objects received no better & \raggedright \citet{bohakimov23} \tabularnewline
\raggedright List submission $\to$ pick-an-object, top trading cycles$^{\dagger}$ & no & yes & \raggedright 35 $\to$ 55 percent (cyclic), 56 $\to$ 67 percent (acyclic) & \raggedright \citet{bohakimov23} \tabularnewline
\raggedright Pick-an-object $\to$ OSP, top trading cycles$^{\dagger}$ & yes & no & \raggedright 67 $\to$ 77 percent (acyclic), objects received no better & \raggedright \citet{bohakimov23} \tabularnewline
\raggedright List submission $\to$ iterative deferred acceptance$^{\dagger}$ & no & yes & \raggedright more straightforward play, more stable outcomes & \raggedright \citet{bohakimov20} \tabularnewline
\multicolumn{5}{l}{\emph{Rationing, uniform rule}} \\
\raggedright Direct $\to$ OSP procedure & yes & yes & \raggedright more actions consistent with the dominant strategy, uniform-rule outcome no more often & \raggedright \citet{sbv} \tabularnewline
\raggedright Direct $\to$ pre-play feedback & no & unclear & \raggedright more peak strategies and uniform-rule outcomes & \raggedright \citet{sbv} \tabularnewline
\hline
\end{tabular}
\caption{Experimental evidence on the two channels, Section~\ref{sec:dynamic}.}
\label{tab:evidence}
\end{table}

\subsection{Descriptions of the sealed-bid second-price auction}
\label{app:ght}

\citet{ght23} change what a bidder in the sealed-bid second-price auction is told without changing what she is asked. Their menu description tells the bidder that the rivals' bids fix a price $p$ and that her own bid determines only whether she buys at that price. This description exposes strategy-proofness without requiring the bidder to reason through the rules. With five bidders, they find no difference in straightforward behavior between the menu description and the traditional description, although the same kind of description raises straightforward behavior in a median-voting mechanism.

This paper's explanation is consistent with the auction result. The menu description asks for a single number, as the traditional description does. It does display a two-element choice, buying at $p$ or not, and this is the decisive case of Proposition~\ref{prop:spa} stated for a given price. But this choice is a comparison between two outcomes at a price the bidder does not know, not a comparison between two of her own bids. The menu description also does not tell her which bid to move to. Neither description makes the comparison between neighboring bids salient, so the two descriptions leave the bidder with the same task and should perform alike. A null result can have other causes, so the finding supports the explanation less strongly than the passive-clock results do. What seems to separate the passive and dynamic clocks from both descriptions is whether the bidder is led to compare a bid with the next bid up, not what she is told about the rules.

\subsection{Further evidence on the ascending clock auction}
\label{app:further}

Further evidence on the ascending clock auction is consistent with this paper's explanation. \citet{harstad00} finds that experience with the ascending clock auction partly carries over to later play in the sealed-bid second-price auction. The carryover is expected if what subjects learn from the ascending clock auction is a way of evaluating bids rather than a property of that auction. \citet{harstad00} reads part of it this way, as considering prices one by one, but most of it as an awareness that overbidding leads to losses. \citet{schotter94} find, in a different game, that presenting a static game as a sequence of moves increases dominant-strategy play. \citet{georganas17} find that changes in payoffs away from the dominant strategy change behavior sharply in the sealed-bid second-price auction. By contrast, \citet{mcgeelevin19} find that similar changes alter bidding little in the ascending clock auction. The difference is expected if subjects in the sealed-bid second-price auction evaluate payoffs directly while subjects in the ascending clock auction answer the question the mechanism asks. The Becker-DeGroot-Marschak procedure is a one-bidder second-price auction against a random price. Its price-list version, a table asking whether to sell or keep at each listed price, asks for one comparison at each price instead of a single number. Under the single-number version of the same procedure, \citet{casonplott14} find that on their first decision five of six subjects misprice a card whose value is known to be two dollars.

\citet{bsk22} also estimate structural models of bidding. In every mechanism that displays a clock, including the static ones, behavior is best explained by a model in which the subject decides at each price whether to continue or stop. Only behavior in the sealed-bid second-price auction without a clock is best explained by a model that evaluates all bids at once.

This paper's explanation also predicts the timing of the passive clock's effect. In the passive clock treatment the clock runs after the bids are submitted, so a subject's first bid comes before she has seen a clock. From the second round on she has watched the clock resolve her own bid price by price. The comparison between exiting at one price and staying to the next is then available when she bids. The passive clock should therefore lag the dynamic clock by about one round and then catch up. Result 3 of \citet{bsk22} reports a lag of approximately one round when dropouts are not shown, and of two rounds when they are. They themselves trace the one-round lag to the clock being shown only after bids are submitted.

\subsection{Iterative deferred acceptance}
\label{app:ida}

Iterative deferred acceptance provides partial evidence of the kind discussed in Section~\ref{sec:dynamicmatching}. By \citet{ashlagigonczarowski18}, no stable matching mechanism has an obviously strategy-proof implementation for general priorities of the colleges. In iterative deferred acceptance, students apply to one college at a time and observe the result of each application before the next. \citet{bohakimov20} find that iterative deferred acceptance produces stable outcomes significantly more often than list submission. The reason is that more subjects follow the straightforward strategy, which applies at each round to the most preferred college among those still open to the student. The effect persists when cutoff information is withheld. I read this finding the way I read the ascending clock auction. Each round asks the student for one application. Suppose she compares applying to $a$ now with applying to $b$ now, and keeps the order of her later applications as it was. She then compares two lists that differ by the adjacent exchange of $a$ and $b$. Iterative deferred acceptance can show her, one round at a time, that the exchange changes her outcome only between $a$ and $b$.

The evidence is partial for two reasons. First, the straightforward strategy in the iterative mechanisms of \citet{bohakimov20} is an equilibrium strategy rather than a dominant strategy, so these mechanisms lie outside the class to which the prediction applies. Second, the iterative designs combine one choice at a time with feedback between rounds. Like the passive clock of \citet{bsk22}, they therefore do not separate salience from learning from feedback. The evidence is nonetheless consistent with this paper's explanation.

\subsection{Richness in Li's random serial dictatorship design}
\label{app:lidesign}

In the random serial dictatorship treatments of \citet{li17}, four agents share four objects, one unit each. Before she submits her list, each subject is told her own priority score, an integer from $1$ to $10$, but not the scores of the others. The others' scores are drawn independently and ties are broken at random, so every position remains possible whatever her score. The environment is therefore rich, and Corollary~\ref{cor:matching} applies to the design of \citet{li17}. Her score makes some positions unlikely, however. An agent who moves at position $k$ always receives one of the first $k$ objects on her list, so a pair at positions $j$ and $j+1$ is decisive for her only when $j \le k$. The agent who moves last receives the one remaining object whatever her list, and no pair is decisive for her. For a subject with a high score, the pairs low on her list are therefore decisive only in unlikely contingencies. The exchange of the two highest entries, by contrast, can be decisive whenever she does not move last. A list whose first two entries are the two objects she likes best, in the wrong order, therefore has a local improvement, namely the exchange that restores them. Exchanging the two objects she likes best is the most common error that \citet{li17} reports under serial dictatorship with list submission.

\section{Further discussion}
\label{app:discussion}

This appendix collects discussion that the main text summarizes. It covers the roles of the two requirements in Definitions~\ref{def:improves} and~\ref{def:lis}, the place of other conditions in Table~\ref{tab:four}, the ascending clock auction under one-situation neighbors, a proposed multi-account clock, and the pick-an-object serial dictatorship.

\subsection{The two requirements in the definition}
\label{app:remarks}

The requirement that a local improvement be strictly better in some contingency and the requirement that all locally unimprovable strategies be payoff-equivalent have different roles. The first requirement makes every run terminate, as Lemma~\ref{lem:terminate} shows. Without it, a run could move back and forth between two payoff-equivalent strategies. The second requirement makes the locally unimprovable strategies dominant. Termination alone does not. Consider a line on which moving one step to the right is a local improvement at every strategy below $a$. Suppose also that moving one step to the left is a local improvement at every strategy above $a + 1$. Every run ends at $a$ or at $a + 1$. If $a$ pays more in some contingencies and $a + 1$ in others, neither is dominant. The mechanism is then not local improvement solvable under these one-step neighbors, because its two locally unimprovable strategies are not payoff-equivalent. Payoff-equivalent locally unimprovable strategies, on the other hand, are allowed, and they occur in the applications. In the sealed-bid second-price auction with a value on the bid grid, the bid equal to the value and the bid one step above it give the same utility in every contingency. Both bids are locally unimprovable, and both are dominant. In an extensive-form mechanism, two strategies that specify the same action at every situation that both of them reach are payoff-equivalent. The definition therefore allows locally unimprovable strategies that differ only at situations they do not reach, without any further convention.

\subsection{Other conditions on a step}
\label{app:other}

Most other worst-best comparison concepts lie between the four conditions of Table~\ref{tab:four}. To place them, fix a situation $h$ that $s_{i}^{*}$ reaches and an action $a' \neq s^{*}_{h}$. As in Section~\ref{sec:extdef}, a continuation after an action at $h$ is the player's actions at the situations that follow the action. Dominance and obvious dominance compare every continuation after $a'$ with the continuation of $s_{i}^{*}$ after $s^{*}_{h}$, while action dominance and strong obvious dominance require the comparison for every continuation after $s^{*}_{h}$ as well. For action dominance this holds because, by perfect recall, no situation follows both actions, so the common strategy $r$ in (\ref{eq:AD}) does not link the two continuations. Two families of concepts fill the left column of Table~\ref{tab:four}. \citet{zhanglevin21} partition the contingencies and compare the worst case against the best case within each element of the partition. The finest partition gives dominance and the coarsest gives obvious dominance. \citet{arribillagamassoneme24} partition the players, hold fixed the strategies of the players in one's own group, and compare the worst case against the best case over the other players' strategies. The concepts of \citet{pyciatroyan23} fill the bottom row, from strong obvious dominance, where the player plans no future move, to obvious dominance, where she plans all of them.

The conditions of \citet{catoninixue} compare two actions at $h$ and treat the player's own continuation in a third way. Their baseline condition fixes a partition of $S(h)_{-i}$. The recommended action dominates $a'$ if every continuation after $a'$ is matched by some continuation after $s^{*}_{h}$. In each element of the partition, the worst payoff of the match must be at least the best payoff of the continuation it matches. With the singleton partition the baseline condition is their weak dominance between actions, and their obvious dominance between actions uses the trivial partition. Because their condition asks only that some match exist, each of the two concepts is implied at $h$ by both conditions in its row of Table~\ref{tab:four}. The continuation of $s_{i}^{*}$ supplies the match for dominance and obvious dominance, and any continuation supplies it for action dominance and strong obvious dominance.

When the two concepts are required at every situation that a strategy reaches, the freedom to choose the match adds nothing to dominance and obvious dominance. \citet{catoninixue} show that a strategy is dominant if and only if every action it prescribes at a situation it reaches is weakly dominant, and the same holds for obvious dominance. Since a finer partition gives a weaker condition, their family lies in the left column, between strategy-proofness and OSP.\footnote{The scenario-by-scenario dominance of \citet{catoninixue} lets the match change from one element of the partition to another, and their wishful dominance is the singleton case of scenario-by-scenario dominance. They also show that, required at every situation that a strategy reaches, these conditions are equivalent to the baseline ones and so stay in the left column.} Action-proofness implies weak dominance between actions at every situation that the recommended strategy reaches, and SOSP implies obvious dominance between actions there, but by Proposition~\ref{prop:taxonomy}(b) neither converse holds.

Their local dominance lies outside the table, since it treats two parts of $S(h)_{-i}$ differently. Where $s^{*}_{h}$ ends the player's play, the worst payoff of $s^{*}_{h}$ must be at least the best payoff of $a'$. Elsewhere every continuation after $a'$ must be repeated stage by stage after $s^{*}_{h}$ with the same outcome. By their results, local strategy-proofness implies strategy-proofness and coincides with OSP in static mechanisms. In dynamic games, they note, local dominance neither implies nor is implied by obvious or strong obvious dominance. Local strategy-proofness is also independent of action-proofness. The sealed-bid second-price auction is AP and not OSP, so it is not locally strategy-proof. The English auction with alternating bids, which they argue is locally strategy-proof, is not AP. Consider a type whose value lies above the price she would pay if she bid at her first turn and the rival then left. An action dominant strategy is dominant by Proposition~\ref{prop:taxonomy}(a), so it bids at that turn. Bidding there is worse than leaving at once if she keeps bidding above her value and the rival leaves only after the price passes it. No strategy of such a type is therefore action dominant.

\subsection{The ascending clock auction under one-situation neighbors}
\label{app:clockhamming}

Under one-situation neighbors the ascending clock auction is not local improvement solvable. As in the two-move game of Figure~\ref{fig:twomove}, some strategy can be improved only by changing two actions at once. A strategy specifies an action at every price at which the bidder could still be active, including prices she reaches only if she does not exit when she planned to. Take a value strictly between $0$ and $K - 1$ that is not a grid point. Consider a strategy that exits at a price $b$ below the recommended exit price $a^{*}$ but, if still in, would exit only at a price $b'$ above $a^{*}$. This strategy is locally unimprovable. Staying at $b$ alone moves the exit price to $b'$, which is not an improvement, since she then wins at a price above her value whenever the last rival exits at a price in $[a^{*}, b')$. Exiting earlier is never better, and a change at a price above $b$ alone has no effect, since the strategy has exited by then. An improvement must change two actions at once, staying at $b$ and exiting at a price above $b$ and no higher than $a^{*}$. Under one-step neighbors on exit prices, by contrast, the exit price $b$ has the neighbor $b + 1$, and Proposition~\ref{prop:spa} applies.

\subsection{A proposed design: the multi-account clock}
\label{sec:accounts}

The Ausubel auction asks a bidder at each price for a number of units. A demand stated at a price answers in one number the comparisons of Proposition~\ref{prop:vickrey} for her last unit and, if she gives it up, for the units before it. A design for future work would ask these comparisons one at a time, as the ascending clock auction does for a single unit.\footnote{This subsection is preliminary. It may be incorporated into a new version of the paper with the results of an experiment that tests this design, which I plan to run in the near future.}

The multi-account clock is defined as follows. Each bidder has one account for each unit she may receive, $L$ in all, ordered as her marginal values are, from the first unit to the last. All accounts are open when the auction starts. The price rises one grid step at a time, as in the Ausubel auction. At each price the bidder is asked, starting with her last open account, whether she still wants that account's unit at the current price. An answer of no closes the account and moves the question to the account before it, and an answer of yes ends the questions at that price. An account can therefore be closed only after every account after it has been closed, and the bidder's demand at each price is the number of her open accounts. Clinching, payments, and the end of the auction are as in the Ausubel auction, and a bidder observes only the current price and whether the auction is still open.

The multi-account clock and the Ausubel auction implement each other's outcomes. The answers at each price determine the demand there, and every sequence of demands of at most $L$ units that the Ausubel auction allows can be produced by such answers. Maps in both directions therefore send each strategy of either auction to a strategy of the other with the same demand at every price, and sincere bidding to sincere bidding. Sincere bidding, which keeps account $\ell$ open exactly while the price is below $\ta^{\ell}$, is weakly dominant in the multi-account clock, as it is in the Ausubel auction with no bid information. Since a bidder observes only the price and whether the auction is still open, a strategy in either auction amounts to a vector of exit prices, one for each unit. A unit she keeps at every price below $K$ has exit price $K$. The payoff of the strategy is that of the same vector of marginal bids in the static Vickrey auction. Under one-step neighbors on these vectors, both auctions are therefore local improvement solvable by Proposition~\ref{prop:vickrey}, for the same reason.

Neither auction is OSP for a bidder who demands two or more units. Reducing her demand by one unit, or closing an account other than the first, at a price below that unit's marginal value keeps her other units in play. The best outcome of that deviation, winning those units at a low price if the rivals' demand falls to zero, exceeds the worst outcome of the recommended action, under which she clinches nothing more. Obvious strategy-proofness on its own therefore predicts no difference between the two auctions.

The two auctions differ in the salience of the local comparison. Each question of the multi-account clock asks the bidder to compare keeping one unit for one more step with giving it up now, the rest of her strategy held fixed. The comparison involves one case, the contingencies in which the unit would clinch at the current price, and the unit's marginal value determines its sign. The Ausubel auction asks for the same comparisons through a single number, so a bidder who states a demand must separate them herself. Since demands that differ by two units are not neighbors, a question of the Ausubel auction is not confined to a choice between two neighboring strategies. The prediction for this design is therefore that the multi-account clock produces more dominant-strategy play than the Ausubel auction, which in turn produces more than the static Vickrey auction.

The multi-account clock is close to the way the Ausubel auction has been run in the laboratory.\footnote{In the Ausubel auction of \citet{engelmanngrimm09}, a bidder drops one unit by clicking one button and both units by clicking another. In the instructions used by \citet{kagellevin09}, each keystroke drops the bidder's lowest-valued remaining unit.} The laboratory interfaces let a bidder drop units with buttons or keystrokes instead of stating her demand as a number. With two units they differ from the multi-account clock mainly in not asking an explicit question for each open account. As far as I know, no experiment has asked a question for each open account or compared a unit-by-unit interface with one in which the bidder states her demand as a number. A test of the prediction with obvious strategy-proofness held fixed would compare the multi-account clock, the Ausubel auction with stated demands, and the static Vickrey auction. Every bidder would demand two or more units, and no one would see rivals' dropouts, which also excludes the information whose effect Section~\ref{sec:clock} attributes to learning from the environment.

\subsection{The pick-an-object serial dictatorship}
\label{app:pao}

For the pick-an-object serial dictatorship the recommended strategy is dominant. \citet{bohakimov23} state only that it is an equilibrium strategy there. They show, however, that every non-bossy rule with an OSP implementation, serial dictatorship among them, has a pick-an-object implementation in weakly dominant strategies. All agents first pick from the menu of all objects. The agents are then processed in the priority order. Each is assigned her current pick, and every agent with lower priority who picked the same object picks again from the objects that remain. The objects that remain when an agent is processed depend only on the agents with higher priority, who never observe her picks. Picking the most preferred object on each menu therefore gives her the best remaining object whatever the others do. The pick-an-object serial dictatorship therefore implements the outcomes of serial dictatorship with list submission.

\newpage
\bibliographystyle{aea}
\bibliography{ref_arxiv}

\end{document}